\documentclass[floatfix,reprint,,superscriptaddress,amsmath,amssymb,aps,prx,longbibliography,nofootinbib]{revtex4-2}
\usepackage[english]{babel} % Load babel for language support

\usepackage[margin=1in]{geometry} % full-width
\usepackage{qcircuit}
% AMS Packages
\usepackage{amsmath}
\usepackage{amsthm}
\usepackage{amssymb}
\usepackage{mdframed}
\usepackage{amsbsy}
\usepackage{mathrsfs}
\usepackage{braket}
% Basic packages
\usepackage[utf8]{inputenc}
\usepackage{graphicx}
\usepackage{algorithm}
\usepackage{algpseudocode}
\usepackage{esvect}
\usepackage{dsfont}
\usepackage{cancel}

\usepackage{lineno}

\definecolor{black}{rgb}{0.86, 0.08, 0.24}

\newcommand{\footnoteI}{\footnote{\color{black}{Note that we use a different normalization convention for the Choi-Jamio\l{}kowski matrix from the authors in Ref.~\cite{nechita2018almostallquantumchannels}, giving us different normalization factors in the upper and lower bounds.}}}

\usepackage[colorlinks=false, linkcolor=black, citecolor=black, urlcolor=blue]{hyperref}
\usepackage[capitalize]{cleveref}
\usepackage[all]{hypcap}

\usepackage[usenames,dvipsnames]{xcolor}

\hypersetup{
    colorlinks=True,
    linkcolor=blue,
    citecolor=cyan,
    pdfborder={0 0 0},   % No borders around links
    pdfauthor={Author One, Author Two, Author Three},
    pdftitle={A simple article template},
    pdfsubject={A simple article template},
    pdfkeywords={article, template, simple},
    pdfproducer={LaTeX},
    pdfcreator={pdflatex}
}

\newcommand{\subsectionA}[1]{}

\usepackage[normalem]{ulem}
\usepackage{bm}

\newcommand{\be}{\begin{equation}}
\newcommand{\ee}{\end{equation}}

\newcommand{\Tr}{{\rm Tr\,}}

\newcommand{\e}{e}
\newcommand{\rd}{\mathrm{d}}

\newcommand{\normp}[2]{\norm{#1}_{#2}}
\newcommand{\CL}{\mathcal{L}}
\newcommand{\vrho}{\bm{ \rho}}

\newcommand{\tr}[1]{\operatorname{tr}\lb#1\rb}
\def\lb{\left(}
\def\rb{\right)}

\newcommand{\vertiii}[1]{{\left\vert\kern-0.25ex\left\vert\kern-0.25ex\left\vert #1 
    \right\vert\kern-0.25ex\right\vert\kern-0.25ex\right\vert}}

\usepackage{enumitem}

\newcommand{\cL}{\mathcal{L}}

\newtheorem{thrm}{Theorem}
\crefname{thrm}{Theorem}{Theorems}
\Crefname{thrm}{Theorem}{Theorems}

\newtheorem{cor}{Corollary}
\newtheorem{lem}{Lemma}

\newtheorem{dfn}{Definition}

\usepackage{textgreek}

\newcommand{\newreptheorem}[2]{%
  \newenvironment{rep#1}[1]{%
   \def\rep@title{#2 \ref*{##1}}%
   \begin{rep@theorem}}%
   {\end{rep@theorem}}}

\newreptheorem{lemma}{Lemma}

\newcounter{mycounter}
\setcounter{mycounter}{0}

\usepackage{physics}
\DeclareMathOperator\erfc{erfc}

\setcounter{theoremS}{3}

\begin{document}
\title{Efficient Quantum Gibbs Sampling with Local Circuits}

\author{Dominik Hahn}
\email{dominik.hahn@ox.ac.uk}
\affiliation{Rudolf Peierls Centre 
for Theoretical Physics, 
University of Oxford, Oxford OX1 3PU, United Kingdom}
\affiliation{IBM Quantum, IBM Thomas J. Watson Research Center, Yorktown Heights, NY 10598, USA}
\author{Ryan Sweke}
\affiliation{IBM Quantum, IBM Research – Almaden, San Jose CA, 95120, USA}
\affiliation{African Institute for Mathematical Sciences (AIMS), South Africa}
\affiliation{Department of Mathematical Sciences, Stellenbosch University, Stellenbosch 7600, South Africa}
\affiliation{National Institute for Theoretical and Computational Sciences (NITheCS), South Africa}
\author{Abhinav Deshpande}
\affiliation{IBM Quantum, IBM Research – Almaden, San Jose CA, 95120, USA}
\author{Oles Shtanko}
\email{oles.shtanko@ibm.com}
\affiliation{IBM Quantum, IBM Thomas J. Watson Research Center, Yorktown Heights, NY 10598, USA}

\begin{abstract}
The problem of simulating the thermal behavior of quantum systems remains a central open challenge 
in quantum computing. 
Unlike
well-established quantum algorithms for unitary dynamics, \emph{provably efficient} algorithms for preparing thermal states---crucial for probing equilibrium behavior---became available only recently  
with breakthrough algorithms based on the simulation of well-designed dissipative processes, a quantum-analogue to Markov chain Monte Carlo (MCMC) algorithms. We show a way to implement these algorithms avoiding expensive block encoding and relying only on dense local circuits, akin to Hamiltonian simulation. Specifically, our method leverages spatial truncation and Trotterization of exact quasilocal dissipative processes. We rigorously prove that the approximations we use have little effect on rapid mixing at high temperatures and allow convergence to the thermal state with small bounded error. Moreover, we accompany our analytical results with numerical simulations that show that this method, unlike previously thought, is within the reach of current generation of quantum hardware. These results provide the first provably efficient quantum thermalization protocol implementable on near-term quantum devices, offering a concrete path toward practical simulation of equilibrium quantum phenomena.
\end{abstract}
\maketitle

\vspace{-0.6cm}

The exponential complexity of quantum many-body systems has long posed a formidable barrier to our understanding of condensed matter, chemistry, and high-energy physics. While classical methods have achieved notable success in simulating some quantum systems, they often break down in strongly correlated regimes---precisely where the most intriguing quantum phenomena emerge, such as high-temperature superconductivity or quantum phase transitions~\cite{vojta2003quantum}. Quantum computers, originally envisioned by Feynman as simulators for these complex systems~\cite{feynman1982simulating,Nielsen2010}, are now becoming a viable platform for tackling such challenges as advances in quantum hardware accelerate rapidly \cite{kim2023evidence}.

Among the central goals in this domain is the preparation of thermal, or Gibbs, states---quantum analogues of classical equilibrium distributions that encode all information about a system at finite temperature~\cite{alhambra2023quantum}. These states are  crucial for exploring quantum statistical mechanics~\cite{gogolin2016equilibration} and serve as important computational primitives in quantum machine learning, inference, and optimization~\cite{brandao_quantum_sdp_solvers2019,kieferova2016tomography}.

Despite their importance, the efficient preparation of Gibbs states remains a central open problem in quantum computing. Classically, Markov Chain Monte Carlo (MCMC) techniques, such as the Metropolis sampling algorithm~\cite{metropolis1953equation}, 
have long been established as universal and efficient algorithms for simulating classical thermal states in many relevant cases~\cite{levin2017markov}. However, while quantum algorithms for
quantum thermal state preparation---including
phase estimation techniques~\cite{poulin2009sampling}, thermal bath emulation \cite{shtanko2021preparing}, quantum imaginary time evolution~\cite{motta2020determining},  variational methods~\cite{consiglio2023variational,deshpande2024dynamicparameterizedquantumcircuits, Ilin_2025} and
quantum Metropolis algorithms~\cite{temme2011quantum,yung2012quantum}---have 
shown substantial progress, each comes with tradeoffs in terms of scalability, provable guarantees, or compatibility with near-term devices.

A particularly promising direction involves the use of open-system dynamics inspired by 
classical MCMC, wherein a quantum system evolves under dissipative evolution that converges to a desired thermal state~\cite{davies1974markovian,rall2023thermal}. Recent breakthroughs \cite{chen2023quantum,chen2023efficient} have led to developments 
of quasilocal dissipative processes that converge to the Gibbs state with provable 
guarantees on mixing time, at least in certain regimes~\cite{rouzé2024optimalquantumalgorithmgibbs,vsmid2025polynomial,tong2024fast}. Using the direct analogy with classical MCMC, we will refer to this approach below as quantum Markov Chain Monte Carlo (or qMCMC). While earlier schemes required a continuous family of jump operators in the corresponding Lindblad equation, it has recently been shown that qMCMC can be implemented with as few as one jump operator~\cite{ding2024efficientquantumgibbssamplers,gilyen2024quantum}. A major challenge that remains is the intrinsic quasilocality of these dissipative processes. As a result, current implementations~\cite{chen2023efficient,ding2024efficientquantumgibbssamplers} cannot exploit techniques from local Hamiltonian simulation and must instead rely on block-encoding methods, which are difficult to realize on noisy quantum hardware~(but see Ref.~\cite{chen2024randomizedmethodsimulatinglindblad,brunner2025lindbladengineeringquantumgibbs,hahn2025provablyefficientquantumthermal} for alternative approaches).

In this work, we show how to address this remaining challenge. Specifically, we demonstrate that applying spatial truncation and Trotterization to qMCMC yields thermalization protocols that can be implemented directly with only \textit{local} quantum circuits. Importantly, we provide rigorous theoretical bounds showing that our proposed protocol preserves logarithmic mixing time for high temperatures. Additionally, in order to facilitate the execution of this algorithm on near-term quantum devices, we show that variational compilation can be used to compile the local processes into circuits which can be executed on currently available quantum hardware, in a way that allows one to trade off accuracy with circuit depth. As such, our method opens a new path toward thermal state preparation that is both analytically grounded and experimentally feasible on near-term quantum devices.
 
More specifically, our implementation scheme consists of three main steps. The first addresses one of the central challenges in existing qMCMC algorithms: the quasilocal nature of the dissipation. Our approach is to provide truncated versions of these dissipative processes, resulting in a strictly local Lindbladian generator amenable to circuit-based quantum simulation. 
This process can be understood as constructing a Lindbladian satisfying a \emph{local detailed-balance} condition: each jump operator satisfies detailed-balance condition with respect to a Hamiltonian truncated to the system subset. 

We show that jump operators computed on different, overlapping Hamiltonian patches collectively generate a Gibbs state of the full (untruncated) Hamiltonian.
To establish this, we extended the rapid mixing analysis from~\citet{rouze2024efficientthermalizationuniversalquantum}, originally applied to the scheme in~\citet{chen2023efficient}, to the more recent construction proposed in~\citet{ding2024efficientquantumgibbssamplers}. This extension enables a more efficient Gibbs sampler using only a few jump operators per qubit. At sufficiently high temperatures, we show that the output of the truncated dynamics converges to the true Gibbs distribution when the truncation radius scales at most logarithmically with system size. In contrast, accurate estimation of local observables requires only a truncation radius that is independent of system size. These results, formalized in \cref{thrm:trunc,thrm:mixingtime,thrm:trunc_local} below, provide the first provably efficient construction of local dissipative preparation of Gibbs states.

Next, having obtained a truncated Lindbladian with provable thermalization guarantees, we employ a randomized Trotterization strategy~\cite{chen2024randomizedmethodsimulatinglindblad} and demonstrate that its continuous-time evolution can be faithfully approximated by a circuit composed of small local quantum channels.
This construction ensures that the convergence properties of the continuous generator are retained, while the Trotter error scales quadratically with the step size.

Finally, to bridge the gap between these finite-size channels and operations available on realistic gate-based quantum devices, we 
employ a variational compilation framework.
In particular, we give compelling numerical evidence that variational compilation allows one to compile these local channels into circuits with a single qubit reset operation, and a fixed desired number of two-qubit gates, respecting hardware connectivity. Additionally, we show that this framework allows one to trade off circuit depth with accuracy, which facilitates more accurate simulations in step with yearly improvements in gate fidelity.

The proposed protocol provides a practical alternative to the amplitude‑encoding and controlled‑unitary frameworks introduced in earlier works \cite{chen2023efficient,ding2024efficientquantumgibbssamplers}. In particular, it is more readily implementable on near-term and early fault‑tolerant quantum devices, which often face stringent constraints on circuit depth and qubit connectivity.

 We validate the full protocol through numerical simulations on one- and two-dimensional spin systems. The results confirm rapid convergence to thermal states with low energy density and accurate local observables, even at modest truncation radii that include only few near neighbors. Moreover, the method exhibits some resilience to noise, underscoring its applicability to near-term quantum devices.

The remainder of the paper is organized as follows. 
In \cref{sec:background}, we introduce the theoretical framework underlying quantum thermalization via Lindbladian dynamics and review the construction of quantum Gibbs samplers based on the detailed‐balance condition. 
\Cref{subsec:Truncation} presents our main results on the truncation of quasi‐local jump operators, including formal error bounds and implications for local observables. 
In \cref{subsec:Randomizedcompilation}, we describe the randomized Trotterization scheme and derive its associated error estimates. 
\Cref{subsec:Compilation} details our variational compilation strategy for mapping the resulting local gadgets onto shallow quantum circuits. 
Finally, \cref{sec:Numerics,sec:hardware_sim} report numerical experiments on spin chains: first in an idealized setting, then incorporating realistic hardware constraints.

\section{\label{sec:background}Theoretical Background}

This section introduces the necessary background for understanding how to prepare a Gibbs state using the qMCMC method based on Lindblad time evolution. A comprehensive list of all definitions  and notations for the entire paper is available in \cref{sec:notations}.

We consider a finite $D$-dimensional square lattice $\Lambda$ consisting of $n$ sites such that each site $a \in \Lambda$ is associated with a qubit. Let the total Hilbert space of the lattice be $\mathsf H_\Lambda$, the corresponding algebra of operators over this space be denoted by  $\mathsf M_\Lambda$, and the set of Hermitian operators denoted by $\mathsf S_\Lambda$. Then, given $H \in \mathsf S_\Lambda$ and a non-negative inverse temperature ${\beta \geq 0}$, our goal is to prepare the Gibbs state defined as \begin{align}\label{Eq:Gibbs state}
    \rho_\beta:=\frac{e^{-\beta H}}{\tr e^{-\beta H}}.
\end{align}
In the remainder of this paper, we focus on local Hamiltonians of the form 
\begin{equation}\label{eq:ham}
H = \sum_{X\subset\Lambda} H_X,
\end{equation}
where each term $H_X$ is strictly supported on a subset $X\subseteq B_{a_\mu}(r_\mu)\subset\Lambda$ involving a constant number of qubits inside the ball $B_{a_\mu}(r_\mu)$ of constant radius $r_\mu$ around qubit $a_\mu\in\Lambda$.

In the 1970s, Davies proposed that, in the weak-coupling limit, the dynamics of a system coupled to a large thermal bath at inverse temperature $\beta$ is described by a Markovian master equation that has the Gibbs state for its Hamiltonian $H$ as its unique fixed point~\cite{davies1974markovian,davies1976markovian}. The Davies process, when expressed in the Hamiltonian's eigenbasis, typically induces transitions between pairs of eigenstates, resulting in nonlocal dynamics that often involve all qubits in the system.

Recent works~\cite{chen2023quantum,chen2023efficient,ding2024efficientquantumgibbssamplers} have shown that the Davies process arises as a special case within a broader family of thermalizing Markovian dynamics that respects the detailed balance condition (see Eq.~\eqref{eq:KMS}). In particular, the foundational work of \citet{chen2023efficient} demonstrated that it is possible to construct a process that converges exactly to the Gibbs state even when the transitions are modified to have finite energy resolution, rather than mapping directly from eigenstate to eigenstate. Building on this, the work by 
\citet{rouze2024efficientthermalizationuniversalquantum} showed that at sufficiently high temperatures, such algorithms exhibit rapid mixing, achieving convergence to the Gibbs state in logarithmic time in the number of qubits. In this work, we adopt a more recent construction proposed by \citet{ding2024efficientquantumgibbssamplers}, which, in contrast to earlier approaches, requires only a finite number of jump operators per qubit. 

Specifically we consider dynamics that, acting on a density matrix $\rho$, 
can be described by a Gorini–Kossakowski–Sudarshan–Lindblad equation
\begin{align}\label{eq:LindbladLin}
    \frac{d}{dt}\rho = \mathcal{L}^\beta(\rho) = \sum_{a \in \Lambda} \sum_\alpha \mathcal{L}^\beta_{a,\alpha}(\rho),
\end{align}
where each index $\alpha$ takes discrete values. Each local Lindbladian term\footnote{While terminology may vary across the literature, we use the term \textit{Lindblad operators} or \textit{Lindbladians} to refer to the full operators $\mathcal{L}^\beta$ and $\mathcal{L}^\beta_{a,\alpha}$, including both their dissipative and coherent components. This contrasts with definitions that reserve the term only for the dissipative part.} $\mathcal{L}^\beta_{a,\alpha}$ takes the form
\begin{align}\label{eq:lindbladian_main}
\mathcal{L}^\beta_{a,\alpha}(\rho) = 
&-i[G^\beta_{a,\alpha}, \rho] + \nonumber\\
&L_{a,\alpha}^{\beta} \rho L^{\beta\dagger}_{a,\alpha} - \frac{1}{2}\{L_{a,\alpha}^{\beta\dagger} L^\beta_{a,\alpha}, \rho\},
\end{align}
where $G^\beta_{a,\alpha} \in \mathsf S_\Lambda$ are Hermitian operators that we define later, and the jump operators $L^\beta_{a,\alpha} \in \mathsf M_\Lambda$ are given by
\begin{align}\label{eq:jump}
    L^\beta_{a,\alpha} = \int_{-\infty}^{\infty} \mathrm{d}t \, f(t) e^{iHt} A^{a,\alpha} e^{-iHt}.
\end{align}

Here, we choose $A^{a,\alpha}$ to be single-qubit Hermitian operators supported on the qubit at lattice site ${a \in \Lambda}$, as per Eq.~\eqref{jump_operator_choice}. Also,
$f(t)$ is a filter function given by
\begin{align}\label{eq:envelope}
    f(t) = \frac{1}{2\pi} \int_{-\infty}^\infty \mathrm{d}\nu\, q(\nu) e^{-\beta \nu/4} e^{i t \nu},
\end{align}
where the envelope function $q(\nu)$ is symmetric, satisfying $q(-\nu) = q^*(\nu)$. It can be chosen such that the resulting filter function effectively suppresses contributions to the integral in Eq.~\eqref{eq:jump} at large times $t$. For local Hamiltonians $H$ that satisfy a Lieb-Robinson bound, this implies that the jump operators $L^\beta_{a,\alpha}$ are quasilocal; in other words, they are predominantly supported on qubits near site $a$.

To define the coherent term, consider the spectral decomposition of the Hamiltonian $H = \sum_i \lambda_i P_i$, where $\lambda_i\in\mathbb R$ are eigenvalues and $P_i\in\mathsf S_\Lambda$ are the corresponding spectral projectors. The coherent term $G^\beta_{a,\alpha}\in\mathsf S_\Lambda$ is explicitly given by
\begin{align}\label{eq:coherentLin}
    G^\beta_{a,\alpha} = -\frac{i}{2}  \sum_{\nu \in \Omega(H)} \tanh\left(-\frac{\beta \nu}{4}\right) (L_{a,\alpha}^{\beta\dagger} L^\beta_{a,\alpha})_\nu,
\end{align}
where $\Omega(H) = \{\lambda_i - \lambda_j \mid i,j\}$ denotes the set of Bohr frequencies of $H$, and for any operator $A\in\mathsf M_\Lambda$, the component $A_\nu\in\mathsf M_\Lambda$ is defined as
\begin{align}\label{eq:proj}
    A_\nu := \sum_{i,j:\lambda_i - \lambda_j = \nu} P_i A P_j,
\end{align}
where the sum is taken over all pairs of energy levels separated by frequency $\nu$.\footnote{Note that \citet{ding2024efficientquantumgibbssamplers} introduced an additional regularization for large frequencies $\nu$ to ensure a well-defined time representation in the thermodynamic limit. We avoid the necessity of regularization by choosing either rapidly decaying filter functions or restricting to small subsystems, as becomes clear below.}

It can be demonstrated~\cite{ding2024efficientquantumgibbssamplers} that the chosen filter function and coherent term $G^\beta_{a,\alpha}$ guarantee that the Lindbladian $\mathcal{L}^\beta$ satisfies the Kubo-Martin-Schwinger (KMS) detailed balance condition
\begin{align}\label{eq:KMS}
\mathcal{L}^{\beta \dagger}(\cdot ) = \rho_\beta^{-\frac{1}{2}}\, \mathcal{L}^\beta\Bigl(\rho_\beta^{\frac{1}{2}}\, \cdot \, \rho_\beta^{\frac{1}{2}}\Bigr)\, \rho_\beta^{-\frac{1}{2}},
\end{align}
where the adjoint superoperator $\mathcal{L}^{\beta \dagger}$ is defined with respect to the inner product
$
\langle A, B \rangle = 2^{-n}\operatorname{Tr}\!\left(A^\dagger B\right)
$
via the relation
$
\langle A, \mathcal{L}^\beta(B) \rangle = \langle \mathcal{L}^{\beta \dagger}(A), B \rangle.$ The KMS condition, expressed in Eq.~\eqref{eq:KMS}, can be viewed as an extension of the detailed balance condition known from classical Markov processes \cite{levin2017markov} to the quantum setting~\cite{chen2023efficient}. Directly from Eq.~\eqref{eq:KMS}, it follows that the Gibbs state $\rho_\beta$ is a steady state of the Lindbladian~\cite{chen2023quantum}, i.e.,
\begin{align}
    \mathcal{L}^\beta(\rho_\beta)=0.
\end{align}

For concreteness, unless otherwise stated, we choose the envelope function $q(\nu)$ in Eq.~\eqref{eq:envelope} to be a Gaussian,
\begin{align}\label{eq:Gauss}
    q(\nu)=\exp\left(-\frac{(\beta\nu)^2}{8}\right).
\end{align}
With this specific choice of $q(\nu)$, it is possible to give an explicit expression for $f(t)$ and $G^\beta_{a,\alpha}$ as integrals in the time domain. With the Gaussian choice in Eq.~\eqref{eq:Gauss}, $f(t)$ is given by (see \cref{sec:time})
\begin{align}\label{eq:timejump}
    f(t)=\sqrt{\frac{2}{\pi \beta^2}}\exp\left(\frac{(\beta-4 i t)^2}{8\beta^2}\right)
\end{align}
and
\begin{align}\label{eq:coherenttime}
    G^\beta_{a,\alpha}=&
		\int_{-\infty}^{\infty}\!\rd tg_1(t)\e^{-iH t}\times \\
        \left(\int_{-\infty}^{\infty} \right. & \left.\rd t'g_2(t')\e^{iH t'}A^{{a,\alpha}\dagger}\e^{-2iH t'} A^{a,\alpha}\e^{iH t'} \right)\e^{iH t}
\end{align}
with
\begin{align}\label{eq:g1time}
     g_1(t)=\left[-\frac{1}{\pi \beta \cosh(\frac{2\pi t}{\beta})}\right]*_t\left[\frac{\sqrt{2}}{\beta}e^{\frac{1}{4}-\frac{4t^2}{\beta^2}}\sin\left(\frac{2 t}{\beta}\right)\right],
\end{align}
where $f(t)*_tg(t):=\int_{-\infty}^\infty \rd s f(s)g(t-s)$ denotes the convolution, and 
 \begin{align}
\label{eq:g2time}
 g_2(t) =\frac{2\sqrt{2}}{\beta} \exp\biggl(\frac{(\beta-4 it)^2}{4 \beta^2}\biggl).
\end{align}
Below, we consider operators in the form of single-qubit Pauli operators, specifically 
\be\label{jump_operator_choice}
A^{a,1}=X_a,\quad  A^{a,2}=Y_a, \quad A^{a,3}=Z_a.
\ee
This specific choice of jump operators is motivated by theoretical convenience: in the limit $\beta \to 0$, the resulting Lindbladian describes depolarizing noise (see \cref{subsec:beta0}). This simplification streamlines the ensuing theoretical analysis for deriving the error and performance‐time bounds. Nevertheless, our framework can, in principle, accommodate a much broader class of initial operators—so long as they remain sufficiently generic (for example, non-commuting with the Hamiltonian). Also, both the compilation strategy described below and our numerical simulations methods permit arbitrary operator choices. By selecting these operators strategically, one can optimize algorithmic runtime or minimize circuit depth, tailoring them to a particular Hamiltonian or compilation scheme. A comprehensive exploration of these optimizations, however, lies beyond the scope of the present work.

The performance of quantum Gibbs samplers is characterized by their mixing time, defined as follows (e.g., see Proposition E.4 in \cite{chen2023efficient}):

\begin{dfn}\label{dfn:mixing_time}
For a given Lindbladian~$\mathcal{L}$, the mixing time is the smallest time $t_\mathrm{mix}$ satisfying
\begin{align}\label{eq:mixing_time_def}
    \forall \rho,\rho'\in \mathsf D_\Lambda:\quad \|e^{\mathcal{L}t_{\mathrm{mix}}}(\rho-\rho')\|_1\leq \frac{1}{2}\|\rho-\rho'\|_1.
\end{align}
\end{dfn}

The mixing time provides a natural way to define the runtime of the algorithm. Let $\rho$ be the initial state and $\rho'$ the fixed point of the channel $\mathcal{L}$. It follows by the definition of mixing time that the time $t$ required for the system to reach $\rho'$ within a trace distance of $\epsilon > 0$ is bounded by $t \leq t_{\mathrm{mix}} \log_2(1/\epsilon)$. Therefore, estimating the mixing time is a central objective in our analysis.

Previously, it was demonstrated~\cite{rouze2024efficientthermalizationuniversalquantum} that the quantum Gibbs sampler introduced in Ref.~\cite{chen2023efficient} exhibits a mixing time scaling linearly with the number of qubits $n$ for temperatures below a threshold $\beta < \beta^*$, where $\beta^*$ does not depend on system size. This result was obtained by establishing bounds on the Lindbladian's spectral gap, which, due to the KMS detailed balance condition in Eq.~\eqref{eq:KMS}, provides an upper bound for the mixing time.
This result was improved in Ref.~\cite{rouzé2024optimalquantumalgorithmgibbs}, which showed a logarithmic mixing time for high enough temperature, by bounding an oscillator norm.
Using analogous techniques from~\citet{rouzé2024optimalquantumalgorithmgibbs}, we demonstrate that the Lindbladian described by~\citet{ding2024efficientquantumgibbssamplers}, namely Eq.~\eqref{eq:LindbladLin}, also achieves convergence to a Gibbs state with mixing time scaling logarithmically with the system size, and we extend this result to truncated dynamics that will be introduced later. 

While we do not cover this in our work, one can also consider using similar methods that have shown that quantum Gibbs samplers effectively prepare thermal states in the Fermi-Hubbard model at arbitrary temperatures and weak interactions~\cite{vsmid2025polynomial,tong2024fast}. 

\section{\label{sec:Trotterization} Analytical Results}

In this section, we outline the steps required for a hardware-efficient simulation of the Lindbladian described by Eq.~\eqref{eq:LindbladLin}. The implementation procedure consists of three main stages: (a) truncating the jump operators and the coherent term, (b) applying Trotterization to the resulting truncated Lindbladian, and (c) compiling the elementary quantum channels that implement Trotterized evolution for each Lindbladian component. 

For the truncation step, we present formal results that establish rigorous error bounds on the truncation error. Specifically, we demonstrate that for sufficiently high temperatures, which are independent of the system size, the truncation error in evaluating both local observables and the trace distance can be controlled by increasing the truncation radius, as outlined in Corollaries~\ref{cor:trunc} and~\ref{col:loc_obs} below. In particular, Gibbs sampling generally requires the truncation radius to scale logarithmically with the number of qubits, whereas estimating the expectation value of local observables requires only a constant radius.

Subsequently, we introduce a simulation protocol for the truncated Lindbladian based on the randomized compilation approach for open system dynamics~\cite{chen2024randomizedmethodsimulatinglindblad}. This protocol allows us to approximate the truncated dynamics using a circuit consisting of local quantum channels, each of which implements the short time evolution of a truncated jump operator. 
The randomized approach reduces the depth of the circuit and makes it independent of the number of jump operators, at the cost of increased sampling overhead. The discussion of compilation of the circuit is deferred to \cref{subsec:Compilation}.

\subsection{Truncation of quasilocal Lindbladians}\label{subsec:Truncation}
\begin{figure*}[t!]
	\centering
	\includegraphics[width=0.95\textwidth]{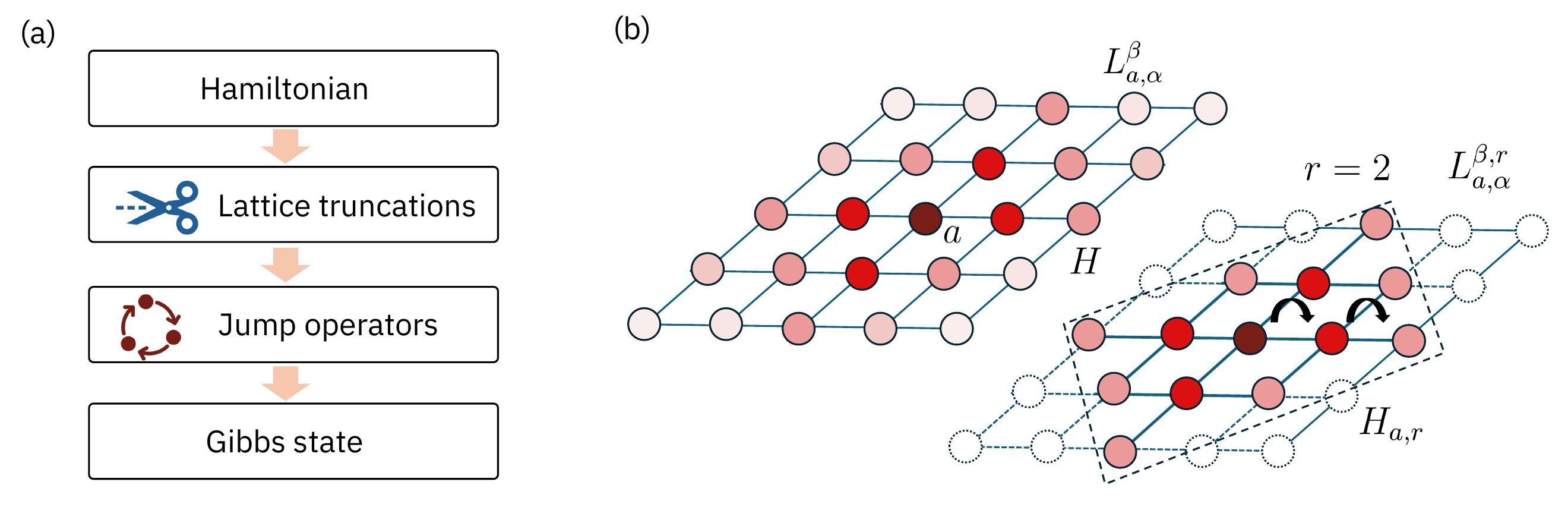}
    \caption{\textbf{Sketch of truncation-based dissipative Gibbs state preparation.}  
(a) Starting from the target Hamiltonian \(H\), we construct a family of local, truncated-lattice Hamiltonians \(H_{a,r}\). From these truncated models, we derive the corresponding local Lindblad jump operators, which define global dissipative evolution that drives the system toward the desired Gibbs state.
(b) Truncation schematic. \emph{Left:} Jump operator derived from the full-system dynamics on \(\Lambda\), chosen to be a square lattice; only a portion of the lattice is shown. The action of the jump operator decays exponentially with the graph (Manhattan) distance from the central qubit \(a \in \Lambda\). \emph{Right:} Jump operator constructed by restricting the dynamics to qubits within a graph radius \(r\) (here, \(r = 2\)), yielding strictly local support.
 }
	\label{fig:truncation}
\end{figure*}
In this section, we analyze the error arising from replacing quasilocal operators with their locally truncated versions. Consider a Hamiltonian introduced in Eq.~\eqref{eq:ham}.
Let us introduce a fixed truncation radius $r \geq 0$ and 
define the locally truncated Hamiltonian $H_{a,r}$ as the sum of all local terms whose supports are contained within the ball $B_a(r)$ of radius $r$ centered around qubit $a$,
\begin{align}\label{eq: trunc_ham}
H_{a,r} := \sum_{X \subseteq B_a(r)} H_X.
\end{align}
Using this construction, we define the truncated Lindbladian as
\be\label{eq:trunc_lindbl}
\mathcal{L}^{\beta,r}=\sum_{a\in\Lambda}\sum_\alpha\mathcal{L}_{a,\alpha}^{\beta,r}
\ee
where
\begin{align}\label{eq:TruncLindblad}
\mathcal L^{\beta,r}_{a,\alpha}(\rho) = &-i[G^{\beta,r}_{a,\alpha},\rho]+\nonumber\\
&L^{\beta,r}_{a,\alpha} \rho L^{\beta,r\dagger}_{a,\alpha}-\frac{1}{2}\{L^{\beta,r\dagger}_{a,\alpha} L^{\beta,r}_{a,\alpha},\rho\},
\end{align}
expressed through the truncated coherent terms and jump operators
\begin{align}\label{eq:trunc_LG}
    & L_{a,\alpha}^{\beta,r}=\sum_{\nu \in \Omega(H_{a,r})}  q(\nu) (A^{a,\alpha})_{\nu[a,r]}, \nonumber\\
    & G_{a,\alpha}^{\beta,r} = -\frac{i}{2}  \sum_{\nu \in \Omega(H_{a,r})} \tanh\left(-\frac{\beta \nu}{4}\right) (L^{\beta,r\dagger}_{a,\alpha} L^{\beta,r}_{a,\alpha})_{\nu[a,r]}.
\end{align}
Here, the projector $A_{\nu[a,r]}$ is defined similarly to Eq.~\eqref{eq:proj}, but using the projectors and eigenstates of the truncated Hamiltonian $H_{a,r}$. This truncation procedure is summarized in Fig.~\ref{fig:truncation}.
Consequently, the resulting truncated jump operators have compact support on $B_a(r)$.

The truncation introduces a controllable error in the resulting fixed point of the Lindbladian evolution, as established in the following theorem.

\begin{thrm}[Truncation error] \label{thrm:trunc}
The truncated Lindbladian $\mathcal{L}^{\beta,r}$ in Eq.~\eqref{eq:trunc_lindbl}, 
with the choice of complete set of generators in Eq.~\eqref{jump_operator_choice},
has a unique fixed point $\rho_{\beta,r}\in \mathsf D_\Lambda$, i.e., 
$$
\mathcal{L}^{\beta,r}(\rho_{\beta,r}) = 0.
$$
Moreover, there exist constants $\beta^*>0$ and $J>0$ such that, for any $\beta<\beta^*$
\begin{equation} \label{eq:distance_gibbs}
\|\rho_{\beta,r}- \rho_\beta\|_1 = O\left((\beta J)^{\frac{r}{2}} n \log n \right).
\end{equation}
\end{thrm}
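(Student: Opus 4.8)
The plan is to control the difference $\|\rho_{\beta,r}-\rho_\beta\|_1$ by interpolating between the two fixed points along a path of Lindbladians, and to use the rapid-mixing property of the truncated dynamics (established in \cref{thrm:mixingtime}) to convert a bound on the generator difference into a bound on the fixed-point difference. Concretely, since $\mathcal{L}^{\beta,r}(\rho_{\beta,r})=0$ and $\mathcal{L}^{\beta}(\rho_\beta)=0$, one writes
\begin{align}
\rho_{\beta,r}-\rho_\beta = -\int_0^\infty \frac{d}{dt}\!\left[e^{\mathcal{L}^{\beta,r}t}(\rho_\beta)\right]dt = -\int_0^\infty e^{\mathcal{L}^{\beta,r}t}\bigl(\mathcal{L}^{\beta,r}(\rho_\beta)\bigr)\,dt,
\end{align}
and then uses $\mathcal{L}^{\beta,r}(\rho_\beta) = \mathcal{L}^{\beta,r}(\rho_\beta)-\mathcal{L}^{\beta}(\rho_\beta) = (\mathcal{L}^{\beta,r}-\mathcal{L}^{\beta})(\rho_\beta)$. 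Splitting the time integral at $t_{\mathrm{mix}}$ and iterating the contraction $\|e^{\mathcal{L}^{\beta,r}t_{\mathrm{mix}}}(\sigma)\|_1\le \tfrac12\|\sigma\|_1$ for traceless $\sigma$ gives $\|\rho_{\beta,r}-\rho_\beta\|_1 = O\bigl(t_{\mathrm{mix}}\,\|(\mathcal{L}^{\beta,r}-\mathcal{L}^{\beta})(\rho_\beta)\|_1\bigr)$; since $t_{\mathrm{mix}}=O(\log n)$ at high temperature, it suffices to show $\|(\mathcal{L}^{\beta,r}-\mathcal{L}^{\beta})(\rho_\beta)\|_1 = O\bigl((\beta J)^{r/2} n\bigr)$.

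The second step is to bound this generator difference term by term, $\|(\mathcal{L}^{\beta,r}-\mathcal{L}^{\beta})(\rho_\beta)\|_1 \le \sum_{a,\alpha}\|(\mathcal{L}^{\beta,r}_{a,\alpha}-\mathcal{L}^{\beta}_{a,\alpha})(\rho_\beta)\|_1$, which is a sum of $O(n)$ terms, each of which I expect to be $O((\beta J)^{r/2})$. For a single site, the difference is controlled by $\|L^{\beta,r}_{a,\alpha}-L^{\beta}_{a,\alpha}\|$ (and the analogous coherent-term difference), using $\|XY\|_1\le\|X\|\,\|Y\|_1$, $\|\rho_\beta\|_1=1$, and the fact that $L^\beta_{a,\alpha}$, $L^{\beta,r}_{a,\alpha}$ have operator norm $O(1)$ (bounded by $\|A^{a,\alpha}\|\int|f(t)|\,dt$). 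To estimate $\|L^{\beta,r}_{a,\alpha}-L^{\beta}_{a,\alpha}\|$, I would write it as $\int dt\, f(t)\bigl(e^{iH_{a,r}t}A^{a,\alpha}e^{-iH_{a,r}t} - e^{iHt}A^{a,\alpha}e^{-iHt}\bigr)$ and split into $|t|\le T$ and $|t|>T$: for $|t|>T$ the rapid Gaussian decay of $f$ from Eq.~\eqref{eq:timejump} makes the tail exponentially small in $T$; for $|t|\le T$ one invokes a Lieb–Robinson bound comparing Heisenberg evolution under $H$ versus $H_{a,r}$, which differ only by terms supported outside $B_a(r)$, giving an error $\lesssim |t|\,e^{v|t|-r}$ (the difference of the two Hamiltonians only acts at distance $>r$, so it enters the evolution of a single-site operator only after the Lieb–Robinson cone reaches the boundary). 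Optimizing $T$ against $r$ — balancing $e^{-\mathrm{const}\cdot T^2/\beta^2}$ against $e^{vT-r}$, taking $T\sim r$ — yields a bound of the form $(\beta J)^{r/2}$ for an appropriate constant $J$ absorbing the Lieb–Robinson velocity and the filter width. The coherent term $G^{\beta,r}_{a,\alpha}$ requires a parallel argument using its time-domain representation Eq.~\eqref{eq:coherenttime}–\eqref{eq:g2time}, where the kernels $g_1,g_2$ again decay fast enough to truncate the time integrals, and the Lieb–Robinson estimate applies to each $e^{\pm iH_{a,r}t}$ factor.

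Existence and uniqueness of $\rho_{\beta,r}$ should follow from the same considerations that guarantee primitivity of the untruncated sampler: at $\beta=0$ the Lindbladian is exactly depolarizing noise (per \cref{subsec:beta0}) with the maximally mixed state as unique full-rank fixed point, and at small $\beta$ a perturbative/continuity argument (or directly the detailed-balance structure together with the fact that the jump operators $X_a,Y_a,Z_a$ generate the full algebra) shows the truncated semigroup remains primitive, hence has a unique fixed point; alternatively uniqueness is a corollary of the nonzero spectral gap implied by rapid mixing in \cref{thrm:mixingtime}. The extra factor $\log n$ in Eq.~\eqref{eq:distance_gibbs} is exactly the $t_{\mathrm{mix}}=O(\log n)$ factor from the first step, and the factor $n$ is the number of sites in the generator-difference sum. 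I expect the main obstacle to be the Lieb–Robinson step: one must be careful that the relevant comparison is between two Hamiltonians that agree inside $B_a(r)$, so the leading error is genuinely delayed until the light cone exits the ball, and one must confirm that the high-temperature constraint $\beta<\beta^*$ is precisely what is needed to make the combined bound (after the $\log n$-fold iteration of the contraction, which could in principle amplify constants) still close with a clean geometric factor $(\beta J)^{r/2}$ rather than something that degrades with $n$. A secondary technical point is handling the coherent term's double time integral and the frequency-dependent $\tanh$ weighting uniformly in the system size, which is why the footnote about avoiding regularization via rapidly decaying filters matters here.
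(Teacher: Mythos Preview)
Your approach matches the paper's: the integral identity you write is essentially a re-derivation of Lemma~II.1 of Ref.~\cite{chen2023quantum} (which the paper cites directly as Lemma~\ref{fact:mixing_time_to_fixedpoint}), giving $\|\rho_{\beta,r}-\rho_\beta\|_1\le 4\,t_{\mathrm{mix}}(\mathcal{L}^{\beta,r})\,\|\mathcal{L}^\beta-\mathcal{L}^{\beta,r}\|_{1\to1}$; then $t_{\mathrm{mix}}=O(\log n)$ from \cref{thrm:mixingtime} and a site-by-site Lieb--Robinson/time-splitting bound on the jump-operator and coherent-term differences give the factor $n(\beta J)^{r/2}$. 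The paper additionally telescopes $\mathcal{L}^\beta-\mathcal{L}^{\beta,r}=\sum_{r'\ge r}(\mathcal{L}^{\beta,r'+1}-\mathcal{L}^{\beta,r'})$, but only because those per-shell estimates are reused in the oscillator-norm rapid-mixing proof; for this theorem alone your direct comparison is equivalent. Uniqueness is argued, as you suggest, from irreducibility of the generator set $\{X_a,Y_a,Z_a\}$.

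One point to tighten: the Lieb--Robinson form you invoke, $|t|\,e^{v|t|-r}$, together with the choice ``$T\sim r$'', does not actually produce a $\beta$-dependent base $(\beta J)^{r/2}$---at best it gives $e^{-cr}$ with a $\beta$-independent $c$ (and may fail to decay at all if $v\ge 1$). The paper uses the factorial small-time form of Eq.~\eqref{eq:LiebRobinson}, $\|e^{iHt}A\,e^{-iHt}-e^{iH_{a,r}t}A\,e^{-iH_{a,r}t}\|\le\|A\|\,(2J|t|)^r/r!$, and, crucially, a \emph{$\beta$-dependent} splitting time $t_0=\tfrac{r\gamma(\beta J)}{2Je}$ with $\gamma(x)=\sqrt{x}/(1+\sqrt{x})$: then $(2Jt_0)^r/r!\lesssim\gamma(\beta J)^r\sim(\beta J)^{r/2}$ via Stirling, while the Gaussian tail $\sim e^{-2t_0^2/\beta^2}$ is superexponentially small in $r$ and subdominant. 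The same $t_0\propto r\sqrt{\beta J}$ scaling handles the double time integral for $G^{\beta,r}_{a,\alpha}$ against $g_1,g_2$.
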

The proof of Theorem~\ref{thrm:trunc} is provided in \cref{sec:tracedistanceerror}. 
To establish this result, we invoke Lemma~2 from Ref.~\cite{chen2023efficient}, which provides an upper bound on the distance between the fixed points of two Lindbladians, which is at most four times of the product between: (a) mixing time $t_{\mathrm{mix}}$ of one of the Lindbladians and (b) the Schatten-1 induced norm $\|\mathcal{L}^\beta - \mathcal{L}^{\beta,r}\|_{1\rightarrow 1}$, which 
remains bounded and controllable due to the (quasi)-locality of all terms appearing in each of the Lindbladians.

Consequently, to derive the bound on the mixing time, we extend the proof of rapid mixing from \citet{rouzé2024optimalquantumalgorithmgibbs} devised for the process in the work by \citet{chen2023efficient} to the algorithms we analyze, specifically the quantum Gibbs sampler in Eq.~\eqref{eq:LindbladLin} and the truncated Lindbladian in Eq.~\eqref{eq:trunc_lindbl}. For the latter, we state the following result:
\begin{thrm}[informal]\label{thrm:mixingtime} Under conditions of Theorem~\ref{thrm:trunc}, the Lindbladians $\cL^\beta$ and $\mathcal L^{\beta, r}$, with $r\geq0$ have mixing time
    \be
t_{\rm mix}=O(\log n ).
    \ee
\end{thrm}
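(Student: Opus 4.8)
The plan is to reduce the mixing-time bound for both $\cL^\beta$ and $\cL^{\beta,r}$ to an application of the machinery developed by \citet{rouzé2024optimalquantumalgorithmgibbs}, which establishes rapid mixing by controlling an \emph{oscillator norm} (a weighted interaction-picture norm that tracks how local perturbations propagate under the dissipative semigroup). The key structural facts we need are: (i) each $\cL^\beta$ and $\cL^{\beta,r}$ satisfies KMS detailed balance (Eq.~\eqref{eq:KMS}), so the generator is self-adjoint with respect to a weighted inner product and a spectral gap translates into a mixing-time bound; (ii) the jump operators $L^\beta_{a,\alpha}$ (and $L^{\beta,r}_{a,\alpha}$) are quasi-local with exponentially decaying tails in the graph distance, a consequence of the Lieb--Robinson bound for the local Hamiltonian $H$ (resp.\ the strictly finite support of $H_{a,r}$, which makes the truncated operators \emph{exactly} supported on $B_a(r)$); and (iii) in the infinite-temperature limit $\beta\to 0$ the generator reduces to single-qubit depolarizing noise (\cref{subsec:beta0}), which has an $O(1)$ gap and hence $O(\log n)$ mixing time. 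The strategy is then a perturbative continuity argument in $\beta$ around this solvable point.

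First I would verify that the construction of \citet{ding2024efficientquantumgibbssamplers} that we use, with the Pauli generators of Eq.~\eqref{jump_operator_choice}, satisfies all the hypotheses required by the abstract rapid-mixing theorem of \citet{rouzé2024optimalquantumalgorithmgibbs}: a bounded number of jump operators per site, uniformly bounded operator norms $\|L^\beta_{a,\alpha}\|$, $\|G^\beta_{a,\alpha}\|$ (which follows from $\|A^{a,\alpha}\|=1$, the normalization $\int |f(t)|\,\rd t<\infty$ for the Gaussian envelope, and boundedness of $\tanh$), and quasi-locality with an exponentially decaying tail. The decay rate is governed by the Lieb--Robinson velocity of $H$ together with the Gaussian decay of the filter $f(t)$ in Eq.~\eqref{eq:timejump}; one extracts a decay of the form $\|L^\beta_{a,\alpha} - (L^\beta_{a,\alpha})_{B_a(\ell)}\| = O((\beta J)^{\ell/2})$ for a constant $J$ depending on the lattice geometry and the single-term norm of $H$. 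For the truncated Lindbladian the situation is strictly easier: $L^{\beta,r}_{a,\alpha}$ and $G^{\beta,r}_{a,\alpha}$ are exactly supported on $B_a(r)$, so the required quasi-locality holds trivially, and KMS detailed balance holds with respect to the \emph{truncated} weight $\rho_{\beta,r}$ — but crucially with uniform constants, since the bounds on the interaction strengths are inherited from $H$ and do not depend on $r$ or $n$.

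Next I would run the oscillator-norm argument. One defines, following \citet{rouzé2024optimalquantumalgorithmgibbs}, a norm $\vertiii{\cdot}$ on operators that assigns geometrically growing weights to operators of growing support, and shows that $\cL^{\beta\dagger}$ (or its truncated counterpart) is a contraction in this norm up to a term that is $O(\beta J)$-small. Concretely, the generator splits as $\cL^\beta = \cL_0 + (\cL^\beta - \cL_0)$ where $\cL_0 = \sum_a \cL_{a,\mathrm{depol}}$ is the sum of single-site depolarizing generators obtained at $\beta=0$; $\cL_0$ is gapped with an $n$-independent gap and acts locally, while $\cL^\beta - \cL_0$ has operator-seminorm $O(\beta)$ and range-$O(1)$ quasi-local tails. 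Feeding these two pieces into the perturbation lemma of \citet{rouzé2024optimalquantumalgorithmgibbs} yields, for $\beta$ below a size-independent threshold $\beta^*$, a uniform lower bound on the oscillator-norm gap, which by their Proposition linking the oscillator norm to the $1\to 1$ contraction gives $t_{\mathrm{mix}} = O(\log n)$. The same argument applies verbatim to $\cL^{\beta,r}$ for every $r\geq 0$, with constants uniform in $r$, which is exactly what is needed downstream in Theorem~\ref{thrm:trunc} (the mixing time $t_{\mathrm{mix}}$ there must not blow up as $r$ grows).

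The main obstacle I anticipate is step (iii)–(iv): the rapid-mixing results of \citet{rouzé2024optimalquantumalgorithmgibbs} were written for the specific jump-operator family of \citet{chen2023efficient} (a continuum of operators labelled by energy, with a particular operator Fourier-transform structure), and one must check carefully that their proof does not secretly use features absent in the \citet{ding2024efficientquantumgibbssamplers} construction — in particular that the discrete, finite set of Pauli jump operators per site still produces the same depolarizing limit and the same quasi-local structure of the "deformed" jump operators entering the oscillator-norm estimate. A secondary technical point is controlling the coherent term $G^\beta_{a,\alpha}$: its defining expression involves a sum over all Bohr frequencies $\nu\in\Omega(H)$ (or $\Omega(H_{a,r})$), and one must show, via the time-domain representation Eq.~\eqref{eq:coherenttime}–\eqref{eq:g2time} and the rapid decay of $g_1,g_2$, that $G^\beta_{a,\alpha}$ is genuinely quasi-local with the same exponential tail as the jump operators and that $\|G^\beta_{a,\alpha}\| = O(\beta)$ as $\beta\to 0$, so that it can be absorbed into the $O(\beta)$ perturbation without spoiling the gap. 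Once these two checks are in place, the rest is a direct transcription of the argument in \citet{rouzé2024optimalquantumalgorithmgibbs}.
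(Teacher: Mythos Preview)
Your proposal is essentially correct and follows the same route as the paper: adapt the oscillator-norm contraction argument of \citet{rouzé2024optimalquantumalgorithmgibbs} by splitting $\cL^\beta = \cL^{0} + (\cL^\beta - \cL^{0})$, bounding the perturbation via the quasi-locality of the jump operators and the coherent term (Lieb--Robinson plus the Gaussian decay of $f,g_1,g_2$), and concluding that the contraction rate stays below~$1$ for $\beta<\beta^*$. The paper carries this out exactly as you outline, including the separate check that $\|G^\beta_{a,\alpha}-c_1 I\|_\infty = O(\beta J)$ so that the coherent part contributes only to the $O(\beta)$ perturbation; the truncated case is then handled by noting that the relevant telescoping sums simply terminate at~$r$, so the bounds only improve.

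One point to correct: your item~(i) is not right for the truncated generator, and in any case is not used. The truncated Lindbladian $\cL^{\beta,r}$ does \emph{not} satisfy KMS detailed balance with respect to its fixed point $\rho_{\beta,r}$ (each local piece $\cL^{\beta,r}_{a,\alpha}$ is built from a \emph{different} truncated Hamiltonian $H_{a,r}$, so there is no single reference state making the sum self-adjoint). Consequently the ``spectral gap via KMS self-adjointness'' route you mention is unavailable for $\cL^{\beta,r}$. This is harmless for your argument because the oscillator-norm contraction of \citet{rouzé2024optimalquantumalgorithmgibbs} does not require detailed balance at all---it uses only the quasi-locality of the generator and the depolarizing structure at $\beta=0$---but you should drop the KMS/spectral-gap clause from your list of needed structural facts.
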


This proof builds on showing that the Lindbladian, similar to the one introduced in \citet{chen2023efficient}, for all $ O\in \mathsf M_\Lambda$ satisfies the inequality \cite{rouzé2024optimalquantumalgorithmgibbs} 
 
 \begin{align}\label{gradientest}
\vertiii{e^{t\mathcal{L}^{\beta \dagger}}(O)}\le e^{-(1-\kappa)t}\vertiii{O}\,
\end{align}
for some $\kappa>0$ with the oscillator norm defined as
\begin{align*}
\vertiii{O}:=\sum_{a\in\Lambda}\|O-\tfrac{1}{2}I_a\otimes \operatorname{tr}_a(O)\|_\infty\,,
\end{align*}
where $\operatorname{tr}_a(\cdot)$ is partial trace over qubit $a$.
From that statement, it follows~[cf. \cref{sec:tracedistanceerror}]
\begin{align}
\|e^{t\mathcal{L}^\beta}(\rho)-\rho_\beta\|_1\le 2n \|\rho-\rho_\beta\|_1 e^{-(1-\kappa)t}\,.
\end{align}
It turns out that $\kappa$ can be bounded by exploiting the quasilocality of the Lindbladian, provided the temperature is sufficiently high, i.e., $\beta < \beta^*$. Crucially, since the proof relies on the locality of the jump operators, the bound on the mixing time for the original Lindbladian $\mathcal{L}^\beta$ is always greater than or equal to that of the truncated Lindbladian $\mathcal{L}^{\beta,r}$. In particular, truncation does not worsen the mixing time bounds at high enough temperatures.

\begin{figure*}[t!]
	\centering
	\includegraphics[width=0.9\textwidth]{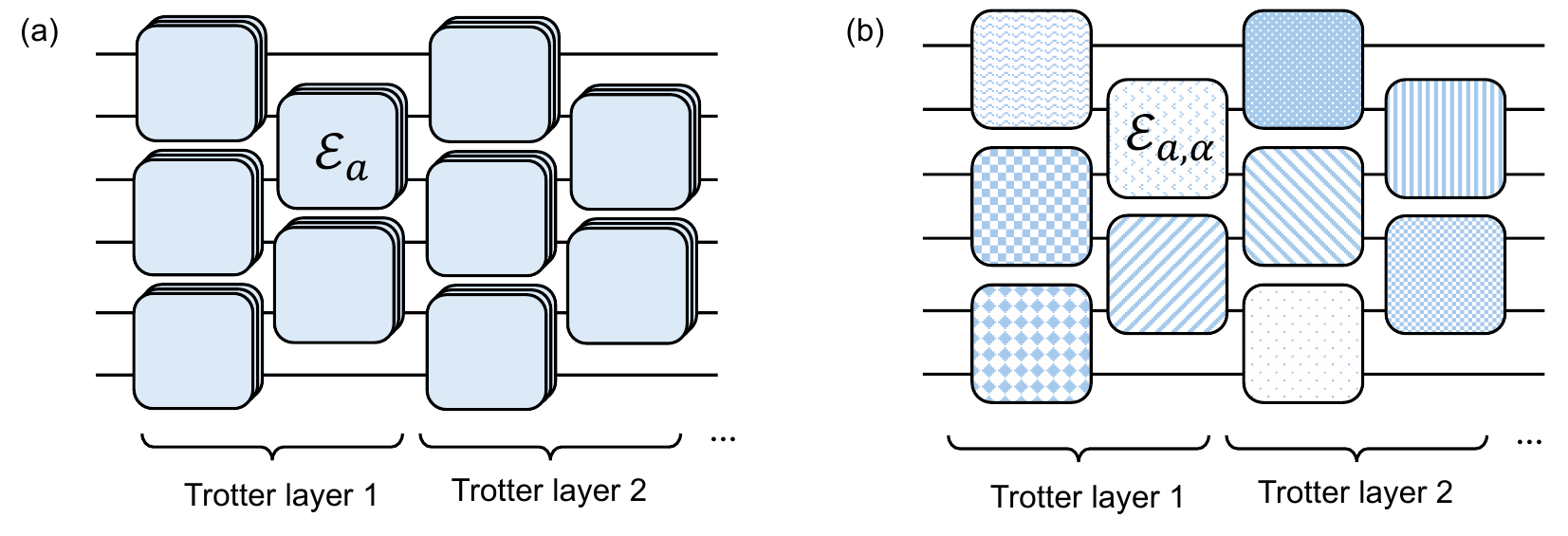}
    \caption{\textbf{Standard~vs.~randomized~Trotterization.} (a) Standard Trotterization approximates continuous-time evolution by a sequence of discrete operations $\mathcal{E}_a := \exp\left(\sum_{\alpha} \mathcal{L}^{\beta,r}_{a,\alpha} \tau\right)$, where $\tau$ is the Trotter step size. Each operation is visualized as a stack of gates representing the sum in the exponent.
    (b) To reduce circuit depth, we employ a randomized compilation strategy: at each Trotter step and for each site \(a\), we randomly select a single term \(\exp\bigl(\mathcal{L}_{a,\alpha}^{\beta,r}\,\tau\bigr)\) from \cref{eq:channelrandomvariable} by drawing \(\alpha\) uniformly from the set of possible values.
These randomized operations correspond to shallower circuits derived from the exponentiation in Eq.~\eqref{eq:average}. Averaging over many such randomized circuit realizations recovers the full dynamics.}
	\label{fig:Trotterization}
\end{figure*}

As a direct corollary of the established results so far, Theorem~\ref{thrm:trunc} provides an estimate for the required truncation radius $r$. In particular, it asserts that:
\begin{cor}\label{cor:trunc}
For any fixed $\beta < \beta^*$, the truncated Lindbladian $\mathcal{L}^{\beta,r}$ has a fixed point $\rho_{\beta,r}$ that is $\epsilon$-close to the Gibbs state $\rho_\beta$, provided that
\begin{equation}\label{eq:radius_trunc}
    r = \Omega\left(\log\left[\frac{n}{\epsilon}\right]\right).
\end{equation}
\end{cor}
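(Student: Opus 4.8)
The plan is to treat this as an elementary consequence of the quantitative bound already furnished by Theorem~\ref{thrm:trunc}. Existence and uniqueness of the fixed point $\rho_{\beta,r}$ are asserted there verbatim, so there is nothing to prove on that front; the entire content of the corollary is the inversion of the inequality $\|\rho_{\beta,r}-\rho_\beta\|_1 = O\!\left((\beta J)^{r/2}\, n\log n\right)$ in the truncation radius $r$. Concretely, I would first fix a constant $C>0$ absorbing the $O(\cdot)$, so that $\|\rho_{\beta,r}-\rho_\beta\|_1 \le C (\beta J)^{r/2}\, n\log n$ for all $r\ge 0$ and all $\beta<\beta^*$, and then ask for which $r$ the right-hand side drops below $\epsilon$.

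Next I would record that, after possibly shrinking $\beta^*$, we may assume $\beta J<1$ for every $\beta<\beta^*$; this is the only use of the high-temperature hypothesis beyond what Theorem~\ref{thrm:trunc} already gives, and it is exactly what makes $(\beta J)^{r/2}$ decay in $r$. Writing $\gamma := \log\!\left(1/(\beta J)\right) > 0$, the condition $C (\beta J)^{r/2}\, n\log n \le \epsilon$ becomes, after taking logarithms,
\begin{equation}
  r \;\ge\; \frac{2}{\gamma}\left(\log\!\left(C\, n\log n\right) + \log\frac{1}{\epsilon}\right).
\end{equation}
Since $\gamma$ depends only on $\beta$ and not on $n$ or $\epsilon$, and since $n\log n = \poly(n)$ contributes only $\log(C\,n\log n) \le \log C + 2\log n$ for $n$ above an absolute constant, the right-hand side is at most $\tfrac{2}{\gamma}\bigl(2\log n + \log(1/\epsilon) + \log C\bigr) = O\!\left(\log(n/\epsilon)\right)$ for $\epsilon\le 1$. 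Hence any $r = \Omega(\log[n/\epsilon])$ — with the implied constant proportional to $1/\gamma$ — forces $\|\rho_{\beta,r}-\rho_\beta\|_1 \le \epsilon$, which is the claim.

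I do not expect a genuine obstacle: the corollary is a one-line rearrangement of Theorem~\ref{thrm:trunc} plus the observation that a polynomial prefactor becomes a logarithmic term inside the radius bound. The only point demanding mild care is the bookkeeping of the temperature dependence — making explicit that $\beta^*$ may be taken small enough that $\beta J<1$, and noting that the constant in the $\Omega(\cdot)$ degrades as $\beta\uparrow\beta^*$ through $\gamma = \log(1/(\beta J))\to 0$, which is precisely why the statement is phrased for each fixed $\beta<\beta^*$ rather than uniformly in $\beta$.
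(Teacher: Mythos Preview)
Your proposal is correct and matches the paper's approach: the corollary is stated there as an immediate consequence of Theorem~\ref{thrm:trunc} without a separate proof, and your inversion of the bound $\|\rho_{\beta,r}-\rho_\beta\|_1 = O\!\left((\beta J)^{r/2}\,n\log n\right)$ via taking logarithms is exactly the intended one-line argument.
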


The linear scaling with the number of qubits in Eq.~\eqref{eq:distance_gibbs} arises from the general subadditivity of the trace distance \cite{Nielsen2010} and is unlikely to be improved if the goal is to sample from the exact Gibbs distribution.
However, in many physical applications, the primary objective is to compute the expectation values of local observables. For this task, the requirements on the truncation radius are more relaxed as one can use restrictions such as the Lieb-Robinson bound. The result is demonstrated by the following theorem.

\begin{thrm}[Truncation error for local observables] \label{thrm:trunc_local}
Under the assumptions of Theorem 1, consider $\beta<\beta^*$ and let $O_X\in\mathsf S_X$ be supported on $X\subseteq\Lambda$. Then there exist constants $\gamma,\eta>0$, depending only on the model parameters and $\beta$, such that $\rho_{\beta,r}$ satisfies
\begin{equation}\label{eq:loc_obs_distance}
\begin{split}
	\bigl| \tr(\rho_{\beta, r} O_X) - &\tr(\rho_\beta O_X) \bigr| \\
    &\leq  \norm{O_X}c(|X|)\Bigl[e^{-\gamma r} +ne^{-\eta d_X}\Bigr]
\end{split}
\end{equation}
where $c(|X|)=O\bigl(\mathrm{poly}(|X|)\bigr)$, $n \equiv |\Lambda|$ is the number of qubits, and
\[
d_{X} \;=\; \min_{\,a\in X\,}\ell\bigl(a,\partial\Lambda\bigr)
\]
is the minimal graph‐distance from any $a\in X$ to the boundary $\partial\Lambda$.
\end{thrm}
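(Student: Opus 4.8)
The plan is to represent $\rho_{\beta,r}-\rho_\beta$ as a time integral of the dissipative semigroup, pass to the Heisenberg picture, and control the integral by combining two complementary decay mechanisms: the \emph{rapid mixing} of the truncated Lindbladian, which pulls the Heisenberg‑evolved observable exponentially fast toward a multiple of the identity, and a \emph{Lieb--Robinson bound} for the (strictly local) truncated Lindbladian, which confines that evolved observable to a neighbourhood of $X$ growing only linearly in time.

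Since $\rho_{\beta,r}$ is the unique fixed point of $\mathcal{L}^{\beta,r}$ and the semigroup mixes (\cref{thrm:trunc,thrm:mixingtime}), $\rho_{\beta,r}=\lim_{s\to\infty}e^{s\mathcal{L}^{\beta,r}}(\rho_\beta)$, hence $\rho_{\beta,r}-\rho_\beta=\int_0^\infty e^{s\mathcal{L}^{\beta,r}}\big((\mathcal{L}^{\beta,r}-\mathcal{L}^{\beta})(\rho_\beta)\big)\,ds$ using $\mathcal{L}^{\beta}(\rho_\beta)=0$. Pairing with $O_X$ and moving the semigroup onto the observable,
\[
\operatorname{tr}\!\big((\rho_{\beta,r}-\rho_\beta)O_X\big)=\int_0^\infty \operatorname{tr}\!\Big((\mathcal{L}^{\beta,r}-\mathcal{L}^{\beta})(\rho_\beta)\,\widetilde O_X(s)\Big)\,ds,\qquad \widetilde O_X(s):=e^{s\mathcal{L}^{\beta,r\dagger}}(O_X).
\]
Because $(\mathcal{L}^{\beta,r}-\mathcal{L}^{\beta})(\rho_\beta)$ is traceless, one may replace $\widetilde O_X(s)$ by $\widetilde O_X(s)-c_s I$, $c_s:=2^{-n}\operatorname{tr}[\widetilde O_X(s)]$. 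The oscillator‑norm contraction \eqref{gradientest} for $\mathcal{L}^{\beta,r}$, together with $\vertiii{O_X}\le 2|X|\,\|O_X\|$ (the $a\notin X$ terms vanish) and the telescoping bound $\|Z-2^{-n}\operatorname{tr}[Z]\,I\|_\infty\le\vertiii{Z}$, then gives the temporal decay $\|\widetilde O_X(s)-c_s I\|_\infty\le 2|X|\,e^{-(1-\kappa)s}\|O_X\|$. The spatial inputs are: (a) the uniform per‑site truncation estimate $\|\mathcal{L}^{\beta,r}_{a,\alpha}-\mathcal{L}^{\beta}_{a,\alpha}\|_{1\to1}=O\big((\beta J)^{\Theta(r)}\big)$, already established for \cref{thrm:trunc} from the Lieb--Robinson bound for $H$ and the decay of the filter $f$, together with the quasilocality of this difference at $a$; and (b) a Lieb--Robinson bound for $\mathcal{L}^{\beta,r}$, whose jump operators are supported on balls of radius $r$, yielding that $\widetilde O_X(s)$ agrees, up to $\mathrm{poly}(|X|)\,e^{-\mu(\ell-r-vs)}\|O_X\|$, with an operator supported on the $\ell$‑neighbourhood of $X$ (Lieb--Robinson velocity $v$, rate $\mu$).

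Decompose $\mathcal{L}^{\beta,r}-\mathcal{L}^{\beta}=\sum_{a,\alpha}(\mathcal{L}^{\beta,r}_{a,\alpha}-\mathcal{L}^{\beta}_{a,\alpha})$ and split the $a$‑sum by $\ell(a,X)$. For the $\mathrm{poly}(r)\,|X|$ sites with $\ell(a,X)\lesssim r$, bound each contribution by $\|\mathcal{L}^{\beta,r}_{a,\alpha}-\mathcal{L}^{\beta}_{a,\alpha}\|_{1\to1}\int_0^\infty\|\widetilde O_X(s)-c_s I\|_\infty\,ds=O\big((\beta J)^{\Theta(r)}\big)\cdot\tfrac{2|X|}{1-\kappa}\|O_X\|$; summing and absorbing the polynomial prefactor into a slightly smaller rate produces the term $c(|X|)\,e^{-\gamma r}\|O_X\|$. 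For a site $a$ at larger distance, use quasilocality to reduce the trace to the ``remainder of $\widetilde O_X(s)-c_sI$ near $a$'', whose operator norm is at most $\min\{\mathrm{poly}(|X|)e^{-\mu(\ell(a,X)-r-vs)},\,4|X|e^{-(1-\kappa)s}\}\|O_X\|$; integrating this minimum of an increasing and a decreasing exponential in $s$ yields a factor decaying exponentially in $\ell(a,X)-r$, multiplied by the truncation smallness $O((\beta J)^{\Theta(r)})$. Summing over the remaining sites — bounding their number crudely by $n$, and noting that the spatial decay can be exploited only out to the distance at which the relevant neighbourhood still lies inside $\Lambda$, i.e.\ out to $d_X$ — produces the term $n\,e^{-\eta d_X}\|O_X\|$, with $\eta>0$ set by the smaller of the Lieb--Robinson rate and $|\log(\beta J)|$ (finite for $\beta<\beta^*$). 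Adding the two contributions gives the stated bound.

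I expect the main obstacle to be combining the two decay mechanisms without reintroducing a system‑size factor in front of $(\beta J)^{\Theta(r)}$. Spatially, the truncated jump operators have support radius $r$, so the effective light cone of $\widetilde O_X(s)$ is shifted to $vs+r$; one therefore needs $\beta<\beta^*$ small enough that the per‑site truncation error $(\beta J)^{\Theta(r)}$ overcomes this $e^{\Theta(r)}$ offset. Temporally, one must use that $\widetilde O_X(s)$ is already exponentially close to a scalar by the time its front reaches a distant site, since otherwise each of the $\Theta(n)$ distant sites would contribute at order $(\beta J)^{\Theta(r)}$ rather than at the much smaller order $e^{-\eta d_X}$.
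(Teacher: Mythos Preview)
Your approach is correct but genuinely different from the paper's. The paper does not carry out a direct estimate: it invokes a black-box stability theorem for rapidly mixing local Lindbladians (Theorem~7 of Ref.~\cite{cubitt2015stability}), verifying its hypotheses by (i) writing $\mathcal{L}^{\beta}=\mathcal{L}^{\beta,r}+\mathcal{E}^{\beta,r,\Lambda}$ with $\mathcal{E}^{\beta,r,\Lambda}=\sum_{a}\sum_{r'\ge r}\mathcal{E}^{\beta,r'}_a$, (ii) citing Theorem~\ref{thrm:mixingtime} for rapid mixing and Theorem~\ref{thrm:trunc} for uniqueness of the fixed point of $\mathcal{L}^{\beta,r}$, and (iii) feeding in the per-shell bound $\|\mathcal{E}^{\beta,r'}_a\|_{1\to1}\le 21\,\gamma(\beta J)^{r'}$ already obtained in the course of proving Theorem~\ref{thrm:trunc}. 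The stability theorem then bounds $\|T_t^\dagger(O_X)-S_t^\dagger(O_X)\|$ uniformly in $t$, and $t\to\infty$ gives the claim.

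You instead re-derive the stability estimate from scratch: Duhamel for $\rho_{\beta,r}-\rho_\beta$, passage to the Heisenberg picture, oscillator-norm contraction \eqref{gradientest} for the temporal decay of $\widetilde O_X(s)-c_sI$, and a Lieb--Robinson bound for the strictly local generator $\mathcal{L}^{\beta,r}$ for the spatial confinement. This is precisely the mechanism underlying the theorem the paper cites, so your argument is sound and more self-contained; the price is that you need an explicit Lindbladian Lieb--Robinson bound, which is standard but not stated in the paper. Your crossover integration of $\min\{e^{-\mu(\ell(a,X)-r-vs)},\,e^{-(1-\kappa)s}\}$ is the right way to combine the two mechanisms and indeed yields an exponentially decaying contribution in $\ell(a,X)$ for each distant site.

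One point is muddled: your justification of the boundary term $n\,e^{-\eta d_X}$. In the paper this term is inherited from the uniform-family framework of Ref.~\cite{cubitt2015stability}, where the finite-$\Lambda$ dynamics is implicitly compared to its $\mathbb{Z}^D$ extension. In your direct finite-lattice argument no such comparison occurs, and the far-site contributions, being exponentially summable in $\ell(a,X)$, already sum to $O(c(|X|))$ with no residual $n$-dependence; your sentence ``the spatial decay can be exploited only out to $d_X$'' does not correspond to any actual step in your estimate. This is not a gap---if anything, your route should deliver the sharper bound $c(|X|)\,\|O_X\|\,e^{-\gamma r}$ without the finite-size correction.
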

For a fixed region $X$, the distance $d_X$, which is distance between $X$ and to the complement $\Lambda^c$ of the lattice $\Lambda$ and is thus the distance to the boundaries, increases with system size, which implies that the term $|\Lambda| \nu_\eta^{-1}(d_X)$ vanishes as the system becomes large. This term can therefore be interpreted as a finite-size correction~\cite{cubitt2015stability}.
The proof of Theorem~\ref{thrm:trunc_local} is detailed in \cref{section:Gap stability}.  This proof relies on a theorem concerning the stability under local perturbations of Lindbladians with logarithmic scaling of the mixing time, as derived in Ref.~\cite{cubitt2015stability}. In fact, this stability theorem allows us to prove an even stronger statement. Specifically, consider the time evolution generated by $\cL^\beta$, $\exp(t\mathcal{L}^\beta)$, and the time evolution $\exp(t\mathcal{L}^{\beta,r})$. We show that for any observable $O_X$ supported on $X$, the distance between $\exp(t\mathcal{L}^{\beta \dagger })[O_X]$ and  $\exp(t\mathcal{L}^{\beta,r\dagger})[O_X]$ remains small for \textit{all} times $t$. The result then follows from considering the limit $t\rightarrow \infty$. We emphasize that the support $X$ does not necessarily have to be connected, thus this theorem also applies to correlation functions. Also, note that the bound of $c(|X|)$ by a polynomial is useful when $|X|$ is small, but it becomes useless for global observables.

A direct corollary of Theorem~\ref{thrm:trunc_local}  is: 
\begin{cor}\label{col:loc_obs}
For any $\beta < \beta^*$, the truncated Lindbladian $\mathcal{L}^{\beta,r}$ admits a fixed point 
for which the expectation values of observables whose support size is $O(1)$ and sufficiently distant from the lattice boundary are
$\epsilon$-close to those of the Gibbs state $\rho_\beta$, provided that
\begin{equation}
    r = \Omega(\log(1/\epsilon)).
\end{equation}
\end{cor}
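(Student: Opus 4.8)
The plan is to obtain the statement directly from Theorem~\ref{thrm:trunc_local} by rebalancing its two error terms against $\epsilon$. Fix $\beta<\beta^*$ and let $O_X\in\mathsf S_X$ be an observable with $|X|=O(1)$; normalizing $\norm{O_X}=1$ costs nothing, since for bounded local observables $\norm{O_X}$ is itself $O(1)$ and can be absorbed into the constants below. Because $|X|$ is bounded by a constant, the prefactor $c(|X|)=O\bigl(\mathrm{poly}(|X|)\bigr)$ is bounded by a constant $C>0$ depending only on the model parameters and on $\beta$, and not on $n$ or $\epsilon$. Theorem~\ref{thrm:trunc_local} then reads
\begin{equation*}
\bigl|\tr(\rho_{\beta,r}O_X)-\tr(\rho_\beta O_X)\bigr|\ \le\ C\bigl[e^{-\gamma r}+n\,e^{-\eta d_X}\bigr].
\end{equation*}
The existence and uniqueness of the fixed point $\rho_{\beta,r}$ are inherited from Theorem~\ref{thrm:trunc}, so only the bound needs attention.

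Next I would make precise the phrase ``sufficiently distant from the lattice boundary'': it is exactly the requirement that the finite-size correction be subdominant, $n\,e^{-\eta d_X}\le \epsilon/2$, which holds as soon as $d_X\ge \eta^{-1}\log(2n/\epsilon)=\Omega(\log(n/\epsilon))$. Under such a placement of $X$ the second term contributes at most $\epsilon/2$. It then remains to control the truncation term: $C e^{-\gamma r}\le \epsilon/2$ is equivalent to $r\ge \gamma^{-1}\log(2C/\epsilon)$, i.e. $r=\Omega(\log(1/\epsilon))$, with the implied constant depending only on $\gamma$ and $C$ and hence only on the model parameters and $\beta$. Adding the two contributions gives $\bigl|\tr(\rho_{\beta,r}O_X)-\tr(\rho_\beta O_X)\bigr|\le\epsilon$, which is the claim.

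There is no genuinely hard step here; the corollary is a routine reorganization of Theorem~\ref{thrm:trunc_local}. The only point that warrants a moment's care is verifying that the constants in play ($C$, $\gamma$, $\eta$) are independent of $n$ and $\epsilon$, so that the $\Omega(\log(1/\epsilon))$ bound on $r$ is genuinely $n$-independent and the logarithmic scaling is not an artifact of hidden system-size dependence. This is guaranteed because Theorem~\ref{thrm:trunc_local} supplies $\gamma,\eta$ as constants depending only on the model and $\beta$, and $c(|X|)$ is polynomial in the (constant) support size. If one prefers to expose the trade-off rather than fold it into the hypothesis, one can keep $d_X$ free and state the joint requirement $r=\Omega(\log(1/\epsilon))$ together with $d_X=\Omega(\log(n/\epsilon))$; the corollary as stated simply packages the latter into ``sufficiently distant from the lattice boundary''.
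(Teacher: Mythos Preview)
Your proposal is correct and follows essentially the same approach as the paper: the corollary is stated as a direct consequence of Theorem~\ref{thrm:trunc_local}, with the boundary term $n e^{-\eta d_X}$ neglected once $d_X$ is sufficiently large (the paper writes $d_X=\Omega(\log n)$, which is your condition up to the harmless $\epsilon$-dependence), leaving $e^{-\gamma r}\le\epsilon$ to yield $r=\Omega(\log(1/\epsilon))$. Your explicit $\epsilon/2$ split and the check that $C,\gamma,\eta$ are $n$-independent are correct and slightly more careful than the paper's one-line justification.
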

Notably, this requirement is independent of the system size and applies to observables located far from the lattice boundaries, where the last term in Eq.~\eqref{eq:loc_obs_distance} can be neglected; that is, when $d_X = \Omega(\log n)$.

In addition to the analytical results above, we also present in \cref{sec:Numerics,sec:Further models} numerical experiments aimed at evaluating the error in expectation values of local observables in certain one-dimensional models.
Our findings indicate that the error remains modest even at low temperatures (large inverse temperatures $\beta$), a regime where the aforementioned theoretical bounds may not strictly apply. Consequently, these numerical results suggest that the practical validity of truncation methods extends beyond the analytically predicted limits, at least for one-dimensional systems.

\subsection{Trotterization of truncated dynamics}\label{subsec:Randomizedcompilation}

After deriving the truncated Lindblad operator and demonstrating its efficiency in preparing Gibbs states, the subsequent step toward practical implementation is to discretize the evolution using Trotterization. Consider an evolution over total time $t = \tau M$, where $M$ is a large integer. Then, we can partition the continuous evolution into discrete time steps of length $\tau$ and approximate the dynamics by
\begin{align}
\exp(\mathcal L^{\beta,r}t)\to \left[\prod_{a \in \Lambda} \exp\left(\sum_{\alpha} \mathcal{L}^{\beta,r}_{a,\alpha}\tau\right)\right]^M,
\label{eq:trotterized}
\end{align}
where $\prod_{a \in \Lambda}$ denotes a composition of channels applied in an arbitrary but fixed order, see Fig.~\ref{fig:Trotterization}(a). This method generalizes the Trotterization approach commonly used in Hamiltonian simulations and is known to introduce an error scaling quadratically with time $t$~\cite{Kliesch2011Dissipative,Han2021Experimental}, as we will see later in this Section.

\begin{figure*}[t!]
	\centering
	\includegraphics[width=0.8\textwidth]{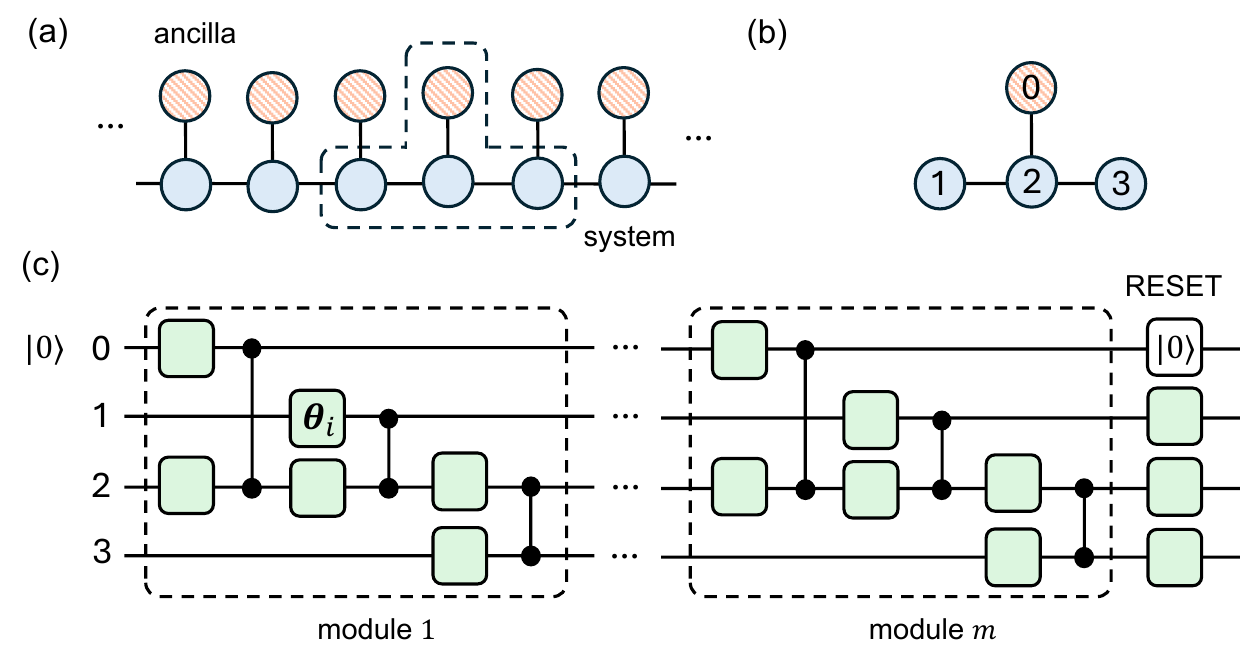}
    \caption{
    \textbf{Gadgets for one-dimensional Hamiltonians.} (a) The ``ladder'' device architecture, shown as an example of a local layout for simulating thermal states of one-dimensional Hamiltonians. The lower row of qubits (solid blue) represents the system qubits, while the upper row (dashed red) represents the ancilla. (b) The gadget connectivity for a truncation radius of $r = 1$, which requires implementing a four-qubit unitary. (c) A template circuit example, consisting of $m$ individually optimized modules followed by single-qubit gates. The ancilla is initialized in the state $\ket{0}$ before the gadget is applied and is reset afterwards to ensure correct initialization for the next gadget. Each module consists of three controlled-$Z$ gates arranged in an alternating pattern, interleaved with single-qubit gates parametrized as in Eq.~\eqref{eq:singlequbit}.
}
	\label{fig:Ansatz}
\end{figure*}

To further reduce circuit depth, we introduce a randomized compilation method tailored specifically for Lindblad dynamics at each discrete time step~\cite{chen2024randomizedmethodsimulatinglindblad}. We define a simulation trajectory
\begin{align}\label{eq:channelrandomvariable}
    \mathcal{E}^{\bm \alpha}_{t,\tau} := \prod_{i=1}^M \prod_{a \in \Lambda} \exp\left( \mathcal{L}_{a,\alpha[a]}^{\beta,r} \tau \right),
\end{align}  
where $\alpha[a]$ represents a single choice of operator over site $a$,  $\bm \alpha = \{\alpha[a] \mid a\in \Lambda\}$. Next, we consider the index $\alpha[a]$ to be selected independently and uniformly at random, thus inducing a uniform probability distribution $\alpha[a]\sim \Sigma$. Then we obtain that for sufficiently large $M$ (see Theorem~\ref{thrm:trot_thrm} below):
\begin{align}
    \exp\left(\mathcal{L}^{\beta,r} t\right) \approx \mathbb{E}_{\bm{\alpha} \sim \Sigma^{\otimes n}} \left[\mathcal{E}^{\bm \alpha}_{t,\tau}\right],
    \label{eq:average}
\end{align}
where the r.h.s. represents the average of the randomized, discretized evolution with time step $\tau$. Practically, this evolution is realized by randomly selecting Lindbladian terms indexed by different values of $\alpha$ at each circuit layer and averaging over the resulting quantum trajectories, as illustrated in Fig.~\ref{fig:Trotterization}(b).

The following results shows that for sufficiently large $M$, the output of these expectation values converges to the expectation value of the target evolution.
\begin{thrm}[Theorem 4 from Ref.~\cite{chen2024randomizedmethodsimulatinglindblad}]\label{thrm:trot_thrm}
The error of the average channel $\mathcal{E}_{t,M}$ in Eq.~\eqref{eq:channelrandomvariable} can be bounded by
\begin{align}\label{eq:trotter_error}
   \norm{\mathbb{E}_{\bm{\alpha} \sim \Sigma^{\otimes n}} \left[\mathcal{E}^{\bm \alpha}_{t,\tau}\right]-e^{t \mathcal{L}^\beta}}_\diamond =O\left(\frac{n^2 t^2}{M}\right).
\end{align}
\end{thrm}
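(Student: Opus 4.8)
\emph{Proof plan.}
Since the inequality is quoted essentially verbatim as Theorem~4 of Ref.~\cite{chen2024randomizedmethodsimulatinglindblad}, the shortest route is to cite it directly, applied here with the truncated generator $\mathcal{L}^{\beta,r}$. To reconstruct the argument I would proceed as follows. Write $\mathcal{L}^{\beta,r}_a:=\sum_\alpha \mathcal{L}^{\beta,r}_{a,\alpha}$ for the dissipator attached to site $a$, so $\mathcal{L}^{\beta,r}=\sum_{a\in\Lambda}\mathcal{L}^{\beta,r}_a$, and set $\tau=t/M$. The first observation is that, because the indices are drawn independently over Trotter layers and over sites, the expectation in Eq.~\eqref{eq:average} factorizes into an ordered composition of $M$ identical layer-averaged channels, $\mathbb{E}_{\bm{\alpha}}[\mathcal{E}^{\bm{\alpha}}_{t,\tau}]=(\overline{\mathcal{E}}_\tau)^{M}$, where $\overline{\mathcal{E}}_\tau=\prod_{a\in\Lambda}\mathbb{E}_{\alpha[a]\sim\Sigma}[\exp(\mathcal{L}^{\beta,r}_{a,\alpha[a]}\tau)]$ is in turn an ordered composition over sites of single-site averages (up to the standard rescaling of the step inside the randomized exponential that makes the uniform average over the finitely many values of $\alpha$ reproduce $\mathcal{L}^{\beta,r}_a$ to first order in $\tau$). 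Each such factor is a convex combination of quantum channels, hence completely positive and trace preserving.

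Next I would reduce the full-circuit comparison to a single Trotter step. Because a quantum channel, and any convex combination of channels, is a contraction in the diamond norm, a standard telescoping (hybrid) argument gives $\norm{(\overline{\mathcal{E}}_\tau)^{M}-e^{t\mathcal{L}^{\beta,r}}}_\diamond\le M\,\norm{\overline{\mathcal{E}}_\tau-e^{\tau\mathcal{L}^{\beta,r}}}_\diamond$; inserting the ordered product $\prod_{a}e^{\tau\mathcal{L}^{\beta,r}_a}$ and telescoping once more over the $n$ sites reduces matters to bounding (i) the ordinary first-order Lie--Trotter splitting error $\norm{\prod_{a}e^{\tau\mathcal{L}^{\beta,r}_a}-e^{\tau\mathcal{L}^{\beta,r}}}_\diamond$ and (ii) the per-site derandomization error $\norm{\mathbb{E}_{\alpha[a]}[\exp(\mathcal{L}^{\beta,r}_{a,\alpha[a]}\tau)]-e^{\tau\mathcal{L}^{\beta,r}_a}}_\diamond$, and then multiplying the totals by $M$.

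I would then estimate both one-step errors by a short-time Taylor expansion in $\tau$. For (ii), the randomized step is designed so that the $O(1)$ and $O(\tau)$ parts of $\mathbb{E}_{\alpha[a]}[\exp(\mathcal{L}^{\beta,r}_{a,\alpha[a]}\tau)]$ coincide with those of $e^{\tau\mathcal{L}^{\beta,r}_a}$---the identity channel and $\tau\mathcal{L}^{\beta,r}_a$---so the discrepancy is genuinely second order, $O(\tau^2\max_\alpha\|\mathcal{L}^{\beta,r}_{a,\alpha}\|_{\diamond\to\diamond}^2)=O(\tau^2)$, and summing over the $n$ sites yields $O(n\tau^2)$; here I would use that every truncated term is norm bounded, $\|\mathcal{L}^{\beta,r}_{a,\alpha}\|_{\diamond\to\diamond}=O(1)$, which follows from $\|A^{a,\alpha}\|=1$, the normalization $\int|f(t)|\,\mathrm{d}t<\infty$, and the analogous boundedness of $G^{\beta,r}_{a,\alpha}$, all guaranteed by the compact support on $B_a(r)$. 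For (i), the crude estimate $\norm{\prod_{a}e^{\tau\mathcal{L}^{\beta,r}_a}-e^{\tau\mathcal{L}^{\beta,r}}}_\diamond=O\!\big(\tau^2(\sum_a\|\mathcal{L}^{\beta,r}_a\|_{\diamond\to\diamond})^2\big)=O(n^2\tau^2)$ already suffices (locality could sharpen it, since $[\mathcal{L}^{\beta,r}_a,\mathcal{L}^{\beta,r}_b]=0$ whenever $B_a(r)\cap B_b(r)=\emptyset$, but that is not needed). Adding the two gives a one-step error $O(n^2\tau^2)$, and multiplying by $M$ with $\tau=t/M$ delivers $O(Mn^2\tau^2)=O(n^2t^2/M)$.

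The hard part is not any single estimate but making the Taylor expansion uniform: a priori an $M$-fold product of $n$-fold products of exponentials of operators whose norms grow with $n$ could amplify errors uncontrollably. The two ingredients that prevent this are (a) the contractivity of quantum channels and of their averages, which keeps both telescoping steps stable, and (b) expanding only a single site-layer exponential at a time rather than the whole circuit. Once those are in place the power $n^2$ is just bookkeeping: it records that the dominant contribution, the first-order Trotter splitting term, carries $(\sum_a\|\mathcal{L}^{\beta,r}_a\|_{\diamond\to\diamond})^2=O(n^2)$. Notably the bound does not use locality of the jump operators at all; truncation enters only to keep the per-term norms $O(1)$.
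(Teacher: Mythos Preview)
The paper does not prove this theorem: immediately after the statement it writes ``The proof of this theorem is in Ref.~\cite{chen2024randomizedmethodsimulatinglindblad}.'' Your proposal correctly identifies this and then goes further by reconstructing the argument; the reconstruction---factorize the expectation over layers, telescope using diamond-norm contractivity of channels, and control the single-step error by a second-order Taylor expansion whose linear terms match by design---is the standard randomized-Trotter analysis and is sound. One small point: as written in the paper the right-hand side carries $e^{t\mathcal{L}^\beta}$ while the randomized channel is built from the \emph{truncated} generator $\mathcal{L}^{\beta,r}$; you are right to interpret and prove the bound against $e^{t\mathcal{L}^{\beta,r}}$, since otherwise an additional truncation term would have to appear.
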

The proof of this theorem is in Ref.~\cite{chen2024randomizedmethodsimulatinglindblad}.

\section{Circuit Compilation}\label{subsec:Compilation}
The final step required for an implementation of the Trotterized dynamics that is suitable for near-term quantum hardware is a compilation of the local quantum channels into circuit-level gate operations. In principle, this can be done efficiently using dilation methods, followed by compilation into the desired gate set. According to standard results in the literature \cite{Nielsen2010}, for a fixed $r$, this compilation requires a number of gates that scales at most polylogarithmically with the inverse of the additive error.

Here we show that a single ancilla qubit is in fact sufficient, and that variational compilation of the resulting dilated unitary provides a practically feasible method for yielding sufficiently short circuits for near-term implementation.  More specifically, we provide a compilation procedure for each channel \(\exp(\tau \mathcal{L}_{a,\alpha}^{\beta,r})\), which represents the randomized evolution described in Eq.~\eqref{eq:channelrandomvariable}.
This can be done by first defining the Hermitian operator
\begin{align}\label{eq:V_operator}
    O^{a,\alpha}_r := \begin{pmatrix}
    \sqrt{\tau}\,G_r^{a,\alpha} & L_r^{a,\alpha\dagger} \\
    L_r^{a,\alpha} & \sqrt{\tau}\,G_r^{a,\alpha}
    \end{pmatrix},
\end{align}
where \(L_r^{a,\alpha}\) and \(G_r^{a,\alpha}\) are the jump operators and associated coherent terms, respectively, as was previously defined in Eq.~\eqref{eq:trunc_LG}.

Then we can approximate (see Lemma~\ref{lem:Quantumgate})
\be\label{eq:c_channel}
\exp\left( \mathcal{L}_{a,\alpha}^{\beta,r} \tau \right) \approx \mathcal C^{a,\alpha}_{r,\tau}(\rho),
\ee
where we introduced the quantum channel
\be
\mathcal C^{a,\alpha}_{r,\tau}(\rho):= \text{Tr}_{\text{anc}}\left(U^{a,\alpha}_r(\rho \otimes |0\rangle\langle 0|_{\text{anc}})U^{a,\alpha\dagger}_r\right),
\ee
where \(\text{Tr}_{\text{anc}}(\cdot)\) denotes the partial trace over the ancilla qubit and the unitary \(U^{a,\alpha}_r\) describes the evolution over a time interval \(\sqrt{\tau}\) as
\begin{equation}\label{eq:U_time_evo}
U^{a,\alpha}_r = \exp(-iO^{a,\alpha}_r\sqrt{\tau}).
\end{equation}

The channel described by Eq.~\eqref{eq:c_channel} can be implemented using the following procedure:
$$
\begin{array}{c}
\Qcircuit @C=1em @R=1.5em {
    & \lstick{|0\rangle} & \multigate{1}{\quad U^{a,\alpha}_r\quad} & \qw & \rstick{\text{Discard.}} \qw \\
    & \lstick{\rho}      & \ghost{\quad U^{a,\alpha}_r\quad}        & \qw & \qw
}
\end{array}
$$
This subcircuit acts on a single ancilla qubit initialized in the state $|0\rangle$ and the subsystem supporting the jump operators. In architectures with a limited number of ancilla qubits, reset operations can be utilized to obtain a fresh supply of ancilla qubits.

 This single-ancilla based construction can be used to approximate the Trotter steps up to an additional error comparable to the Trotter error itself, as follows from the following lemma.
\begin{lem}\label{lem:Quantumgate} The channel in Eq.~\eqref{eq:c_channel} satisfies
\be\label{eq:gadget_error}
\left\|\mathcal C^{a,\alpha}_{r,\tau}-\exp(\tau \mathcal{L}_{a,\alpha}^{\beta,r})\right\|_\diamond = O(\tau^2).
\ee
\end{lem}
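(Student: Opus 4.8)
The plan is to show that the channel $\mathcal C^{a,\alpha}_{r,\tau}$ reproduces the first-order term of the exponential $\exp(\tau\mathcal L^{\beta,r}_{a,\alpha})$ exactly, so that the remaining discrepancy is $O(\tau^2)$ in diamond norm. First I would expand $U^{a,\alpha}_r = \exp(-i O^{a,\alpha}_r\sqrt\tau)$ to second order in $\sqrt\tau$: writing $O \equiv O^{a,\alpha}_r$ (a bounded Hermitian operator on system $\otimes$ ancilla with norm $O(1)$ for fixed $r$, since $L^{\beta,r}_{a,\alpha}$ and $G^{\beta,r}_{a,\alpha}$ are supported on $B_a(r)$ and uniformly bounded — see Eq.~\eqref{eq:trunc_LG}), one has $U = I - i\sqrt\tau\, O - \tfrac{\tau}{2} O^2 + O(\tau^{3/2})$. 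Then I would substitute the block form Eq.~\eqref{eq:V_operator} and compute $\mathcal C^{a,\alpha}_{r,\tau}(\rho) = \operatorname{tr}_{\mathrm{anc}}\!\big(U(\rho\otimes\ketbra{0}{0})U^\dagger\big)$ term by term. The key bookkeeping point is that $O(\sqrt\tau)$ terms either involve an off-diagonal block acting as $L^{\beta,r}_{a,\alpha}$ on the $\ket{0}$-ancilla sector — contributing to the dissipative part after the ancilla is traced out — or a $\sqrt\tau\,G^{\beta,r}_{a,\alpha}$ diagonal block; collecting the order-$\tau$ contributions one finds exactly $-i[G^{\beta,r}_{a,\alpha},\rho] + L^{\beta,r}_{a,\alpha}\rho L^{\beta,r\dagger}_{a,\alpha} - \tfrac12\{L^{\beta,r\dagger}_{a,\alpha}L^{\beta,r}_{a,\alpha},\rho\}$, i.e. $\tau\,\mathcal L^{\beta,r}_{a,\alpha}(\rho)$. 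This is precisely the standard Stinespring-dilation identity that a Lindblad generator is the $\tau\to 0$ derivative of such a block-unitary channel; the square-root scaling of the jump block versus the linear scaling of the coherent block in Eq.~\eqref{eq:V_operator} is exactly engineered so that both appear at order $\tau$.

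Having matched the generators, I would then write $\mathcal C^{a,\alpha}_{r,\tau} = \mathrm{id} + \tau\,\mathcal L^{\beta,r}_{a,\alpha} + \mathcal R_\tau$ and $\exp(\tau\mathcal L^{\beta,r}_{a,\alpha}) = \mathrm{id} + \tau\,\mathcal L^{\beta,r}_{a,\alpha} + \mathcal R'_\tau$, and bound each remainder separately. For $\mathcal R'_\tau$, Taylor's theorem with remainder gives $\norm{\mathcal R'_\tau}_\diamond \le \tfrac{\tau^2}{2}\norm{\mathcal L^{\beta,r}_{a,\alpha}}_{\diamond}^2 \, e^{\tau\norm{\mathcal L^{\beta,r}_{a,\alpha}}_\diamond}$, which is $O(\tau^2)$ since $\norm{\mathcal L^{\beta,r}_{a,\alpha}}_\diamond = O(1)$ for fixed $r$ (it depends on $\norm{L^{\beta,r}_{a,\alpha}}$ and $\norm{G^{\beta,r}_{a,\alpha}}$, both bounded uniformly in $n$). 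For $\mathcal R_\tau$, I would use the explicit Taylor expansion of $U$ with an integral-form remainder: since $\norm{O}\sqrt\tau$ is the natural small parameter and the next-order terms in the expansion of $U(\rho\otimes\ketbra00)U^\dagger$ scale as $\tau^{3/2}$ and $\tau^2$, one needs to check that after the partial trace the leftover $\tau^{3/2}$ contributions cancel. They do: any surviving $\tau^{3/2}$ piece would have to be linear in $\sqrt\tau$ beyond the matched order and involve an odd number of off-diagonal blocks, which maps $\ket0$ to $\ket1$ on the ancilla and hence vanishes under $\operatorname{tr}_{\mathrm{anc}}$ after pairing with the $\ket0\ket0$ input; alternatively, one can note that $\mathcal C^{a,\alpha}_{r,\tau}$ is, by construction, a CPTP map depending smoothly on $\sqrt\tau$ whose value and first $\sqrt\tau$-derivative at $0$ match those of $\exp(\tau\mathcal L^{\beta,r}_{a,\alpha})$, forcing the difference to be $O(\tau)$ in the smooth parameter $\sqrt\tau$, i.e. $O(\tau^2)$ — wait, more carefully: matching value and first derivative in $\sqrt\tau$ gives $O(\tau)$ in $\sqrt\tau$, meaning $O(\tau)$; to get $O(\tau^2)$ one must also match the second $\sqrt\tau$-derivative, which is the content of the explicit order-$\tau$ computation above.

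The main obstacle is the $\tau^{3/2}$ bookkeeping: one must verify that there is no residual term of order $\tau^{3/2}$ in $\norm{\mathcal C^{a,\alpha}_{r,\tau} - \exp(\tau\mathcal L^{\beta,r}_{a,\alpha})}_\diamond$. The cleanest route is to carry the expansion of $U = \exp(-iO\sqrt\tau)$ to third order in $\sqrt\tau$ (terms up to $O^3\tau^{3/2}$), conjugate $\rho\otimes\ketbra00$, trace out the ancilla, and observe that every term with a net odd power of $\sqrt\tau$ carries an odd number of off-diagonal blocks and therefore annihilates against the $\ket0\ket0$ ancilla projector after the trace — so the expansion of $\mathcal C^{a,\alpha}_{r,\tau}(\rho)$ contains only integer powers of $\tau$, matching $\mathrm{id} + \tau\mathcal L^{\beta,r}_{a,\alpha} + O(\tau^2)$. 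Combined with a uniform bound on $\norm{O}$ (hence on all the derivative terms) that is independent of $n$ for fixed $r$, the triangle inequality $\norm{\mathcal C^{a,\alpha}_{r,\tau} - \exp(\tau\mathcal L^{\beta,r}_{a,\alpha})}_\diamond \le \norm{\mathcal R_\tau}_\diamond + \norm{\mathcal R'_\tau}_\diamond = O(\tau^2)$ then closes the argument.
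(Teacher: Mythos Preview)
Your proposal is correct and follows essentially the same approach as the paper's proof: Taylor-expand both maps, verify that the order-$\tau$ terms coincide with $\mathcal L^{\beta,r}_{a,\alpha}$, invoke the parity argument that odd powers of $\sqrt\tau$ carry an odd number of off-diagonal (ancilla-flipping) blocks and hence drop out after the partial trace against $\ketbra{0}{0}_{\mathrm{anc}}$, and bound the two remainders separately via the triangle inequality. The paper compresses all of this into an appeal to an operator-valued Taylor remainder lemma and a one-line assertion of the odd-$\sqrt\tau$ cancellation, whereas you work through the block bookkeeping more explicitly; your mid-proof hesitation about whether matching the first $\sqrt\tau$-derivative suffices is resolved correctly by the parity argument you give at the end.
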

This result was previously shown without the coherent part in Ref.~\cite{chen2024randomizedmethodsimulatinglindblad}, while a detailed proof of the lemma is provided in \cref{app:ProofQuantumgate}. The proof is based on a standard Taylor expansion argument \cite{lang1993real}. Notably, the error in Eq.~\eqref{eq:gadget_error} can be further reduced to $O(\tau^k)$ using $k$ ancilla qubits~\cite{Ding2024Simulating}.
As a corollary, the compilation of channels $\exp(\tau \mathcal{L}^{\beta,r}_{a,\alpha})$ reduces to compilation of the unitaries in Eq.~\eqref{eq:U_time_evo}. 

Compiling the unitary involves designing a circuit for a small subsystem of a $D$-dimensional lattice, encompassing $\Theta(r^D)$ qubits. According to Corollary~\ref{cor:trunc}, constructing a circuit that exactly reproduces the Gibbs distribution requires polynomial classical resources in one dimension, but grows quasipolynomially with dimension for $D > 1$. In contrast, when the goal is to estimate local observables, Corollary~\ref{col:loc_obs} shows that local channels can be constructed using only a constant number of qubits. The latter results in a highly efficient compilation process from a complexity-theoretic standpoint.

However, from a practical perspective, the runtime of standard compilers may be prohibitive, and we may want even shorter circuits, tailored to a specific architecture and gate set. To achieve this, we can \textit{variationally} compile the necessary local unitaries resulting from dilation. 
Specifically, we approximate the target unitary $U^{a,\alpha}_r$ with a parameterized template circuit $V(\bm{\theta}^{a,\alpha}_r)$, 
where $\bm{\theta}^{a,\alpha}_r$ denotes the set of variational parameters defining the ansatz circuit \cite{cerezo2021variational,Tepaske2023Optimal}. An example of such a circuit for a one-dimensional system with $r = 1$ is shown in Fig.~\ref{fig:Ansatz}(a). This ansatz employs entangling two-qubit $CZ$ gates interleaved with parameterized single-qubit gates of the form
\begin{align}\label{eq:singlequbit}
    u(\theta, \phi, \lambda) =
    \begin{pmatrix}
    \cos\left(\frac{\theta}{2}\right) & -e^{i\lambda}\sin\left(\frac{\theta}{2}\right) \\
    e^{i\phi}\sin\left(\frac{\theta}{2}\right) & e^{i(\phi+\lambda)}\cos\left(\frac{\theta}{2}\right)
    \end{pmatrix},
\end{align}
where $\theta$, $\phi$, and $\lambda$ are free parameters comprising the vector $\bm{\theta}^{a,\alpha}_r$. Further decomposition of these single-qubit gates into native hardware-specific operations can be performed straightforwardly.

To determine the optimal parameters, we use minimization
\begin{equation}\label{eqs:optimization}
\bm{\theta}^{a,\alpha}_r = \arg\min_{\bm{\theta}} C^{a,\alpha}_r(\bm{\theta})
\end{equation}
of the loss function $C^{a,\alpha}_r(\bm{\theta})$ that quantifies the discrepancy between the target and the implemented quantum channels. For systems with translational symmetry, it is sufficient to optimize the resulting operation for each unique local configuration, drastically reducing the overall number of independent optimization tasks. The choice of loss function is important to reduce classical simulation overhead. 

In this work, we consider a modified squared-Frobenius norm for the relevant matrix elements,
\begin{align}\label{eq:lossfunction}
C^{a,\alpha}_r(\bm{\theta}) := \sum_i \sum_{j \leq 2^{k-1}} \left| \left[U^{a,\alpha}_r\right]_{i,j} - \left[V(\bm{\theta})\right]_{i,j} \right|^2,
\end{align}
where $A_{ij}$ corresponds to the matrix element of matrix $A$ and $k$ is the total number of qubits in the gadget. Importantly, it is not necessary to match the entire unitary. Since the ancilla qubit is always initialized in the state $\ket{0}$ and discarded at the end, only the corresponding subspace of the unitary contributes to the final outcome.
We justify this choice of loss function more in \cref{app:loss_function_details}.

In general, representing an arbitrary quantum channel requires an exponential number of parameters in Eq.~\eqref{eqs:optimization} as a function of the radius $r$, which could in principle make the optimization costly. However, one may note that the channels appearing in Eq.~\eqref{eq:gadget_error}, for small finite evolution times $\tau$, admit accurate approximations via a Magnus expansion in terms of jump operators in Eq.~\eqref{eq:trunc_LG}. Since each jump operator can be generated with a combination of linear-depth Hamiltonian simulation circuits by construction, each jump operator and the channel itself exhibits polynomial rather than exponential complexity in terms of radius $r$. We therefore expect that, in practice, optimization landscapes are substantially more benign than in the fully generic case. This expectation is corroborated by the results in the following section, which demonstrate that comparatively low-depth circuits—significantly shallower than those typically required for generic channels—are sufficient to capture the relevant dynamics.

\section{Numerical studies}\label{sec:Numerics}

In this section, we evaluate the accuracy of the proposed Gibbs state preparation method using exact numerical simulations of two types of nonintegrable Hamiltonians: mixed-field Ising spin chain and two-dimensional transverse-field Ising model. Our results indicate that a relatively small truncation radius (\( r \leq 3 \)) and a moderate number of Trotter steps (up to 100) are sufficient to ensure rapid convergence of local observables, such as energy density and correlation functions, to their thermal expectation values as determined by exact diagonalization. These findings demonstrate that the method operates within the capabilities of what is commonly referred to as noisy intermediate-scale quantum (NISQ) devices. In the following section, we will examine how various noise levels affect these simulations.

The Hamiltonian of the mixed‐field Ising spin chain is
\begin{align}\label{eq:MFI}
  H = \sum_{(i,j) \in \mathcal S_1} S^z_i S^z_{j}
    + g \sum_{i=1}^{n} S^x_i
    + h \sum_{i=1}^{n} S^z_i,
\end{align}
where $\mathcal S_1$ is a set of nearest neighbors on a one-dimensional ring with periodic boundary conditions,  $S^x_i := \frac 12 X_i$ and  $S^z_i := \frac 12 Z_i$ are spin-1/2 operators. We choose \(g=(\sqrt{5}+5)/8\approx0.9045\), and \(h=(\sqrt{5}+1)/4\approx0.809\). These parameters ensure no single term dominates the energy spectrum, placing the system deep in the nonintegrable (chaotic) regime, with ballistic operator spreading and rapid entanglement growth~\cite{Kim_2011Ballistic}.  Such strong transport dynamics aims to shorten the thermalization time. In \cref{sec:Further models}, we also show that our truncated Lindbladian method remains effective for examples of integrable models and systems with $U(1)$ symmetry. 

The two-dimensional transverse-field Ising model is given by
\begin{align}\label{eq:2dIsing}
    H = \sum_{(i,j) \in \mathcal S_2} S_i^zS^z_j+g'\sum_{i=1}^n S^x_i,
\end{align}
where $\mathcal S_2$ contains the nearest neighbors on a square lattice with periodic boundary conditions. We choose $g' = 0.2$. The model is known to exhibit a thermal phase transition~\cite{Hesselmann_2016}.

We choose the details of the dynamics as follows. First of all, we study quantum Gibbs state preparation starting from a random product state. This choice is equivalent to setting $\rho(0) = \rho_{\beta = 0}$, i.e. as the maximally mixed, infinite-temperature state. Thus, one may consider it as an energy-reducing, ``cooling" process. The non-unitary part of dynamics is generated by the transformed Pauli $X$, $Y$, and $Z$ operators acting on each qubit (see Eq.~\eqref{jump_operator_choice}) constructed following Eq.~\eqref{eq:trunc_LG} for a fixed truncation radius $r$.  Unless stated otherwise, we use a Gaussian envelope function $q(\nu)$ defined in Eq.~\eqref{eq:Gauss}.

As the target of our numerical study, we analyze how different parameter choices (e.g. Trotter step or truncation radius) impact the performance of the algorithm. As the main precision metric, we use the Gibbs state energy density and its the deviation from the true value,
\begin{align}\label{eq:deltaE}
&E(t) := \frac{1}{n} \tr(\rho(t) H)\nonumber\\
&\Delta E(t) := \frac{1}{n} \Bigl| \tr(\rho(t) H) - \tr(\rho_\beta H) \Bigl|,
\end{align}
where $\rho(t)$ is the result of the evolution
\begin{equation}
\rho(t) = \mathcal{E}_{t,\tau}(\rho(0)),
\end{equation}
with $\mathcal{E}_{t,\tau}$ defined in Eq.~\eqref{eq:average} for a fixed Trotter step $\tau$, and time taking discrete values $t = M\tau$.

We also analyze the algorithm's performance in reproducing  other local observables. To test this, we consider the correlation function
\be\label{eq:corrf_def}
\delta(\bm{a},t) := \tr[\rho(t) S^z_{a_1}S^z_{a_2}]-\tr[\rho(t) S^z_{a_1}]\tr[\rho(t) S^z_{a_2}].
\ee
where $\bm{a} = (a_1,a_2)$. This function characterizes the emergence of spin-spin correlations within the system. Since the one-dimensional mixed-field Ising model does not exhibit an equilibrium ferromagnetic phase, we expect the correlator to decay exponentially with distance, i.e., \( \delta(\bm{a}, t) \sim \exp(-\ell(a_1,a_2) / l_0) \), where \( \ell(a_1,a_2) \) is the spatial separation between $a_1$ and $a_2$ and \( l_0 > 0 \) denotes the correlation length.

In order to detect the phase transition in the two-dimensional model, we analyze the heat capacity
\begin{align}\label{eq:heatcapacity}
    C_{\beta}:= \beta^2 \left(\tr[\rho(t)H^2]-\tr[\rho(t)H]^2\right).
\end{align}
In Sections~\ref{sec:trunc_numer}--\ref{subsec:Envelope}, we study the performance of the algorithm without accounting for compilation errors. 
In Sections~\ref{sec:compile_gadget} and \ref{sec:noise_numer}, we further compile each gadget into two-qubit gates, assuming the architecture illustrated in Fig.~\ref{fig:Ansatz}. In all cases, we compare the results to numerical values obtained via exact diagonalization, i.e.\ the ``ground truth''.

\subsection{Effect of truncation}
\label{sec:trunc_numer}

 \begin{figure*}[t!]
\centering
\includegraphics[width=1.0\textwidth]{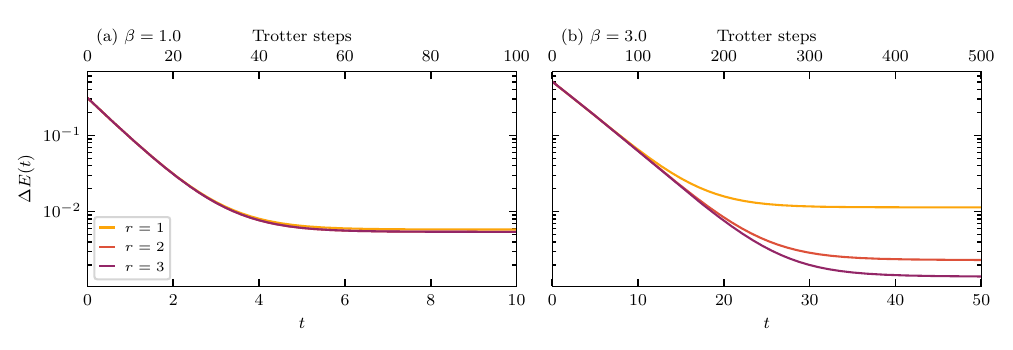}
\caption{
  \textbf{Effect of truncation on time convergence.}
  Absolute error in the internal energy density $\Delta E(t)$ in Eq.~\eqref{eq:deltaE} for the mixed-field Ising Hamiltonian in Eq.~\eqref{eq:MFI} on a chain of $n=12$ spins. Dynamics are generated by direct mixed-state simulations of the Trotterized channel Eq.~\eqref{eq:trotterized} with Trotter step $\tau = 0.1$, for a system initialized in the maximally mixed state. The $x$-axis markers on the bottom denote the physical time $t$, while those at the top indicate the number of Trotter steps $M = t/\tau$. Curves correspond to truncation radii $r=1$ (yellow), $r=2$ (orange), and $r=3$ (red). 
  (a) Inverse temperature $\beta = 1$. 
  (b) Inverse temperature $\beta = 3$.
}
\label{fig:Plot_time_conv}
\end{figure*}

\Cref{fig:Plot_time_conv} shows how our protocol cools the system from an infinite-temperature initial state to thermal states at two values of temperature using a Trotter step $\tau=0.1$.  At the higher temperature ($\beta=1$), the internal energy already achieves best possible value with the relative error $\Delta E(t)/|E(t)| \sim 10^{-2}$ for the minimal truncation radius $r=1$, and increasing $r$ yields no noticeable improvement. By contrast, at the lower temperature ($\beta=3$), the convergence depends strongly on $r$. We attribute this behavior to the enhanced spin–spin correlations in the mixed-field Ising model, which become more significant at lower energies. At the same time, the cooling rate, as measured by the slope of the exponential decay at early times, is largely insensitive to the truncation radius.
 A bigger picture across temperatures is shown in Fig.~\ref{fig:Plot_temp_conv}. These plots illustrate that the energy convergence is nonmonotonic across temperatures but rapidly converges with respect to truncation radius at low temperatures. 
 
The rapid convergence with respect to the truncation radius is further demonstrated by local observables other than energy density. In Fig.~\ref{fig:Longrange}, we present the two-point correlation function at $\beta = 3$. While the minimal truncation radius ($r = 1$) yields noticeable deviations from the expected spatial profile, extending the radius to $r = 3$ suffices to accurately reproduce the correct space correlation profile.

These plots indicate that the radius $r$ can be employed as a systematic control parameter, which is increased until the observable of interest converges to a stable value. The results further show, as expected, that lower temperatures require larger values of $r$ to achieve convergence.

 \begin{figure*}[t!]
\centering
\includegraphics[width=1.0\textwidth]{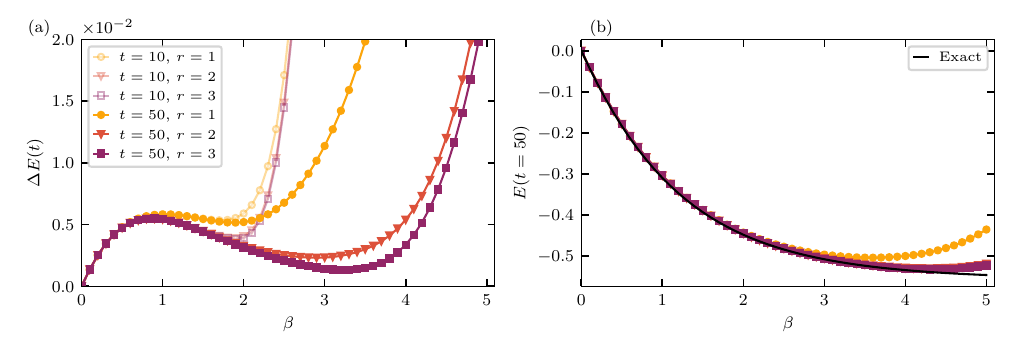}
\caption{
  \textbf{Energy convergence across temperatures}.
  Absolute error in the energy density $\Delta E(t)$ in Eq.~\eqref{eq:deltaE} for the same simulation as in Fig.~\ref{fig:Plot_time_conv}, plotted as a function of inverse temperature $\beta$.
  (a) Error $\Delta E(t)$ at $t = 10$ and $t = 50$ for truncation radii  $r=1$ (yellow), $r=2$ (orange), and $r=3$ (red). The pronounced rise at large $\beta$ reflects mixing times that exceed $t_{\mathrm{mix}} > 50$. 
  (b) Absolute energy density for $t = 50$ and various truncation radii demonstrating convergence toward the true energy density (black curve) that reaches groundstate value $E_0 / n = -0.557$ in the low-temperature (high-$\beta$) regime.
}

\label{fig:Plot_temp_conv}
\end{figure*}

\begin{figure}[t!]
\includegraphics[width=0.5\textwidth]{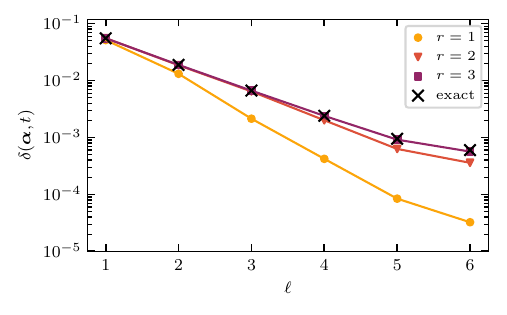}
\caption{
  \textbf{Spatial profile of the two-spin correlators.}
  Two-point correlator \(\delta(\mathbf{a},t)\) in Eq.~\eqref{eq:corrf_def} for spin pair $\mathbf{a} = (n/2,n/2+\ell)$, as a function of spin separation $\ell$ for the mixed-field Ising chain  in Eq.~\eqref{eq:MFI} at inverse temperature \(\beta = 3\) and time \(t = 10\). Curves correspond to truncation radii \(r = 1\) (yellow), \(r = 2\) (orange), and \(r = 3\) (red), while black crosses denote the exact results. The accuracy of the exponential decay of correlations improves with truncation radius.
}
\label{fig:Longrange}
\end{figure}

\subsection{Effect of Trotterization}

A critical parameter in our protocol is the choice of the Trotter time-step, $\tau$. As shown in Fig.~\ref{fig:PlotTrotter}, for parameter regimes relevant to near-term quantum processors ($\beta = 1$, $r = 1$), the Trotterization error constitutes the dominant contribution to the total simulation error. By reducing $\tau$ one can suppress this error by multiple orders of magnitude; intriguingly, however, over the range of $\tau$ values we examined the error does not display clear convergence even up to $\tau\sim10^{-3}$, implying that truncation artifacts remain subdominant. Unfortunately, a smaller $\tau$ necessarily increases circuit depth, which is severely constrained on noisy hardware without full fault tolerance. We speculate that intermediate, early fault-tolerant architectures---where available qubit counts are the primary limitation, rather than circuit depth---will provide an ideal testbed for these Gibbs-state preparation methods.

\begin{figure}[t!]
\includegraphics[width=0.5\textwidth]{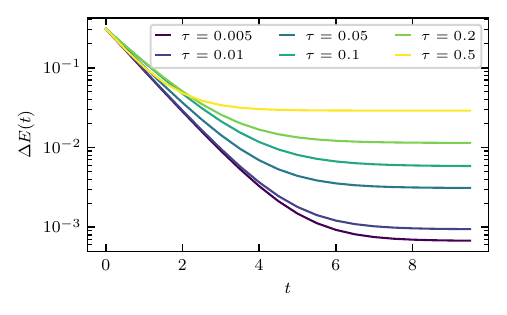}
\caption{
  \textbf{Trotter step dependence.}
  Error in the thermal‐state preparation protocol for the mixed‐field Ising Hamiltonian [Eq.~\eqref{eq:MFI}] at inverse temperature $\beta = 1$, plotted as a function of the Trotter time step $\tau$ for truncation radius $r = 1$. The rapid decrease in error with smaller $\tau$ confirms that Trotterization is the primary source of protocol inaccuracy.
}
\label{fig:PlotTrotter}
\end{figure}

\subsection{Effect of the envelope function choice}
\label{subsec:Envelope}

The envelope function \(q(\nu)\) in Eq.~\eqref{eq:envelope} provides an additional degree of freedom for tuning the precision of the Gibbs‐state preparation protocol. Aside from the constraint \(q(-\nu)=q^*(\nu)\), no further analytic restrictions apply, yet the choice of \(q(\nu)\) markedly affects performance. Here, we compare the default Gaussian envelope of Eq.~\eqref{eq:envelope} against two alternatives:
\begin{subequations}
\begin{align}
    q(\nu) &= 1, \label{eq:constant}\\
    q(\nu) &= \exp\biggl(-\frac{\sqrt{1+(\beta\nu)^2}}{4}\biggr), \label{eq:Metropolis_hasingsfilter}
\end{align}
\end{subequations}
where Eq.~\eqref{eq:Metropolis_hasingsfilter} represents a smoothed Metropolis–Hastings filter introduced in Ref.~\cite{ding2024efficientquantumgibbssamplers}, and Eq.~\eqref{eq:constant} is a constant (flat) envelope.  In each case, we additionally renormalize the jump operators to preserve their Frobenius norm relative to the Gaussian filter (see \cref{subsec:Normalization} for details). 
Although a flat filter formally leads to a divergent filter function $f(t)$, the jump operators and coherent terms in Eq.~\eqref{eq:trunc_LG} remain well defined operator on the support of $H_{a,r}$. The numerical tests on local observables show that it surprisingly yields the highest precision among the three envelopes. 

  In Fig.~\ref{fig:Filterfunction} we compare the cooling dynamics and residual errors: the flat envelope \(q(\nu)=1\) delivers the fastest initial cooling and the smallest asymptotic error, whereas the smoothed Metropolis–Hastings envelope exhibits the slowest convergence and largest errors, with this discrepancy growing at larger inverse temperatures, see Fig.~\ref{fig:Filterfunction}(b).  These findings indicate that the envelope choice is a powerful lever for optimizing Gibbs‐state preparation and merits a systematic study in future work.

\begin{figure}[t!]
\includegraphics[width=1.0\columnwidth]{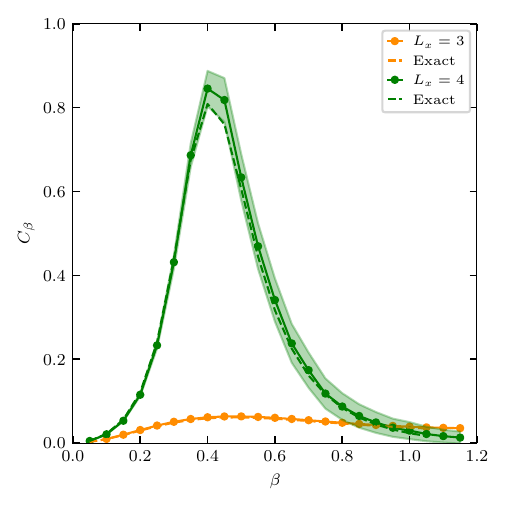}
\label{fig:heatcapacity}
\caption{\textcolor{black}{\textbf{Signatures of thermal phase transition in 2D transverse field Ising model.} The heat capacity $C_\beta$~[cf. Eq.~\eqref{eq:heatcapacity}] as a function of inverse temperature $\beta$ for the 3$\times$3 (orange) and 4$\times$4 (green) square lattices with periodic boundary conditions. Comparison between exact data~(dashed lines) and simulation~(dots, solid lines). The results for the 4$\times$4 lattice were obtained using sampling over 16 384 state-vector simulations, the shaded region indicates one standard deviation. The results show that our protocol is  able to reproduce the peak in the heat capacity for finite systems.}}
\end{figure}

\begin{figure*}[t!]
\includegraphics[width=1.0\textwidth]{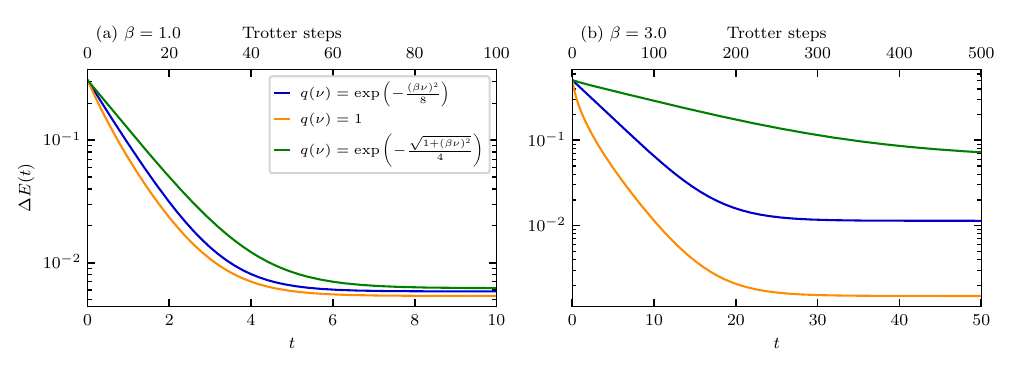}
\caption{\textbf{Impact of the envelope function.} Dynamics of our thermal‐state preparation protocol with Trotter time‐step \(\tau = 0.1\). For this plot only, jump operators and coherent terms in all cases are renormalized to Frobenius norm of the Gaussian filter for a consistent comparison, see \cref{subsec:Normalization}. (a) At inverse temperature \(\beta = 1\), the flat envelope \(q(\nu)=1\) achieves the fastest convergence and lowest steady‐state error. (b) At \(\beta = 3\), the performance gap widens, highlighting that an optimal choice of \(q(\nu)\) can dramatically accelerate the performance of the Gibbs sampler.}
\label{fig:Filterfunction}
\end{figure*}

\begin{figure*}[t!]
\includegraphics[width=1.0\textwidth]{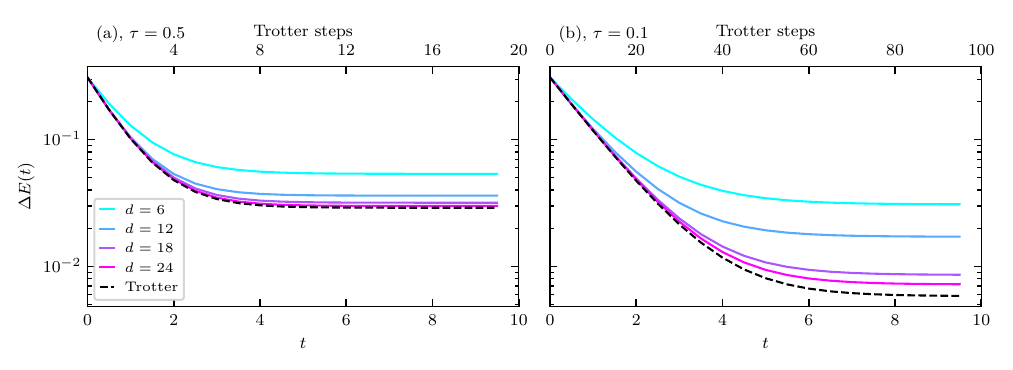}
\caption{\textbf{Compilation error.}
Error in circuit implementation using the randomized compilation strategy described in Fig.~\ref{fig:Trotterization}(b), where each elementary operation is approximated using the template from Fig.~\ref{fig:Ansatz}(b). Results are shown for circuit depths $d = 6, 12, 18$, and $24$, corresponding to $m = 2, 4, 6$, and $8$ modules, respectively. The results are compared to Trotterized evolution with no compilation error (dashed curve). Simulations are performed at inverse temperature $\beta = 1.0$ and final time $t = 10$. Panels show results for Trotter step sizes: (a)~$\tau = 0.5$ and (b)~$\tau = 0.1$. The results show that for $\tau=0.5$, a depth of $d=12$ suffices to render the compilation error negligible compared to the Trotterization error. For the smaller step size $\tau=0.1$, a depth of $d=18$ yields an energy density error of approximately $\Delta E(t) \sim 10^{-2}$ on a 12-qubit system.}
\label{fig:CompilationResults}
\end{figure*}

\subsection{Two-dimensional system}
\label{sec:2d_sim}

Finally, we present results for the two-dimensional transverse-field Ising model defined in Eq.~\eqref{eq:2dIsing}, demonstrating that our protocol is capable of reproducing thermal phase transitions in finite systems. Specifically, we compute the heat capacity $C_\beta$ [cf.\ Eq.~\eqref{eq:heatcapacity}] for simulations on $3\times 3$ and $4\times 4$ lattices. For the $4\times 4$ system, we numerically implement the protocol by averaging over $16\,384$ stochastic trajectories. We fix the truncation radius to $r=1$, choose a Trotter step size $\tau=0.1$, and evolve the system up to a final time $t=20$.

In these simulations, we empirically find that replacing the commonly used filter function in Eq.~\eqref{eq:Gauss} with the temperature-independent Gaussian filter
\begin{align}
	q(\nu) = \exp\!\left(-\frac{\nu^2}{2}\right)
\end{align}
leads to significantly improved convergence toward the exact results at low temperatures. The underlying intuition is that the filter function must retain frequency components comparable to the Hamiltonian gap, which may lie outside the passband of standard temperature-dependent filters in the low-temperature regime.

The resulting heat-capacity data are shown in Fig.~\ref{fig:heatcapacity}. The presence of a broad peak in $C_\beta$ signals a crossover between the ferromagnetic and paramagnetic phases. These results demonstrate that the presence of a phase-transition boundary does not pose a fundamental limitation to the present algorithm.

\section{Hardware-aware simulations}
\label{sec:hardware_sim}

In \cref{sec:Numerics} we characterized the performance of the exact Trotterized channels, neglecting any overhead from compilation and hardware noise. In this section, we incorporate realistic circuit‐compilation constraints and noise models to assess their impact on our Gibbs‐state preparation protocol.

\subsection{Effect of circuit compilation}
\label{sec:compile_gadget}

Our objective is to show that thermalizing Lindbladian evolution can be approximated with a handful of circuits with a manageable two-qubit gate count compatible with the current noisy quantum devices.

The hardware compilation procedure proceeds as follows. We begin by approximating the Lindblad evolution via Eq.~\eqref{eq:average}. Each of these circuits is constructed by substituting every elementary exponential in Eq.~\eqref{eq:channelrandomvariable} with its approximate representation given in Eq.~\eqref{eq:c_channel}. The unitaries $U^{a,\alpha}_r$ are compiled using circuit templates illustrated in Fig.~\ref{fig:Ansatz}(b), where the parameters are optimized via the Adam algorithm~\cite{adam_kingma_2014} to minimize the loss function defined in Eq.~\eqref{eqs:optimization}. We investigate circuit architectures comprising $m = 2, 4, 6,$ and $ 8$ modules, which correspond to two-qubit gate depths of $d = 6, 12, 18,$ and $24$, respectively. For each configuration, the optimization is performed over 50 random parameter initializations, with 8000 optimization steps per instance. The parameters yielding the lowest final loss are selected for compilation. Further implementation details are provided in Section~\ref{optimiz_scetion}.

To isolate the effects of gate compilation, we compare the time dynamics of the compiled circuits to those of the ideal Trotterized evolution, which employs the exact unitary $U^{a,\alpha}_r$ in Eq.~\eqref{eq:U_time_evo}. We assess the performance of the thermal state preparation protocol at inverse temperature $\beta=1$ for two Trotter step sizes, $\tau=0.5$ and $\tau=0.1$. The results, shown in Fig.~\ref{fig:CompilationResults}, indicate that for $\tau=0.5$, a depth of $d=12$ suffices to render the compilation error negligible compared to the Trotterization error. For the smaller step size $\tau=0.1$, a depth of $d=18$ yields an energy density error of approximately $\Delta E(t) \sim 10^{-2}$ on a 12-qubit system. These depths are well within the capabilities of current experimental platforms~\cite{kim2023evidence} and are significantly lower than the estimated lower bound of $d=64$ for generic four-qubit gate synthesis~\cite{Shende_2006}.

\subsection{Effect of noise}
\label{sec:noise_numer}

\begin{figure*}[t!]
\includegraphics[width=1.0\textwidth]{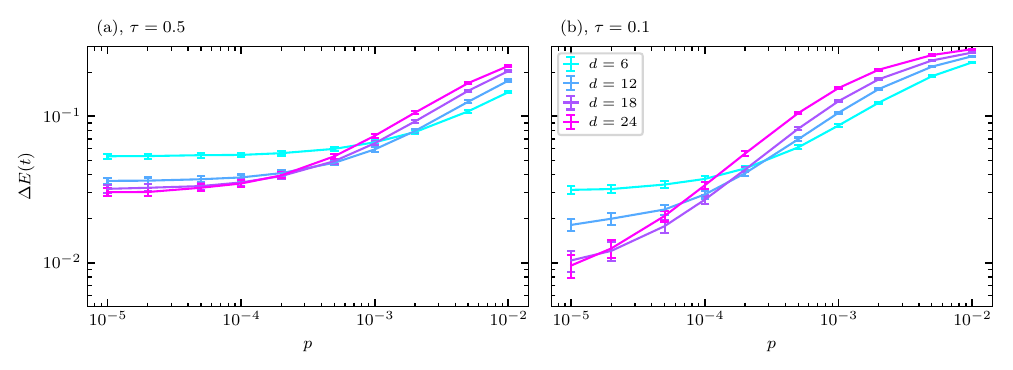}
\caption{\textbf{Effect of noise.} Accuracy of thermal state preparation as a function of the depolarizing noise rate $p$, for inverse temperature $\beta = 1$ and evolution time $t = 10$. Error bars reflect the statistical error arising from $10^3$ random circuits and 1024 shots per circuit. Results are shown for Trotter step sizes: (a)~$\tau = 0.5$ and (b)~$\tau = 0.1$. For target noise rates on the order of $p \sim 10^{-4}$, energy density errors on the order of $\sim 10^{-2}$ are achievable.}
\label{fig:Noisy}
\end{figure*}

Finally, we analyze the influence of noise on the accuracy of the simulations. To emulate realistic conditions, we assume that each $k$-qubit unitary gate is followed by a $k$-qubit depolarizing noise channel,
\begin{align}
    \mathcal{N}_k(\rho) := (1 - p_k)\rho + \frac{p_k}{4^k - 1} \sum_{i=1}^{4^k - 1} P^{(k)}_i \rho P^{(k)}_i,
\end{align}
where $P^{(k)}_i$ denote the $4^k - 1$ non-identity Pauli operators acting on $k$ qubits, and $p_k$ is the uniform $k$-qubit noise rate. In our simulations, we consider only one- and two-qubit gates, and set $p_1 = 0.1\,p$ and $p_2 = p$, where $p$ is a global noise parameter. This reflects the empirical observation that two-qubit gates are typically the dominant source of error in current quantum hardware, whereas single-qubit gates have an order of magnitude less error.

Each data point in our analysis includes statistical uncertainty arising from limited number of $10^3$ independently generated circuits corresponding to random trajectories $\bm{\alpha} \sim \Sigma^{\otimes n}$ and a finite number of measurement shots, with 1024 shots per circuit. This amounts to approximately $10^6$ shots per data point run on a hypothetical quantum processor.

The simulation results are presented in Fig.~\ref{fig:Noisy}. We observe a distinct crossover in the accuracy as a function of the noise rate $p$, with the crossover point $p_0$ dependent on the Trotter step size $\tau$. For noise levels $p > p_0$, the error is dominated by noise, and increasing the circuit depth $d$ leads to worse performance. Conversely, in the regime $p < p_0$, the primary contribution to the error originates from imperfect compilation, which improves with increasing depth. This crossover behavior is transient: as the circuit depth increases, $p_0$ shifts toward lower values, consistent with the expectation that compilation error vanishes in the large-depth limit, leaving noise as the sole limiting factor. Our results further suggest that for target noise rates on the order of $p \sim 10^{-4}$, energy density errors on the order of $\sim 10^{-2}$ are achievable.

\section{\label{sec:Discussion}Discussion}

This work introduces a framework for quantum Gibbs state preparation using local quantum circuits, extending the qQCMC framework developed in Refs.~\cite{chen2023efficient,ding2024efficientquantumgibbssamplers}. By integrating operator truncation, Trotterization, and variational circuit compilation, we demonstrate that the protocol can be adapted for implementation on near-term quantum hardware. Our results show that the steady state of the truncated Lindbladian closely approximates the true Gibbs state, both in trace distance and in expectation values of local observables, with errors that can be systematically reduced by increasing the truncation radius. In addition, the randomized Trotterization strategy ensures low circuit depth while preserving theoretical convergence guarantees.

A key strength of our approach lies in its flexibility. The truncation scheme and circuit ansatz can be adapted to arbitrary lattice geometries and hardware topologies, making the method broadly applicable beyond the one-dimensional systems studied numerically in this work. Moreover, the use of variationally optimized circuit gadgets enables efficient compilation even under restricted gate sets and limited qubit connectivity. This makes the protocol more practical for experimental implementation compared to earlier proposals~\cite{chen2023efficient,ding2024efficientquantumgibbssamplers}, especially in settings where only local operators and a small number of ancilla qubits are available.

Several promising directions for future investigation emerge from this study. One natural question concerns the scaling of the required truncation radius at low temperatures, particularly near phase transitions. While our current focus is on the high-temperature regime, the protocol may be extensible to weakly interacting fermionic systems at arbitrary temperatures, in light of recent theoretical advances~\cite{vsmid2025polynomial,tong2024fast}. Another important consideration is noise: although our numerical results suggest robustness to moderate levels of depolarizing noise, further improvements may be possible by incorporating noise models directly into the gadget optimization process.

Additionally, our investigation of different filter functions reveals that algorithmic performance can be substantially improved through appropriate parameter tuning. A systematic understanding of how performance depends on the choice of filter remains an open question. Recent work on quasi-particle cooling~\cite{Quasiparticle2025Lloyd} suggests that algorithmic efficiency may depend sensitively on the structure of low-energy excitations. Our results pave the way for studying this dependence in greater depth and may ultimately offer new insights into the nature of thermalization processes.

Finally, an important open question is whether this approach can provide a viable route to quantum advantage on near-term or early fault-tolerant setting. Specifically, future work must establish whether low-noise, finite-size quantum processors can achieve sample complexity that surpasses both established classical algorithms for simulation of quantum thermal states \cite{alhambra2023quantum} and leading tensor-network methods of the relevant Lindbladian dynamics, e.g. proposed in \cite{zhan2025rapid}.

\textbf{Acknowledgments}. We thank Alvaro Alhambra, Anirban Chowdhury, Alexander Miessen, Bibek Pokharel, Kunal Sharma, and Derek Wang for useful discussions. RS is grateful to the Alexander von Humboldt Foundation for support, via the German Research Chair program. DH acknowledges support  from the UK EPSRC under
grant EP/X030881/1, and from a Leverhulme Trust International Professorship grant (Award Number:
LIP-2020-014, for a Leverhulme-Peierls Fellowship at Oxford).

\appendix
\section{List of Notations}

\label{sec:notations}

In this Appendix, we collect the main symbols, operators, and definitions used throughout the paper. We use lowercase Latin and Greek letters for numbers, numerical functions, labels, and indices (e.g., $n$ or $\beta$), uppercase Greek letters for sets (e.g., $\Lambda$), uppercase Latin letters for operators (e.g., $H$), except for density matrices, which are traditionally denoted by $\rho$. Calligraphic uppercase letters represent superoperators (e.g., $\mathcal{L}$), while sans-serif letters denote linear spaces and algebras (e.g., $\mathsf{H}$).
\\

\subsection{General notation}

\begin{itemize}[label={}, leftmargin=5pt]
  \item $n \in \mathbb{N}$ -- Number of qubits (system size). This number does \textit{not} include ancilla qubits.
  
  \item \textbf{$\Lambda = \{a_1, \dots, a_n\}$} -- A finite $D$-dimensional square lattice with $n$ sites equipped with a graph structure.

\item \textbf{$\ell(a,b)$} -- The graph distance defined by
$$
\ell(a,b):=\min_{\Pi=\{a\to c_1,\dots,\,c_k\to b\}}|\Pi|,
$$
where the minimum is taken over all paths $\Pi$ between $a\in\Lambda$ and $b\in\Lambda$.

\item $B_a(r) = \{b \mid \ell(a,b) \leq r\}$ — The subset of lattice sites located within a ball of radius $r\geq 0$ centered at site $a$.
  
  \item \textbf{$\mathsf{H}_{\{a\}} \cong \mathbb{C}^2$} -- Local Hilbert space associated with qubit at site $a \in \Lambda$.
  
  \item \textbf{$\mathsf H_X = \bigotimes_{a \in X} \mathsf{H}_{\{a\}}  \cong (\mathbb{C}^2)^{\otimes \abs{X}}$}  -- Hilbert space of the qubits subset $X\subseteq \Lambda$.
  
  \item \textbf{$\mathsf M_X = \operatorname{End}(\mathsf H_X)$} -- Algebra of linear operators acting on $\mathsf H_X$ for $X\subseteq \Lambda$.

    \item $\mathsf{S}_X = \{O \in \mathsf{M}_X \mid O=O^\dag\}$ -- The set of self-adjoint (Hermitian) operators.

    \item $\mathsf{D}_X = \{\rho \in \mathsf{S}_X \mid \rho \geq 0,\ \tr \rho = 1\}$ — The set of density matrices representing mixed quantum states of the qubits.
    
  \item \textbf{$H \in \mathsf S_\Lambda$} -- Local Hamiltonian on the lattice $\Lambda$; defined in Eq.~\eqref{eq:ham}.
  
  \item $\lambda_i \in \mathbb{R}$ -- Eigenvalues (energies) of the Hamiltonian $H$, indexed by $i \in \{1, \dots, 2^n\}$.
  
  \item $P_i \in \mathsf S_\Lambda$ -- Projector onto the energy eigenspace of eigenvalue $\lambda_i$, satisfying $P_i H P_i = \lambda_i P_i$.
  
  \item \textbf{$\beta \in \mathbb{R}_{\geq 0}$} -- Inverse temperature.
  
  \item \textbf{$\rho_\beta \in \mathsf D_\Lambda$} -- Gibbs (thermal) state associated with $H$ and inverse temperature $\beta$, see Eq.~\eqref{Eq:Gibbs state}.

  \item  $\mathsf F_\Lambda^{(k,l)}\subset \mathsf S_\Lambda$ -- Family of $(k,l)$-local Hamiltonians on lattice $\Lambda$ being a sum of terms that has no support on more than $k$ sites and  each site appears on at most $l$ terms; see Definition~\ref{def:kl_local}.

\end{itemize}

\subsection{Norms and distances}

\begin{itemize}[label={}, leftmargin=5pt]
  
  \item $\|\cdot\|_1$ -- Schatten 1-norm (trace norm) on operators. For $O \in \mathsf M_\Lambda$, it is defined by
  \begin{equation}
  \|O\|_1 := \operatorname{Tr} \sqrt{O^\dagger O}.
  \end{equation}
\item $\|\cdot\|_\infty$ -- $\infty$-norm (operator or spectral norm) on operators. For $O \in \mathsf M_\Lambda$, it is defined by
  \begin{equation}\label{eq:inf_norm}
  \|O\|_\infty := \sup_{\norm{v}=1} \norm{O v}.
  \end{equation}
  Here $v\in \mathsf H_\Lambda$ and $\norm{v}=\sqrt{\braket{v|v}}$. 
\item $\|\cdot\|_\diamond$ — Diamond norm on superoperators. For $\mathcal E: \mathsf M_\Lambda \to \mathsf M_\Lambda$ it is defined as
\begin{equation*}
\left\| \mathcal E \right\|_\diamond := \max_{X : \left\| X \right\|_1 \leq 1} \left\| \left( \mathcal E \otimes \mathcal I \right) X \right\|_1, \notag
\end{equation*}
where $\mathcal I: \mathsf M_\Lambda \to \mathsf M_\Lambda$ is the identity operator, and the maximum is taken over $X\in \mathsf M_\Lambda^{\otimes 2}$.
\item  $\vertiii{X}$---oscillator norm
\begin{align*}
\vertiii{X}:=\sum_{a\in\Lambda}\|\delta_a(X)\|_\infty\,,
\end{align*}
where $\delta_a(X):=X-\frac{1}{2}I_a\otimes \operatorname{tr}_a(X)$
\item $\|\cL\|_{p\rightarrow p}$: Induced $p-p$ norm on a superoperator $\cL$:
\begin{equation*}
    \|\cL\|_{p\rightarrow p}=\sup_{O} \frac{\norm{\cL[O]}_p}{\norm{O}_p}
\end{equation*}
\end{itemize}
\subsection{Dissipative dynamics}

\begin{itemize}[label={}, leftmargin=5pt]
  \item $t \in \mathbb{R}_{\geq 0}$ -- Continuous time.
  
  \item \textbf{$\mathcal{L}^\beta: \mathsf M_\Lambda \to \mathsf M_\Lambda$} -- Lindbladian superoperator governing the dissipative dynamics that prepares the Gibbs state $\rho_\beta$; see Eq.~\eqref{eq:LindbladLin}.
  
  \item \textbf{$A^{a,\alpha} \in \mathsf M_\Lambda$} -- Single-qubit operator acting on site $a$; chosen to be Pauli operators, see Eq.~\eqref{eq:jump}.
  
  \item $\alpha \in \mathbb{N}$ or $\mathbb R$ -- Discrete index labeling jump operators defined within the neighborhood of qubit $a\in \Lambda$.
  
  \item $f(t),\, q(\nu): \mathbb{R} \to \mathbb{C}$ -- Filter and envelope functions, respectively; see Eq.~\eqref{eq:envelope}.
  
  \item $H_{a,r} \in \mathsf S_\Lambda$ -- Truncated Hamiltonian for the neighborhood of qubit $a$ with radius $r$; see Eq.~\eqref{eq: trunc_ham}.
  
  \item $\mathcal{L}^{\beta,r}: \mathsf M_\Lambda \to \mathsf M_\Lambda$ -- Truncated Lindbladian by the radius $r$; see Eq.~\eqref{eq:trunc_lindbl}.

  \item $\rho_{\beta,r}$ : steady state of truncated Lindbladian $\mathcal{L}^{\beta,r}$, introduced in Theorem~\ref{thrm:trunc} 
  \item $t_{\mathrm{mix}} \in \mathbb{R}_+$ -- Mixing time, i.e. minimum time after which the distance between any two states contracts by at least a certain fixed fraction; see Definition~\ref{dfn:mixing_time}.
\end{itemize}

\vspace{0.5em}
\subsection{Trotterization and compilation:}

\begin{itemize}[label={}, leftmargin=5pt]
  \item $\tau \in \mathbb{R}_+$ -- Trotter step size (time dicretization). Introduced in Section~\ref{subsec:Randomizedcompilation}.
  
  \item $M \in \mathbb{N}$ -- Number of Trotter steps, such that $t = M\tau$. Introduced in Section~\ref{subsec:Randomizedcompilation}.
  
  \item $\mathcal{C}^{a,\alpha}_{r,\tau}: \mathsf M_\Lambda \to \mathsf M_\Lambda$ -- Quantum channel that approximates \(\exp(\tau \mathcal{L}_{a,\alpha}^{\beta,r})\) using one ancilla qubit and a unitary operation. Introduced in Eq.~\eqref{eq:c_channel}.
  
  \item $\mathcal{E}_{t,\tau}: \mathsf M_\Lambda \to \mathsf M_\Lambda$ -- Randomized Trotterization channel for evolution time $t$ compiled as a sequence of $M = t/\tau$ Trotterization gadgets with Trotter step $\tau$. Introduced in Eq.~\eqref{eq:channelrandomvariable}.
  
  \item $m\in \mathbb N$ -- Number of repeating depth-3 modules in the gadget compilation circuit. Introduced in Fig.~\ref{fig:Ansatz}.
\end{itemize}

\subsection{List of functions}

\begin{itemize}[label={}, leftmargin=5pt]

  \item $f(t),\, q(\nu): \mathbb{R} \to \mathbb{C}$ -- Filter and envelope functions, respectively; introduced in Eq.~\eqref{eq:envelope}.
\item $g_1(t)$, $g_2(t): \mathbb{R} \to \mathbb{C}$ -- Kernel functions in time domain, introduced in Eq.~\eqref{eq:coherenttime}.
\item $g(\nu_1,\nu_2): \mathbb{R} \to \mathbb{C}$ -- Combined kernel function for the coherent term in frequency domain, introduced in Eq.~\eqref{eq:G_kernel}
\item $\Delta(r_0)$ function to bound the quasilocality of the Lindbladian, introduced in Eq.~\eqref{eq:defDelta}.
\item $\eta(\beta)$ function to bound the distance of the Lindbladian $\cL^\beta$ to the Lindbladian $\cL^{\beta=0}$, introduced in Eq.~\eqref{eq:defeta}. The analogous function $\eta^\prime(\beta)$ for the truncated Lindbladian $\cL^{\beta,r}$ is introduced in Eq.~\eqref{eq:kappaboundprime}. 
\item  $k(r_0,\beta)$, function to bound the decay of the oscillator norm for the Lindbladian $\cL^\beta$, introduced in Eq.~\eqref{eq:kappabound}. The modified version to characterize the oscillator norm decay for the truncated Lindbladian $\cL^{\beta,r}$, $k'(r_0,\beta)$, is introduced in Eq.~\eqref{eq:kappaboundprime}.
\item $u(r_0)$, function appearing in definition of $k(r_0,\beta)$, defined in Eq.~\eqref{eq:deff0}.The analogous function $u'(r_0)$ for the truncated Lindbladian $\cL^{\beta,r}$ is introduced in Eq.~\eqref{eq:kappaboundprime}. 
\item $\gamma(x)=\frac{\sqrt{x}}{1+\sqrt{x}}$, appearing in bounds for fixed point and local observable errors, first appearance above Eq.~\eqref{eq:abovegamma}.
\item $\xi_1(r,\beta J)$ and $\xi_2(r,\beta J)$ functions appearing in bounds of $\Delta(r_0)$, first defined in Eq.~\eqref{eq:defxi1} and Eq.~\eqref{eq:defxi2}. 
\item $m(|X|)\colon \mathbb{R}\to\mathbb{R}$: quantifies how the truncation-error bound for an operator’s expectation value on $X\subset\Lambda$ grows with the region size $|X|$ (see Theorem~\ref{thrm:trunc_local}).
\item $b(r)\colon \mathbb{R}\to\mathbb{R}$: characterizes the decay of that error bound as the truncation radius $r$ increases (see Theorem~\ref{thrm:trunc_local}).
\end{itemize}

\begin{widetext}

\section{Lieb-Robinson Bounds}\label{sec:Many-body quantum systems}

The error bounds for truncated Lindbladians arise from the locality structure of the underlying lattice, which enables the use of Lieb-Robinson bounds. This Appendix provides a very brief overview of the Lieb-Robinson bounds; for a more detailed introduction, see Ref.~\cite{rouze2024efficientthermalizationuniversalquantum}.

\begin{dfn}\label{def:kl_local} We define $\mathsf{F}_\Lambda^{(k,l)}\subset \mathsf S_\Lambda$ to be the family of $(k,l)$-local Hamiltonians on lattice $\Lambda$ such that any $H\in \mathsf{F}_\Lambda^{(k,l)}$ has the form
$$
H = \sum_{X \subseteq \Lambda} h_{X},
$$
where each term $h_X\in \mathsf S_X$ acts on $X \subseteq \Lambda$ with $|X| \leq k$, and such that each site $a \in \Lambda$ appears in the support of at most $l$ terms in the sum. That is, for all $a \in \Lambda$, the number of subsets $X$ with $a \in X$ and $h_X \ne 0$ is at most $l$.
\end{dfn}

For any Hamiltonian $H \in \mathsf{F}_\Lambda^{(k,l)}$, we define
$$
h := \max_{X} \|h_X\|_\infty
$$
as the uniform upper bound on the norm of the local terms in the Hamiltonian, where $\|\cdot\|_\infty$ is spectral norm, see Eq.~\eqref{eq:inf_norm}.

Assume that the parameters $h$, $k$, and $l$ are independent of the total number of sites $n$.
Under these assumptions, the Lieb-Robinson velocity is defined by
\begin{align}
J := \max_{a \in \Lambda} \sum_{X \ni a} |X| \, \|h_X\|_\infty = O(hkl).
\end{align}
Given a radius $r$, the growth of an operator $A^a \in \mathsf{M}_a$ supported on site $a \in \Lambda$ is bounded by~\cite{lieb1972finite,hastings2004lieb,hastings2006spectral,nachtergaele2006lieb,Haah2023Quantum}
\begin{equation}\label{eq:LiebRobinson}
    \| e^{-iHt} A^a e^{iHt} - e^{-iH_{a,r}t} A^a e^{iH_{a,r}t} \|_\infty
    \le \|A^a\|_\infty \, \frac{(2J |t|)^r}{r!},
\end{equation}
 where $H_{a,r}$ is a reduction of the Hamiltonian to the ball $B_a(r)$ of radius $r$ centered at site $a$, as defined in Eq.~\eqref{eq: trunc_ham}.

\section{Dissipative Dynamics Structure}
\label{sec:time}

This Appendix provides explicit time-domain expressions for the jump operators and the coherent term of the Lindbladian in Eq.~\eqref{eq:lindbladian_main}, which are necessary to provide the proofs in \cref{sec:tracedistanceerror,section:Gap stability}.
\subsection{The Lindbladian in time domain}
\label{subsec:TimreprLindbladians}
Let us focus our attention on the Gaussian envelope function
\begin{equation}
q(\nu) = \exp\left(-\frac{(\beta \nu)^2}{8}\right).
\end{equation}
After taking the Fourier transform, the corresponding filter function from Eq.~\eqref{eq:envelope} becomes
\begin{align}\label{eqs:f_expression}
f(t) &= \frac{1}{2\pi} \int_{-\infty}^\infty \!\rd \nu\,
\exp\left(-\frac{(\beta \nu)^2}{8}\right)
\exp\left(-\frac{\beta \nu}{4}\right)
e^{i t \nu} \nonumber \\
&= \sqrt{\frac{2}{\pi \beta^2}} \,
\exp\left( \frac{(\beta - 4 i t)^2}{8 \beta^2} \right).
\end{align}
The jump operators then have the form
\begin{align}
    L_{a,\alpha}&=\sum_{\nu\in \Omega(H)}q(\nu)e^{-\frac{\beta \nu}{4}}A^{a,\alpha}_\nu\nonumber\\&=\sum_{\nu\in \Omega(H)}\exp\left(-\frac{(\beta \nu + 1)^2 - 1}{8}\right)A^{a,\alpha}_\nu,
\end{align}
where $A^{a,\alpha}_{\nu}$ are spectral components of the operator $A^{a,\alpha}$ as defined in Eq.~\eqref{eq:proj},
and the coherent term in Eq.~\eqref{eq:coherentLin} is
\begin{align}\label{eq:G_kernel}
G_{a,\alpha}^\beta &=-\frac{i}{2}\sum_{\nu\in\Omega(H)} \tanh\left(-\frac{\beta \nu}{4}\right) \left(L^{\beta\dagger}_{a,\alpha} L^\beta_{a,\alpha} \right)_\nu\nonumber\\&= \sum_{\nu_1,\nu_2\in \Omega(H)}
g(\nu_1,\nu_2)
A^{a,\alpha \dagger}_{\nu_2}
A^{a,\alpha}_{\nu_1}.
\end{align}
where the prefactor function is given by
\begin{align}\label{eqs:g_express}
g(\nu_1,\nu_2)
&:= -\frac{i}{2}
\tanh\left(-\frac{\beta (\nu_1 - \nu_2)}{4}\right)
\exp\left(-\frac{(\beta \nu_1 + 1)^2 - 1}{8}\right)
\exp\left(-\frac{(\beta \nu_2 + 1)^2 - 1}{8}\right).
\end{align}
Introducing the shorthand $\nu_\pm = \nu_1 \pm \nu_2$, this simplifies to
\begin{align}\label{eq:kernel_functions}
g(\nu_1,\nu_2)
&= -\frac{i}{2}
\tanh\left(-\frac{\beta \nu_-}{4}\right) \exp\left(-\frac{\beta^2 \nu_+^2 + 4 \beta \nu_+}{16}\right) \exp\left(-\frac{\beta^2 \nu_-^2}{16}\right).
\end{align}
The splitting into contributions in $\nu_+$ and $\nu_-$ allows for conversion into a time representation. Using the results in Appendix A of~\cite{chen2023efficient}, $G_{a,\alpha}^\beta$ can be written as
\begin{align}
G_{a,\alpha}^\beta  = \int_{-\infty}^\infty g_1(t) \, e^{-i H t}  \left( \int_{-\infty}^\infty
g_2(t') \, e^{i H t'} A^{a,\alpha \dagger}
e^{-2i H t'} A^{a,\alpha} e^{i H t'} \rd t' \right) e^{i H t} \rd t,
\end{align}

with the kernel functions $g_1(t)$ and $g_2(t)$ defined as follows Ref.~\cite[Appendix A]{chen2023efficient}.
The first function is
\begin{align}
g_1(t) & =
\frac{1}{\sqrt{2\pi}}
\int_{-\infty}^{\infty} \rd \nu_-\,
e^{i \nu_- t}
\frac{1}{2\pi} \frac{1}{2i}
\tanh\left(-\frac{\beta \nu_-}{4}\right)
e^{- \frac{\beta^2 \nu_-^2}{16} } \nonumber \\
&= 
\left[
\frac{-1}{\pi \beta \cosh\left( \frac{2\pi t}{\beta} \right)}
\right] *_t
\left[
\frac{\sqrt{2}}{\beta}
e^{\frac{1}{4} - \frac{4t^2}{\beta^2}}
\sin\left( \frac{2t}{\beta} \right)
\right],
\end{align}
where $f(t) *_t g(t) = \int_{-\infty}^{\infty}
f(s) g(t-s)\, \rd s$ denotes the convolution. The second function is
\begin{align}
g_2(t) &= \frac{1}{\sqrt{2\pi}} \int_{-\infty}^{\infty} \rd \nu_+\,
e^{i \nu_+ t}
\exp\left( -\frac{\beta^2 \nu_+^2 + 4 \beta \nu_+}{16} \right) \nonumber \\
&= \frac{2\sqrt{2}}{\beta}
\exp\left( \frac{(\beta - 4i t)^2}{4 \beta^2} \right).
\end{align}
For future proofs, it would be useful to provide some upper bounds for these functions.
Using Young’s inequality and taking into account that $|\sin(2t/\beta)|\leq 1$, we get
\begin{align}\label{eq:bound1}
    \int_{-\infty}^\infty & \rd t |g_1(t)| \leq \int_{-\infty}^\infty \rd t \,  \frac{1}{\pi \beta \cosh(\frac{2\pi t}{\beta})}\times \int_{-\infty}^\infty \rd s\frac{\sqrt{2}}{\beta}e^{\frac{1}{4}-\frac{4s^2}{\beta^2}} \nonumber\\
        &= \frac{1}{2\pi}\times \frac{\sqrt{2 \pi}\e^{1/4}}{2}=\frac{e^{1/4}}{2\sqrt{2\pi}}.
\end{align}
Also, we can show that
\begin{align}\label{eq:bound1}
    \int_{-\infty}^\infty \rd t |g_2(t)| = \int_{-\infty}^\infty \rd t \,\frac{2\sqrt{2}}{\beta} e^{\frac{1}{4}}e^{-\frac{4 t^2}{\beta^2}}=e^{\frac{1}{4}}\sqrt{2\pi}.
\end{align}
Next, we prove the following results for the tails of these functions.
\begin{lem}
For any $t_0\geq 0$, the functions $g_1(t)$ and $g_2(t)$ satisfy
\begin{align}\label{eq:tailboundg1}
    &\int_{t_0}^\infty \rd t |g_1(t)|\leq \frac{ \sqrt{2}}{2\pi^{3/2}}\e^{\frac{1}{4}}\left(\e^{\frac{\pi^2}{4}} \e^{-\frac{2\pi t_0}{\beta}}+\frac{1}{2}\erfc\left(\frac{2 t_0}{\beta}\right)\right)\\
\label{eq:tailboundg2}
    &\int_{t_0}^\infty \rd t |g_2(t)|\leq \sqrt{\frac{\pi}{2}} \e^{\frac{1}{4}}\erfc\left(\frac{2t_0}{\beta}\right).
\end{align}
\end{lem}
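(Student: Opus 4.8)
The plan is to bound each tail integral by directly estimating the absolute value of the integrand and then integrating the resulting Gaussian-type expressions. For $g_2(t)$, the task is essentially immediate: from the closed form $g_2(t) = \tfrac{2\sqrt 2}{\beta}\exp\!\big((\beta-4it)^2/(4\beta^2)\big)$ one computes $|g_2(t)| = \tfrac{2\sqrt 2}{\beta}\,e^{1/4}\,e^{-4t^2/\beta^2}$, exactly as in the norm bound preceding the lemma. Then
\[
\int_{t_0}^\infty |g_2(t)|\,\rd t = \frac{2\sqrt 2}{\beta}e^{1/4}\int_{t_0}^\infty e^{-4t^2/\beta^2}\,\rd t = \frac{2\sqrt 2}{\beta}e^{1/4}\cdot\frac{\beta\sqrt\pi}{4}\,\erfc\!\Big(\frac{2t_0}{\beta}\Big) = \sqrt{\tfrac{\pi}{2}}\,e^{1/4}\,\erfc\!\Big(\frac{2t_0}{\beta}\Big),
\]
using the substitution $u = 2t/\beta$ and the definition $\erfc(x) = \tfrac{2}{\sqrt\pi}\int_x^\infty e^{-u^2}\rd u$. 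This gives \eqref{eq:tailboundg2} with equality in the intermediate step, and the stated bound follows.

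For $g_1(t)$, which is a convolution $g_1 = p *_t s$ with $p(t) = \tfrac{-1}{\pi\beta\cosh(2\pi t/\beta)}$ and $s(t) = \tfrac{\sqrt 2}{\beta}e^{1/4-4t^2/\beta^2}\sin(2t/\beta)$, the strategy is to split the convolution integral $g_1(t) = \int_{-\infty}^\infty p(t-u)s(u)\,\rd u$ according to whether the "mass" sits in the $p$ factor or the $s$ factor. Concretely, for $t \ge t_0$ write the region of integration as $\{u \le t_0/2\} \cup \{u > t_0/2\}$ (or a similar split). On $\{u > t_0/2\}$ the Gaussian factor $s(u)$ is small — contributing the $\erfc(2t_0/\beta)$-type term after integrating $|s|$ over the tail and using $\int|p| = 1/(2\pi)$ (from Eq.~\eqref{eq:bound1}) — while on $\{u \le t_0/2\}$ we have $t-u \ge t_0/2$, so the $\cosh$ factor in $p(t-u)$ is exponentially small, $|p(t-u)| \le \tfrac{2}{\pi\beta}e^{-2\pi(t-u)/\beta}$, giving the $e^{-2\pi t_0/\beta}$-type term after integrating against $\int|s| \le \tfrac{\sqrt 2}{\beta}\sqrt\pi\,\beta e^{1/4}/2$ times an elementary factor. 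Interchanging the order of the $\rd t$ and $\rd u$ integrations (Tonelli, everything nonnegative after taking absolute values) and collecting the two contributions should reproduce the right-hand side of \eqref{eq:tailboundg1}, with the $e^{\pi^2/4}$ prefactor coming from completing the square when one shifts the Gaussian and the $\frac12\erfc$ from the leftover Gaussian tail.

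The main obstacle is bookkeeping the constants in the $g_1$ estimate: one must choose the split point carefully (the factor of $\tfrac12$ in $e^{-2\pi t_0/\beta}$ and in the argument $2t_0/\beta$ of $\erfc$ suggests splitting at $u = t_0/2$, or bounding $t-u \ge t_0$ on part of the domain and $u \ge t_0$ on the rest), track how $\int_{t_0}^\infty\!\rd t\int |s(u)|\,\rd u$ with the cross term $e^{-2\pi(t-u)/\beta}$ produces the $e^{\pi^2/4}$ after completing the square in the shifted Gaussian $e^{-4u^2/\beta^2}e^{2\pi u/\beta}$, and verify that the $\cosh$ one-sided tail bound $1/\cosh(x) \le 2e^{-|x|}$ is applied with the correct constant. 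None of this is deep; it is a careful but routine Gaussian/exponential tail computation, and the cleanest route is to reduce everything to the complementary error function and the elementary identity $\int_{t_0}^\infty \tfrac{\rd t}{\cosh(at)} \le \tfrac{2}{a}e^{-at_0}$.
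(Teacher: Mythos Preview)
Your treatment of $g_2$ is correct and identical to the paper's: compute $|g_2(t)| = \tfrac{2\sqrt 2}{\beta}e^{1/4}e^{-4t^2/\beta^2}$ and integrate to get $\sqrt{\pi/2}\,e^{1/4}\erfc(2t_0/\beta)$.

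For $g_1$, your ingredients (Tonelli, $|\sin|\le 1$, $1/\cosh(x)\le 2e^{-|x|}$, completing the square producing $e^{\pi^2/4}$) are exactly the ones the paper uses, but your proposed split at $u=t_0/2$ is \emph{not} the paper's decomposition and does not reproduce the stated constants. With your split, the region $\{u\le t_0/2\}$ does give the first term $\tfrac{\sqrt 2}{2\pi^{3/2}}e^{1/4}e^{\pi^2/4}e^{-2\pi t_0/\beta}$ correctly after completing the square. But on $\{u>t_0/2\}$, bounding $\int|p|=1/(2\pi)$ and integrating the Gaussian tail of $s$ from $t_0/2$ yields a term proportional to $\erfc(t_0/\beta)$, not $\erfc(2t_0/\beta)$, and with a larger prefactor than the lemma claims. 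So your plan proves a valid (slightly weaker) tail bound but not the inequality as written.

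The paper's route is cleaner: write the convolution with the $\cosh$ variable as the integration variable $s$, so the Gaussian sits at $t-s$, and split on the \emph{sign of $s$} rather than at a fixed threshold. For $s>0$ one uses $1/\cosh\le 2e^{-2\pi s/\beta}$, completes the square in $s$ to get $e^{\pi^2/4}e^{-2\pi t/\beta}\erfc(\pi/2-2t/\beta)\le 2e^{\pi^2/4}e^{-2\pi t/\beta}$, and integrates in $t$. For $s<0$ one substitutes $s\to -s$ and uses the monotonicity $(t+s)^2\ge t^2$ (valid since $t\ge t_0\ge 0$, $s\ge 0$); the $s$-integral factors out as $\int_0^\infty e^{-2\pi s/\beta}ds=\beta/(2\pi)$ and the remaining $t$-integral gives $\tfrac{\beta\sqrt\pi}{4}\erfc(2t_0/\beta)$ directly. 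Collecting the two pieces yields the lemma with the exact constants. The key difference is that splitting on $\mathrm{sign}(s)$ lets the Gaussian tail stay centered at $t$ rather than at the convolution variable, which is what produces $\erfc(2t_0/\beta)$ instead of $\erfc(t_0/\beta)$.
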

\begin{proof}
Consider first $g_1(t)$. We have for $t_0>0$
\begin{align}
\begin{split}
&\int_{t_0}^\infty \rd t |g_1(t)|\\
&\leq \int_{t_0}^\infty\rd t\, \int_{-\infty}^\infty \, \rd s  \frac{1}{\pi \beta\cosh\left(\frac{2\pi s}{\beta}\right)}\frac{\sqrt{2}}{\beta}e^{\frac{1}{4}-\frac{4(t-s)^2}{\beta^2}}
\\&\leq\frac{2 \sqrt{2}}{\pi \beta^2}e^{\frac{1}{4}}\int_{t_0}^\infty \rd t \int_{0}^\infty \rd s \,\e^{-\frac{2\pi s}{\beta}}\left(\e^{-\frac{4(t-s)^2}{\beta^2}}+\e^{-\frac{4(t+s)^2}{\beta^2}}\right),
\end{split}
\end{align}
where we used $1/\cosh(x)\leq 2 e^{-|x|}$ in the last line and split the integral over $s$ into two halves $\int_{0}^\infty$ and $\int_{-\infty}^0$.

Let us consider the first integral in the form
$$
I_1 \;=\;\int_{t_0}^{\infty}\!\mathrm{d}t\;\int_{0}^{\infty}\!\mathrm{d}s\;
e^{-\frac{2\pi s}{\beta}}
\,e^{-\frac{4(t-s)^2}{\beta^2}},
$$
one finds by completing the square in $s$ that
$$
\int_{0}^{\infty}\!\mathrm{d}s\;
e^{-\frac{2\pi s}{\beta}}
\,e^{-\frac{4(t-s)^2}{\beta^2}}
=\frac{\beta\sqrt{\pi}}{4}\,e^{\frac{\pi^2}{4}}\,
e^{-\frac{2\pi t}{\beta}}\,
\erfc\Bigl(\frac{\pi}{2}-\frac{2t}{\beta}\Bigr).
$$
Hence, we get
$$
I_1
=\frac{\beta\sqrt{\pi}}{4}\,e^{\frac{\pi^2}{4}}
\int_{t_0}^{\infty}\!\mathrm{d}t\;e^{-\frac{2\pi t}{\beta}}\;
\erfc\Bigl(\frac{\pi}{2}-\frac{2t}{\beta}\Bigr).
$$
Since $\erfc(x)\le2$ for all $x$, we get the bound
$$
I_1
\;\le\; \frac{\beta\sqrt{\pi}}{2}\,e^{\frac{\pi^2}{4}}
\int_{t_0}^\infty\!\mathrm{d}t\;e^{-\frac{2\pi t}{\beta}}
\;=\;
\frac{\beta^2}{4\sqrt{\pi}}\,e^{\frac{\pi^2}{4}}\;
e^{-\frac{2\pi t_0}{\beta}}.
$$
The other term can be bounded as
\begin{align}
\begin{split}
&I_2=\int_{t_0}^\infty\rd t \int_{0}^\infty \rd s \,\e^{-\frac{2\pi s}{\beta}}\e^{-\frac{4(t+s)^2}{\beta^2}}\\&
\leq\int_{t_0}^\infty\rd t \int_{0}^\infty \rd s \,e^{-\frac{2\pi s}{\beta}} e^{-\frac{4t^2}{\beta^2}}= \frac{\beta^2}{8\sqrt{\pi}}\erfc\left(-\frac{2 t_0}{\beta}\right).
\end{split}
\end{align}
This leads us to the expression in Eq.~\eqref{eq:tailboundg1}.
The expression in Eq.~\eqref{eq:tailboundg2} follows from bounding an Gaussian integral.

\end{proof}

\subsection{Zero-Temperature Lindblad operator}
\label{subsec:beta0}

In the infinite-temperature limit, i.e., as $\beta \rightarrow 0$, the function $g(\nu_1, \nu_2)$ in Eq.~\eqref{eqs:g_express} vanishes, implying that $G^{\beta=0}_{a,\alpha} = 0$. Moreover, from Eq.~\eqref{eqs:f_expression}, it follows that 
\begin{align} 
\lim_{\beta \rightarrow 0} f(t) = \delta(t), 
\end{align} 
where $\delta(t)$ denotes the Dirac delta function.

Given this limiting behavior, and choosing the operator set as in Eq.~\eqref{jump_operator_choice}, the Lindblad generator in the zero-temperature case simplifies to 
\begin{align} \mathcal{L}^{\beta = 0}(\rho) &= \sum_{a\in\Lambda}\sum_{\alpha} \Bigl(A^{a,\alpha} \rho A^{a,\alpha\dagger} - \frac{1}{2} \{ A^{a,\alpha\dagger} A^{a,\alpha}, \rho \}\Bigr) \nonumber \\
&= \sum_{a\in\Lambda} \left(\frac{1}{2} \tr_a(\rho)\otimes I_a- \rho\right),
\end{align}
where $\tr_a$ is the partial trace over the single-qubit space 
$\mathsf H_a$ and $I_a\in \mathsf H_a\to \mathsf H_a$ is the identity operator on qubit $a$. This expression corresponds to the generator of the depolarizing noise channel, which drives the system toward the maximally mixed state by uniformly averaging over local Pauli operators~\cite{rouzé2024optimalquantumalgorithmgibbs}.

\section{Bounds on the Trace Distance}\label{sec:tracedistanceerror}

This Appendix gives the proof for Theorem~\ref{thrm:trunc} and Theorem~\ref{thrm:mixingtime}. 

\begin{thrm}[Truncation Error, \cref{thrm:trunc} restated] \label{apx_thrm:trunc}
Let $H$ be a local Hamiltonian on a finite lattice $\Lambda$ with $n$ qubits, $\rho_\beta$ the Gibbs state, and $\mathcal L^{\beta,r}$ the truncated Lindbladian in Eq.~\eqref{eq:trunc_lindbl}. Then, there exists a constant $\beta^*=\frac{1}{500^{D}J}$ such that for any $\beta < \beta^*$,  there exists a unique $\rho_{\beta,r} \in \mathsf M_\Lambda$ satisfying $\mathcal{L}^{\beta,r}(\rho_{\beta,r}) = 0$ and $J>0$, such that
\begin{equation} 
\|\rho_{\beta,r}- \rho_\beta\|_1 = O\left((\beta J)^{r/2}n \log n \right), \end{equation} 
where $\|\cdot\|_1$ denotes the trace norm.
\end{thrm}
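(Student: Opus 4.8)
The plan is to reduce the statement to two facts already available: the rapid-mixing estimate of \cref{thrm:mixingtime} and a perturbative comparison of the fixed points of two Lindbladians in terms of their generators. Existence of a stationary state $\rho_{\beta,r}\in\mathsf D_\Lambda$ is standard, since $e^{t\mathcal L^{\beta,r}}$ maps the compact convex set $\mathsf D_\Lambda$ into itself and any limit point of the Ces\`aro averages $\tfrac{1}{T}\int_0^T e^{t\mathcal L^{\beta,r}}(\rho)\,\rd t$ is a fixed point; uniqueness follows from \cref{thrm:mixingtime}, because two stationary states $\rho,\rho'$ would satisfy $\nrm{\rho-\rho'}_1=\nrm{e^{t\mathcal L^{\beta,r}}(\rho-\rho')}_1\le 2^{-\lfloor t/t_{\rm mix}\rfloor}\nrm{\rho-\rho'}_1$ for all $t$, forcing $\rho=\rho'$. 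For the quantitative bound I would invoke Lemma~2 of Ref.~\cite{chen2023efficient}, which gives $\nrm{\rho_{\beta,r}-\rho_\beta}_1\le 4\,t_{\rm mix}\,\nrm{\mathcal L^\beta-\mathcal L^{\beta,r}}_{1\to1}$ with $t_{\rm mix}$ the mixing time of either $\mathcal L^\beta$ or $\mathcal L^{\beta,r}$; both are $O(\log n)$ for $\beta<\beta^*$ by \cref{thrm:mixingtime}, so the whole problem reduces to showing $\nrm{\mathcal L^\beta-\mathcal L^{\beta,r}}_{1\to1}=O\bigl((\beta J)^{r/2}n\bigr)$.

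By the triangle inequality over the $3n$ local generators it suffices to bound $\nrm{\mathcal L^\beta_{a,\alpha}-\mathcal L^{\beta,r}_{a,\alpha}}_{1\to1}$ by $O\bigl((\beta J)^{r/2}\bigr)$ uniformly in $a,\alpha,n$. Expanding the difference of \cref{eq:lindbladian_main} and \cref{eq:TruncLindblad}, the coherent part contributes at most $2\nrm{G^\beta_{a,\alpha}-G^{\beta,r}_{a,\alpha}}_\infty$, while the dissipative part, after the telescoping identities $L^\beta\rho L^{\beta\dagger}-L^{\beta,r}\rho L^{\beta,r\dagger}=(L^\beta-L^{\beta,r})\rho L^{\beta\dagger}+L^{\beta,r}\rho(L^{\beta\dagger}-L^{\beta,r\dagger})$ and the analogue for $L^{\beta\dagger}L^\beta-L^{\beta,r\dagger}L^{\beta,r}$, contributes $O\bigl(\nrm{L^\beta_{a,\alpha}-L^{\beta,r}_{a,\alpha}}_\infty(\nrm{L^\beta_{a,\alpha}}_\infty+\nrm{L^{\beta,r}_{a,\alpha}}_\infty)\bigr)$. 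Since $\nrm{A^{a,\alpha}}_\infty=1$ and both jump operators have norm at most $\int_{-\infty}^\infty|f(t)|\,\rd t=O(1)$, each local term is $O\bigl(\nrm{L^\beta_{a,\alpha}-L^{\beta,r}_{a,\alpha}}_\infty+\nrm{G^\beta_{a,\alpha}-G^{\beta,r}_{a,\alpha}}_\infty\bigr)$, so the problem reduces to two operator-norm differences.

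These are the technical heart of the proof, and I would estimate them with the Lieb--Robinson bound \cref{eq:LiebRobinson}. For the jump operator, $\nrm{L^\beta_{a,\alpha}-L^{\beta,r}_{a,\alpha}}_\infty\le\int_{-\infty}^\infty|f(t)|\,\nrm{e^{iHt}A^{a,\alpha}e^{-iHt}-e^{iH_{a,r}t}A^{a,\alpha}e^{-iH_{a,r}t}}_\infty\,\rd t\le\int_{-\infty}^\infty|f(t)|\,\tfrac{(2J|t|)^r}{r!}\,\rd t$, and since $|f(t)|=\sqrt{2/(\pi\beta^2)}\,e^{1/8}e^{-2t^2/\beta^2}$ from \cref{eq:timejump} this Gaussian integral, together with Stirling's estimate for $\Gamma\bigl((r{+}1)/2\bigr)/r!$, evaluates to $O\bigl((\beta J)^{r/2}\bigr)$ once $\beta J$ is below the stated threshold. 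For the coherent term I would work from the time-domain form \cref{eq:coherenttime}: $G^{\beta,r}_{a,\alpha}$ arises from $G^\beta_{a,\alpha}$ by replacing $H\to H_{a,r}$ in each of the (at most four) conjugating exponentials while keeping the kernels $g_1,g_2$ fixed, so a term-by-term telescoping through those exponentials combined with \cref{eq:LiebRobinson}, the normalizations $\int|g_i(t)|\,\rd t=O(1)$, and the tail estimates of \cref{sec:time} (which give $\int|g_i(t)|\,|t|^r\,\rd t/r!=O((\beta J)^{r/2})$ even though $g_1$ has only exponential decay) yields $\nrm{G^\beta_{a,\alpha}-G^{\beta,r}_{a,\alpha}}_\infty=O\bigl((\beta J)^{r/2}\bigr)$. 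Summing over the $3n$ terms and multiplying by $4t_{\rm mix}=O(\log n)$ gives \cref{eq:distance_gibbs}. I expect the main obstacle to be this last coherent-term estimate: because $G$ involves nested conjugations and $g_1$ is not Gaussian, one must track carefully which Lieb--Robinson factor pairs against which kernel, verify that the exponent $r/2$ (rather than something worse) survives the telescoping, and confirm that all constants stay uniform in $n$ and in $(a,\alpha)$.
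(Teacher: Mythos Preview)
Your proposal is correct and matches the paper's strategy: combine the fixed-point perturbation lemma (Lemma~\ref{fact:mixing_time_to_fixedpoint}) with the $O(\log n)$ mixing time from Theorem~\ref{thrm:mixingtime} and a Lieb--Robinson estimate on $\nrm{\mathcal L^\beta-\mathcal L^{\beta,r}}_{1\to1}$ obtained term by term from the operator-norm differences of $L$ and $G$. The paper's execution of the last step is tactically slightly different---it writes $\mathcal L^\beta-\mathcal L^{\beta,r}=\sum_{r'\ge r}(\mathcal L^{\beta,r'+1}-\mathcal L^{\beta,r'})$ as a telescopic sum over radii and, for each increment, splits the time integral at a cutoff $t_0\propto r'\gamma(\beta J)/J$ (using Lieb--Robinson on $|t|\le t_0$ and the trivial bound on the Gaussian/exponential tails), whereas you integrate the Lieb--Robinson factor $(2J|t|)^r/r!$ against the full filter directly---but both routes yield the same $O((\beta J)^{r/2})$ scaling, and the paper's telescoping is reused verbatim from the oscillator-norm proof of Lemma~\ref{lemthmrmhighT1} rather than being essential here.
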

and
\begin{thrm}[\cref{thrm:mixingtime} restated] \label{apx_thm:highT1}
Let $H$ be a $(k,l)$-local Hamiltonian with Lieb-Robinson velocity $J>0$ acting on $n$ qubits arranged on a $D$-dimensional lattice, and let $\beta^*=\frac{1}{2020^{D}J}$ . Then, for any $\beta<\beta^*$, state $\rho$ and $\epsilon>0$, the generator $\mathcal L^\beta$ in Eq.~\eqref{eq:LindbladLin} with the choice $q(\nu)=\e^{-\frac{(\beta\nu)^2}{8}}$ and its fixed point $\rho_\beta$ satisfy
\begin{align}\label{equ:mixing_gap}
\|e^{t\mathcal{L}}(\rho)-\rho_\beta\|_1\le \epsilon \|\rho-\rho_\beta \|_1\quad \quad \text{ for all}\quad t=\Omega(\log(n/\epsilon))\,.
\end{align}
From this statement, it follows that for $\beta < \beta^*$, the mixing time of the Lindbladian $\cL^\beta$ scales as $t_{\mathrm{mix}} = \mathcal{O}(\log n)$.
This statement is also true for the truncated Lindbladian $\cL^{\beta,r}$ and its fixed point $\rho_{\beta,r}$, respectively.
\end{thrm}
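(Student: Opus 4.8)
The plan is to establish the single-operator contraction estimate announced in Eq.~\eqref{gradientest}, namely that there is a constant $\kappa<1$ (depending only on $\beta$, $J$, $D$) such that
\begin{align}
\vertiii{e^{t\mathcal{L}^{\beta\dagger}}(O)}\le e^{-(1-\kappa)t}\vertiii{O}
\label{eq:planGoal}
\end{align}
for every $O\in\mathsf M_\Lambda$, and then to convert this into the trace-distance statement \eqref{equ:mixing_gap}. I would follow the oscillator-norm strategy of \citet{rouzé2024optimalquantumalgorithmgibbs}. The starting point is the $\beta=0$ generator computed in \cref{subsec:beta0}: there $\mathcal{L}^{\beta=0\dagger}=\sum_a(\tfrac12 I_a\otimes\operatorname{tr}_a(\cdot)-\operatorname{id})$ is exactly the generator of local depolarizing, for which one checks directly that $\delta_a\circ e^{t\mathcal{L}^{\beta=0\dagger}}$ contracts each summand $\|\delta_a(O)\|_\infty$ at rate $e^{-t}$, so \eqref{eq:planGoal} holds with $\kappa=0$. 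The idea is then to treat $\mathcal{L}^{\beta\dagger}-\mathcal{L}^{\beta=0\dagger}$ as a perturbation whose effect on the oscillator norm is small when $\beta J$ is small: one writes a Duhamel/interaction-picture expansion for $e^{t\mathcal{L}^{\beta\dagger}}$ around $e^{t\mathcal{L}^{\beta=0\dagger}}$ and bounds, term by term, how much weight the perturbation moves between sites. Concretely I would show that for each site $a$,
\begin{align}
\sum_{a\in\Lambda}\big\|\delta_a\big(\mathcal{L}^{\beta\dagger}(O)-\mathcal{L}^{\beta=0\dagger}(O)\big)\big\|_\infty \le \eta(\beta)\,\vertiii{O},
\label{eq:planPert}
\end{align}
where $\eta(\beta)=O(\beta J)$ captures the distance between the two Lindbladians measured in the oscillator norm; this uses (i) the explicit time-domain kernels $f(t)$, $g_1(t)$, $g_2(t)$ and their tail bounds from \cref{sec:time}, which control how far from $\beta=0$ the jump operators $L^\beta_{a,\alpha}$ and coherent terms $G^\beta_{a,\alpha}$ sit, and (ii) a Lieb-Robinson bound \eqref{eq:LiebRobinson} to control the spatial spread so that $\delta_b$ of a term generated at site $a$ decays in $\ell(a,b)$. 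Combining the contraction at $\beta=0$ with \eqref{eq:planPert} via Grönwall gives \eqref{eq:planGoal} with $1-\kappa\geq 1-\eta(\beta)>0$ provided $\beta<\beta^*$, and the explicit threshold $\beta^*=1/(2020^{D}J)$ is just where the geometric series of Duhamel terms (with per-order cost $\sim\beta J$ times a lattice coordination factor absorbed into the $2020^D$) converges.

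The next step is the standard reduction from \eqref{eq:planGoal} to mixing. Using $\|\rho-\rho_\beta\|_1 = \sup_{\|O\|_\infty\le1}\langle O, \rho-\rho_\beta\rangle$ up to the normalization in the inner product, together with $e^{t\mathcal{L}^\beta}(\rho)-\rho_\beta = e^{t\mathcal{L}^\beta}(\rho-\rho_\beta)$ (since $\mathcal{L}^\beta(\rho_\beta)=0$) and the duality $\langle O, e^{t\mathcal{L}^\beta}(\sigma)\rangle=\langle e^{t\mathcal{L}^{\beta\dagger}}(O),\sigma\rangle$, one bounds the trace distance by $\vertiii{e^{t\mathcal{L}^{\beta\dagger}}(O)}$ times $\|\rho-\rho_\beta\|_1$, at the cost of a factor $n$ coming from the sum over sites in the definition of $\vertiii{\cdot}$ (this is the source of the $2n$ in the inequality $\|e^{t\mathcal{L}^\beta}(\rho)-\rho_\beta\|_1\le 2n\|\rho-\rho_\beta\|_1 e^{-(1-\kappa)t}$ quoted in the main text). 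Demanding the right side be $\le\epsilon\|\rho-\rho_\beta\|_1$ forces $t=\Omega(\log(n/\epsilon))$, and taking $\epsilon=1/2$ in Definition~\ref{dfn:mixing_time} gives $t_{\mathrm{mix}}=O(\log n)$.

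Finally, for the truncated Lindbladian $\cL^{\beta,r}$ the argument is identical in structure: its $\beta\to0$ limit is the \emph{same} local depolarizing generator (truncation does not change the $\beta=0$ term, since $f(t)\to\delta(t)$ kills the time evolution entirely), and the perturbation bound \eqref{eq:planPert} only improves, because replacing $H$ by $H_{a,r}$ inside $L^{\beta,r}_{a,\alpha}$ and $G^{\beta,r}_{a,\alpha}$ either leaves the relevant Lieb-Robinson estimates unchanged or makes the supports strictly smaller; hence the same $\kappa$ (or a smaller one) works and $t_{\mathrm{mix}}=O(\log n)$ holds uniformly in $r\ge0$. I expect the main obstacle to be a clean proof of \eqref{eq:planPert}: one must bound the oscillator-norm distance between $\mathcal{L}^{\beta\dagger}$ and $\mathcal{L}^{\beta=0\dagger}$ uniformly in system size, which requires simultaneously controlling (a) the operator-norm difference $\|L^\beta_{a,\alpha}-A^{a,\alpha}\|_\infty$ and $\|G^\beta_{a,\alpha}\|_\infty$ via the filter-function tails, and (b) the quasi-locality tails of these operators via Lieb-Robinson, and then assembling these into a convergent series whose rate is strictly below $1$ — this is exactly the delicate bookkeeping (the functions $\Delta(r_0)$, $\eta(\beta)$, $k(r_0,\beta)$, $u(r_0)$ flagged in the notation list) that the detailed proof in this appendix will have to carry out carefully.
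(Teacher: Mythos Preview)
Your high-level strategy matches the paper's: contract the oscillator norm by perturbing around the $\beta=0$ depolarizing generator, then dualize to trace distance. The reduction step, the treatment of the truncated case, and the final paragraph identifying $\Delta(r_0)$, $\eta(\beta)$, $k(r_0,\beta)$, $u(r_0)$ as the real work are all on target.

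The one substantive imprecision is your inequality \eqref{eq:planPert} and the claim that Gr\"onwall then gives $1-\kappa\ge 1-\eta(\beta)$. This does not hold as written: $\eta(\beta)=\|\mathcal{L}^{\beta\dagger}-\mathcal{L}^{0\dagger}\|_{\infty\to\infty}$ is an $\infty\!\to\!\infty$ norm, not an oscillator-norm-to-oscillator-norm bound. If $O$ is supported on a single site, $(\mathcal{L}^{\beta\dagger}-\mathcal{L}^{0\dagger})(O)$ is quasilocal and its oscillator norm picks up contributions from \emph{every} site in the tail, so $\sum_a\|\delta_a(\cdot)\|_\infty$ acquires a volume factor that $\eta(\beta)$ alone does not see. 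The paper (following \cite{rouzé2024optimalquantumalgorithmgibbs}) handles this by introducing a free scale $r_0$: inside $B_a(r_0)$ one pays the perturbation cost $\eta(\beta)$ multiplied by the geometric factor $(2r_0+1)^{2D}$, while the quasilocal tail beyond $r_0$ is controlled by $\Delta(r_0)$ through $u(r_0)$. The actual bound is $\kappa\le k(r_0,\beta)=4(2r_0+1)^{2D}\eta(\beta)+u(r_0)$, and one then \emph{chooses} $r_0$ (the paper takes $r_0=4$) to make both pieces small simultaneously --- this optimization over $r_0$, not a bare Gr\"onwall in $\eta(\beta)$, is what produces the threshold $\beta^*=1/(2020^D J)$. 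Your closing paragraph already flags these functions, so you evidently anticipate this; just be aware that the Duhamel/Gr\"onwall picture with $\kappa\le\eta(\beta)$ is not the mechanism, and the $r_0$-splitting is the essential structural ingredient you will need to supply.
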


Note that the results for the nontruncated Lindbladian generators presented in Ref.~\cite{ding2024efficientquantumgibbssamplers} are analogous to those for the generators introduced in Ref.~\cite{chen2023efficient}. In particular, for the latter generators, a logarithmic scaling of the mixing time with system size at inverse temperatures $\beta < \beta^*$ was established in Ref.~\cite{rouzé2024optimalquantumalgorithmgibbs}. By leveraging the structure of their proof, we extend this result to the Lindblad generator defined in Eq.~\eqref{eq:LindbladLin} of Ref.~\cite{ding2024efficientquantumgibbssamplers} for the specific choice
$
q^{a,\alpha}(\nu) \;=\; \exp(-(\beta\nu)^2/8)\,.
$
Crucially, the argument relies only on the quasilocality of the jump operators and hence applies equally to any other choice of $q_{a,\alpha}(\nu)$ for which the corresponding envelope function $f(t)$ in Eq.~\eqref{eq:envelope} decays sufficiently rapidly in~$t$.

Furthermore, quasilocality of the jump operators is preserved under truncation. Consequently, the mixing time remains logarithmic in the system size even when one considers the truncated Lindbladian $\mathcal{L}^{\beta,r}$, albeit now with respect to the perturbed fixed point $\rho_{\beta,r}$. It therefore suffices to prove Theorem~\ref{apx_thm:highT1} for $\mathcal{L}^{\beta,r}$, and hence to establish Theorem~\ref{apx_thrm:trunc}.

The rest of the Appendix section is structured as follows. In \cref{subsec:Preliminaries}, we state the necessary propositions and lemmas to set up the proofs. These preliminaries allow us to construct the proof of Theorem~\ref{apx_thm:highT1} in \cref{subsec:thmrmhighT1}, which in turn relies on Lemma~\ref{lemthmrmhighT1}, proved in \cref{subsec:Tracedistancelemma}. Finally, we complete the present Appendix section by proving Theorem~\ref{apx_thrm:trunc} in \cref{appss:proofonenorm}.

\subsection{Preliminaries}~\label{subsec:Preliminaries}
This subsection collects the necessary lemmas to provide the proofs for \cref{apx_thrm:trunc,apx_thm:highT1}. The proof of Theorem~\ref{apx_thrm:trunc} is built on the following lemma that connects the fixed points of a pair of close Lindbladians:
\begin{lem}\label{fact:mixing_time_to_fixedpoint}
(\cite[Lemma II.1]{chen2023quantum}) Let $\CL_1$ and $\CL_2$ be two generators of Lindbladian evolution with unique fixed points $\rho_{\mathrm{fix}}(\CL_1)$ and $\rho_{\mathrm{fix}}(\CL_2)$, respectively. Then, the trace norm difference between these fixed points satisfies  
\begin{align}
    \big\| \vrho_{\mathrm{fix}}(\CL_1) - \vrho_{\mathrm{fix}}(\CL_2) \big\|_1 
    \leq 4 \, \big\| \CL_1 - \CL_2 \big\|_{1 \to 1} \cdot t_{\mathrm{mix}}(\CL_1),
\end{align}  
where $t_{\mathrm{mix}}(\CL_1)$ denotes the mixing time of the generator $\CL_1$.  
\end{lem}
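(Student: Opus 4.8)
The plan is to treat the truncation as a first‐order perturbation: evolve the semigroup generated by $\CL_1$ starting from the fixed point of the \emph{other} Lindbladian, control the drift accumulated over a single mixing time by a Duhamel (variation‐of‐constants) estimate, and then close the bound using the contraction guaranteed by Definition~\ref{dfn:mixing_time}. Write $\rho_1:=\rho_{\mathrm{fix}}(\CL_1)$, $\rho_2:=\rho_{\mathrm{fix}}(\CL_2)$ and $\Delta:=\CL_1-\CL_2$. The only structural input needed is $\CL_2(\rho_2)=0$, which gives $\CL_1(\rho_2)=\CL_1(\rho_2)-\CL_2(\rho_2)=\Delta(\rho_2)$.

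First I would record the identity, valid for every $t\ge 0$,
\begin{align}
e^{t\CL_1}(\rho_2)-\rho_2=\int_0^t e^{s\CL_1}\bigl(\CL_1(\rho_2)\bigr)\,\rd s=\int_0^t e^{s\CL_1}\bigl(\Delta(\rho_2)\bigr)\,\rd s .
\end{align}
Since each $e^{s\CL_1}$ is completely positive and trace preserving it contracts the trace norm on all Hermitian operators, so with $\|\rho_2\|_1=1$ and the definition of the induced norm,
\begin{align}
\bigl\|e^{t\CL_1}(\rho_2)-\rho_2\bigr\|_1\le\int_0^t\bigl\|\Delta(\rho_2)\bigr\|_1\,\rd s\le t\,\|\CL_1-\CL_2\|_{1\to1}.
\end{align}
Next, using that $\rho_1=e^{t\CL_1}(\rho_1)$ and inserting $e^{t\CL_1}(\rho_2)$,
\begin{align}
\|\rho_1-\rho_2\|_1\le\bigl\|e^{t\CL_1}(\rho_1-\rho_2)\bigr\|_1+\bigl\|e^{t\CL_1}(\rho_2)-\rho_2\bigr\|_1 .
\end{align}
Evaluating at $t=t_{\mathrm{mix}}(\CL_1)$, the first term is at most $\tfrac{1}{2}\|\rho_1-\rho_2\|_1$ by Definition~\ref{dfn:mixing_time} (applied to the states $\rho_1,\rho_2$), and the second is at most $t_{\mathrm{mix}}(\CL_1)\,\|\CL_1-\CL_2\|_{1\to1}$ by the previous display; rearranging gives $\|\rho_1-\rho_2\|_1\le 2\,t_{\mathrm{mix}}(\CL_1)\,\|\CL_1-\CL_2\|_{1\to1}$, which already implies the claimed bound with room to spare in the constant.

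I do not expect any step to be a genuine obstacle here — it is a short perturbative estimate — but two points deserve a line of care. First, the semigroup $e^{s\CL_1}$ must be shown to contract the trace norm of the traceless, generally non‐positive operator $\Delta(\rho_2)$; this follows by splitting $\Delta(\rho_2)$ into its positive and negative parts and invoking complete positivity and trace preservation on each. Second, one applies the contraction bound of Definition~\ref{dfn:mixing_time} at $t=t_{\mathrm{mix}}(\CL_1)$ (and, for the iterated version, at integer multiples of it), which is legitimate because the trace distance is monotone non‐increasing along the CPTP flow $e^{t\CL_1}$. Finally, finiteness of $t_{\mathrm{mix}}(\CL_1)$ already forces $\CL_1$ to have a unique fixed point, so the hypotheses are consistent; uniqueness of $\CL_2$'s fixed point is used only so that $\rho_2$ is well defined.
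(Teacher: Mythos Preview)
Your argument is correct. The paper itself does not supply a proof of this lemma; it merely cites Ref.~\cite{chen2023quantum} (stating ``The proof can be found in Ref.~\cite{chen2023quantum}''), so there is no in-paper proof to compare against. Your Duhamel-plus-contraction route is the standard way to obtain this kind of perturbative fixed-point bound, and every step checks out: the identity $e^{t\CL_1}(\rho_2)-\rho_2=\int_0^t e^{s\CL_1}(\Delta(\rho_2))\,\rd s$ is just the fundamental theorem of calculus combined with $\CL_2(\rho_2)=0$; the trace-norm contraction of $e^{s\CL_1}$ on the Hermitian operator $\Delta(\rho_2)$ is the usual CPTP contraction; and the use of Definition~\ref{dfn:mixing_time} at $t=t_{\mathrm{mix}}(\CL_1)$ to absorb the $\tfrac12\|\rho_1-\rho_2\|_1$ term is exactly what the definition is for. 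You even obtain the sharper constant $2$ in place of the stated $4$, which is consistent with the lemma as written (an upper bound).
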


This lemma establishes a bound relating the distance between the fixed points of two Lindbladian generators, the superoperator difference norm $\bigl\lVert \mathcal{L}_1 - \mathcal{L}_2 \bigr\rVert_{1\to1}$, and the mixing time $t_{\mathrm{mix}}$. The proof can be found in Ref.~\cite{chen2023quantum}.

In order to prove Theorem~\ref{apx_thm:highT1} and a fast mixing time, we make use of the following lemma:
\begin{lem}\label{lemthmrmhighT1}
Consider the oscillator norm
\begin{align*}
\vertiii{X}:=\sum_{a\in\Lambda}\|\delta_a(X)\|_\infty\,,
\end{align*}
where $\delta_a(X):=X-\frac{1}{2}I_a\otimes \operatorname{tr}_a(X)$.
Under the same assumptions as for Theorem~\ref{apx_thm:highT1},
for $\beta< \beta^*$ there exists a $\kappa\equiv \kappa(\beta)$ with $0\leq\kappa<1$ such that, for any initial observable $X$, 
\begin{align}\label{gradientest}
\vertiii{\e^{t\mathcal{L}^{\beta \dagger}}(X)}\le \e^{-(1-\kappa)t}\vertiii{X}\,.
\end{align}
Additionally, for $\beta<\beta^*$, there exists a 
$\kappa'(\beta)$ with $0\leq\kappa'<1$ such that, for any initial observable $X$, 
\begin{align}\label{gradientesttruncated}
\vertiii{\e^{t\mathcal{L}^{\beta, r \dagger}}(X)}\le \e^{-(1-\kappa')t}\vertiii{X}\,.
\end{align}
\end{lem}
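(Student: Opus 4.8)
\noindent\emph{Proof strategy.} The plan is to upgrade the pointwise bounds \eqref{gradientest}--\eqref{gradientesttruncated} to first‑order differential inequalities and then integrate. Since $\vertiii{\cdot}$ is a genuine norm, $t\mapsto\vertiii{\e^{t\mathcal L^{\beta\dagger}}(X)}$ is locally Lipschitz, so by a Gronwall‑type comparison it suffices to bound its upper right Dini derivative; by time‑homogeneity this reduces to estimating $\limsup_{h\to0^+}h^{-1}\bigl(\vertiii{X+h\,\mathcal L^{\beta\dagger}(X)}-\vertiii{X}\bigr)$ for arbitrary $X\in\mathsf M_\Lambda$. First I would split $\mathcal L^{\beta\dagger}=\mathcal L^{\beta=0\,\dagger}+\mathcal D^\dagger$, where $\mathcal L^{\beta=0\,\dagger}=\sum_{a\in\Lambda}(-\delta_a)$ is the adjoint of the depolarizing generator of \cref{subsec:beta0}, and $\mathcal D^\dagger:=\mathcal L^{\beta\dagger}-\mathcal L^{\beta=0\,\dagger}=\sum_{c,\alpha}\mathcal D^\dagger_{c,\alpha}$ collects the finite‑temperature corrections $\mathcal D_{c,\alpha}:=\mathcal L^{\beta}_{c,\alpha}-\mathcal L^{\beta=0}_{c,\alpha}$.

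The depolarizing part should give an exact contraction at rate one. Writing $\mathbb E_a(X):=\tfrac12 I_a\otimes\operatorname{tr}_a(X)$ so that $\delta_a=\mathrm{id}-\mathbb E_a$, the $\mathbb E_a$ are mutually commuting conditional expectations with $\mathbb E_a^2=\mathbb E_a$, hence $\delta_a\mathbb E_a=0$. A short computation then gives, for $0<h<1/n$,
\begin{align*}
\delta_b(X)+h\,\delta_b\!\bigl(\mathcal L^{\beta=0\,\dagger}(X)\bigr)=(1-nh)\,\delta_b(X)+h\sum_{a\neq b}\mathbb E_a\!\bigl(\delta_b(X)\bigr),
\end{align*}
which is $(1-h)$ times a convex combination of $\delta_b(X)$ and the contractions $\mathbb E_a(\delta_b(X))$; taking $\|\cdot\|_\infty$ and summing over $b$ contributes at most $-\vertiii{X}$ to the Dini derivative.

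The heart of the argument is to show the perturbation contributes at most $\kappa\vertiii{X}$ with $\kappa<1$ for $\beta<\beta^*$, i.e.\ $\sum_{b\in\Lambda}\|\delta_b(\mathcal D^\dagger(X))\|_\infty\le\kappa\,\vertiii{X}$. For each term $\mathcal D^\dagger_{c,\alpha}$ I would combine three inputs: (i) unitality, $\mathcal D^\dagger_{c,\alpha}(I)=0$, so only the nonconstant part of $X$ enters; (ii) a strength bound $\|\mathcal D_{c,\alpha}\|_{\infty\to\infty}=O(\beta J)$, from $f(t)\to\delta(t)$, a Duhamel expansion of $\e^{\pm iHt}A^{c,\alpha}\e^{\mp iHt}-A^{c,\alpha}$, and the rapid decay of the filter $f$ (this defines the function $\eta(\beta)$); and (iii) quasilocality—approximating $\mathcal D^\dagger_{c,\alpha}$, via a conditional expectation onto $B_c(r_0)$, by a map strictly supported on $B_c(r_0)$ with error $\Delta(r_0)$ that decays in $r_0$ through the Lieb–Robinson bound \eqref{eq:LiebRobinson}. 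Using the telescoping inequality $\|Y-\mathbb E^{B}(Y)\|_\infty\le\sum_{b\notin B}\|\delta_b(Y)\|_\infty$ (with $\mathbb E^{B}$ the conditional expectation onto operators supported on $B$), the contribution $\|\delta_b(\mathcal D^\dagger_{c,\alpha}(X))\|_\infty$ splits into: a part with $b\notin B_c(r_0)$, where $\delta_b$ commutes with the strictly local piece, bounded by $(\eta(\beta)+\Delta(r_0))\|\delta_b(X)\|_\infty$; the near terms $b\in B_c(r_0)$; and the $\Delta(r_0)$ remainder—the latter two bounded by sums of $\|\delta_{b'}(X)\|_\infty$ over $b'$ in a neighbourhood of $c$. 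Summing over $b$ and then $(c,\alpha)$, the near contributions total $O\bigl((2r_0{+}1)^{D}\eta(\beta)\bigr)\vertiii{X}$ and the tails total $O\bigl(\sum_{r\ge0}(r{+}1)^{D-1}\Delta(r)\bigr)\vertiii{X}$, which converges since $\Delta$ decays (super)exponentially; optimizing over $r_0$ and imposing $\beta<\beta^*=1/(c_0^{D}J)$ for a large constant $c_0$ makes the total $<1$. The resulting Dini bound $-(1-\kappa)\vertiii{\e^{t\mathcal L^{\beta\dagger}}(X)}$ integrates to \eqref{gradientest}.

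For the truncated generator I would run the identical argument on $\mathcal L^{\beta,r\dagger}$: the depolarizing step is unchanged because $L^{\beta,r}_{a,\alpha}\to A^{a,\alpha}$ as $\beta\to0$ irrespective of $H_{a,r}$, so the $\beta\to0$ limit of $\sum_\alpha\mathcal L^{\beta,r}_{a,\alpha}$ is still the depolarizing generator; and in the perturbation bound the jump operators are now \emph{strictly} supported on $B_a(r)$, so no Lieb–Robinson tail appears and the relevant strength $\eta'(\beta)$ is controlled by \eqref{eq:LiebRobinson} with $H_{a,r}$ in place of $H$—in particular one may take the rate constant $\kappa'$ no larger than $\kappa$, yielding \eqref{gradientesttruncated} in the corresponding high‑temperature regime. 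The main obstacle is the perturbation estimate of the third paragraph: one must trade the $O(\beta J)$ smallness of each local correction against the $O(r_0^{D})$ lattice sites in its quasilocal support and against the decay of $\Delta(r_0)$, keeping every constant explicit so that $\kappa$ is genuinely strictly below $1$—precisely what forces $\beta^*$ to scale as $1/(c_0^{D}J)$ with a large geometric constant.
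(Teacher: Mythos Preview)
Your proposal is essentially correct and follows the same route as the paper, which in turn follows Ref.~\cite{rouzé2024optimalquantumalgorithmgibbs}: split $\mathcal L^{\beta\dagger}$ into the depolarizing generator $\mathcal L^{0\dagger}$ (which contracts $\vertiii{\cdot}$ at unit rate) plus a quasilocal perturbation, then control the latter via the two quantities $\eta(\beta)=\|\mathcal L^{\beta\dagger}-\mathcal L^{0\dagger}\|_{\infty\to\infty}$ and $\Delta(r_0)=\sum_{r\ge r_0}\|\mathcal L^{\beta,r\dagger}-\mathcal L^{\beta,(r-1)\dagger}\|_{\infty\to\infty}$. The emphasis is complementary: you spell out the Dini‑derivative/Gronwall skeleton and the depolarizing contraction, which the paper simply imports from the cited reference as the ready‑made criterion ``$k(r_0,\beta)<1\Rightarrow\kappa=k(r_0,\beta)$ works''; conversely, the paper's effort goes into the part you only sketch---explicit Lieb--Robinson/filter‑tail estimates yielding $\eta(\beta)\le6\beta J$ and $\Delta(r_0)\le14\gamma(\beta J)^{r_0}/(1-\gamma(\beta J))$, from which $\beta^*=1/(2020^D J)$ is extracted. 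One point to tighten: your near‑term count $O((2r_0{+}1)^{D}\eta(\beta))$ undercounts---the combinatorics of summing over both the site $b$ where $\delta_b$ acts and the sites $b'$ onto which the local piece spreads gives $(2r_0{+}1)^{2D}$, as in the paper's quoted formula for $k(r_0,\beta)$; this does not change the structure of the argument but does affect the geometric constant in $\beta^*$. Your treatment of the truncated case matches the paper's: $\Delta(l)=0$ for $l>r$ only helps, so $\kappa'\le\kappa$.
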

We provide the proof of this lemma, which relies on the quasilocality of the Lindbladian, in \cref{subsec:Tracedistancelemma}.
It follows the steps of Ref.~\cite{rouzé2024optimalquantumalgorithmgibbs}.
Furthermore, the proof gives as a by-product a bound on $\norm{\cL^\beta-\cL^{\beta,r}}_{1\rightarrow 1}$ which we use in  combination with \cref{fact:mixing_time_to_fixedpoint} and \cref{apx_thm:highT1} to obtain the proof of \cref{apx_thrm:trunc}.

\subsection{Proof of Theorem~\ref{apx_thm:highT1}}\label{subsec:thmrmhighT1}

The proof follows the same steps as the proof for the logarithmic mixing time in Ref.~\cite{rouzé2024optimalquantumalgorithmgibbs} for the specific instance of quantum Gibbs sampler in Ref.~\cite{chen2023efficient}. 

The uniqueness of the fixed point follows because the operators $A^{a,\alpha}$ in the Lindbladian in Eq.~\eqref{eq:trunc_lindbl} form a complete set of generators and thus the Lindbladian is irreducible~\cite{ding2024efficientquantumgibbssamplers,rouze2024efficientthermalizationuniversalquantum}.
Consider
\begin{align}\label{eq:boundtimes2}
\|\e^{t\mathcal{L}^\beta}(\rho)-\rho_\beta\|_1&=\sup_{\|X\|_\infty\le 1}\tr(X(e^{t\mathcal{L}^\beta}(\rho)-\rho_\beta))=
\sup_{\|X\|_\infty\le 1}\tr(\e^{t\mathcal{L}^{\beta\dagger}} (X)\left(\rho-\rho_\beta\right) ).
\end{align}
Here we used the definition of the adjoint operator $\tr(X \e^{t\mathcal{L}^\beta}(\rho))=\tr(\e^{t\mathcal{L}^{\beta\dagger}}(X)\rho)$, and the invariance of the steady state under dissipative evolution $e^{t\mathcal{L}^\beta}(\rho_\beta)=\rho_\beta$.
Observe now, that $\tr(\rho-\rho_\beta)=0$ and thus 
\begin{align}
\begin{split}
    &\tr(e^{t\mathcal{L}^{\beta\dagger}} (X)\left(\rho-\rho_\beta\right))\\
    =&\tr(e^{t\mathcal{L}^{\beta\dagger}} (X)\left(\rho-\rho_\beta\right))-2^{-n}\tr(e^{t\mathcal{L}^{\beta\dagger}} (X))\tr(\rho-\rho_\beta)\\=&
    \tr(e^{t\mathcal{L}^{\beta\dagger}} (X)\left(\rho-\rho_\beta\right))-\tr(2^{-n}\tr(e^{t\mathcal{L}^{\beta\dagger}} (X)) I\left(\rho-\rho_\beta\right))\\
    =&\tr(\left(e^{t\mathcal{L}^{\beta\dagger}} (X)-2^{-n}\tr(e^{t\mathcal{L}^{\beta\dagger}} (X)) I\right )\left(\rho-\rho_\beta\right)).
\end{split}
\end{align}
We can use this expression in Eq.~\eqref{eq:boundtimes2} and obtain together with Lemma~\ref{lemthmrmhighT1} that
\begin{align}\label{eq:boundtimes}
\begin{split}
\|e^{t\mathcal{L}^\beta}(\rho)-\rho_\beta\|_1&=\sup_{\|X\|_\infty\le 1}\|e^{t\mathcal{L}^{\beta\dagger}} (X)-2^{-n}\tr(e^{t\mathcal{L}^{\beta\dagger}} (X))I\|_\infty\|\rho-\rho_\beta \|_1\\
&\le \sup_{\|X\|_\infty\le 1} \vertiii{e^{t\mathcal{L}^{\beta\dagger}} (X)}\,\|\rho-\rho_\beta \|_1 \  \\
&\le e^{-(1-\kappa)t}\sup_{\|X\|_\infty\le 1}\vertiii{X}\, \|\rho-\rho_\beta \|_1\\
&\le 2n \|\rho-\rho_\beta\|_1 e^{-(1-\kappa)t}\,,
\end{split}
\end{align}
In the first line, we have used the fact that the trace over many sites can be written sequentially as a composition of trace over single sites. 
In the last line, we used
\begin{align}
\|\delta_a(X)\|_\infty \leq \|X\|_\infty + \frac{1}{2} \| I_a\otimes \operatorname{tr}_a(X)\|_\infty \leq 2 \|X\|_\infty, 
\end{align}
since the operator norm is not increased by partial trace, and thus $\vertiii{X}\leq 2n$.
From this it follows that
\begin{align}
\|e^{t\mathcal{L}}(\rho)-\rho_\beta\|_1\le \epsilon \|\rho-\rho_\beta \|_1
\end{align}
with 
\begin{align}
    t=\frac{1}{1-\kappa}\log\left(\frac{2 n}{\epsilon}\right).
\end{align}
This concludes the proof of Theorem~\ref{apx_thm:highT1}. 
\medskip
\subsection{Proof of Lemma~\ref{lemthmrmhighT1} }\label{subsec:Tracedistancelemma}

In this subsection, we give the proof of Lemma~\ref{lemthmrmhighT1}, following the same steps as Ref.~\cite{rouzé2024optimalquantumalgorithmgibbs}.

\begin{proof}[Proof of Lemma~\ref{lemthmrmhighT1}:]
We define for any $r_0\in\mathbb{N}$ and $\beta$ the quantities
\begin{align}\label{eq:defDelta}
\Delta(r_0)&=\sum_{r\ge r_0}\|\mathcal{L}^{\beta,r\dagger}\!\!-\mathcal{L}^{\beta,(r-1)\dagger }\|_{\infty\to\infty}\!,\\\label{eq:defeta}
\eta(\beta)&=\|\mathcal{L}^{\beta\dagger}-\mathcal{L}^{0 \dagger }\|_{\infty\to\infty},
\end{align}

Each term appearing in the oscillator norm can be bounded locally~\cite{rouzé2024optimalquantumalgorithmgibbs,temme2015faststabilizerhamiltoniansthermalize}. These local bounds allow bounding the decay of the oscillator norm, as is shown in~\cite{temme2015faststabilizerhamiltoniansthermalize}. More specifically, consider the following quantity \cite[Eq.~(A35)]{rouzé2024optimalquantumalgorithmgibbs}, which arises from geometrical considerations of the $D$-dimensional lattice and the behavior of the oscillator norm:

\begin{align}\label{eq:kappabound}
k(r_0,\beta)= 4(2r_0+1)^{2D}\,\eta(\beta)+u(r_0),
\end{align}
where $u(r_0)$ is defined as 
\begin{align}\label{eq:deff0}
\begin{split}
    u(r_0)=5  (2r_0+1)^{2D} \Delta(r_0) +\left( 5+2r_0+2(2r_0+1)^{D} \right)\sum_{l \ge r_0} (2l+1)^{2D-1}\Delta(l) + 2 \sum_{l'\ge r_0} \sum_{l \ge l'} (2l+1)^{2D-2} \Delta(l).
\end{split}
\end{align}
Concretely, if there exists an $r_0\in\mathbb{N}$ and $\beta$ such that $k(r_0,\beta)<1$, then the inequality Eq.~\eqref{gradientest} is satisfied by the choice $\kappa=k(r_0,\beta)$.
The remaining strategy for the proof is thus the following:
First, we provide bounds on $\Delta(r)$ and $\eta(\beta)$ for the Lindbladian $\cL^\beta$.
This allows us to bound $u(r_0)$ for $\beta<\beta^*$, and a properly chosen $r_0$ such that $k(r_0,\beta)<1$, which will complete the proof.

\subsubsection{Bounding $\Delta(r)$}\label{appssbounding}

In the following, the steps in [Ref.~\cite{rouzé2024optimalquantumalgorithmgibbs}, Appendix B] are adapted to the Lindbladian $\mathcal{L}^\beta$ in Eq.~\eqref{eq:LindbladLin}.

As the first step, the Lindbladian 
Eq.~\eqref{eq:LindbladLin} can be rewritten as a telescopic sum
\begin{align}\label{L_telescopic}\mathcal{L}^\beta_{a,\alpha}=\mathcal{L}^{\beta,0}_{a,\alpha}+\sum_{r=0}^{\infty}\left(\mathcal{L}^{\beta,r+1}_{a,\alpha}-\mathcal{L}^{\beta,r}_{a,\alpha}\right)\equiv\mathcal{L}^{\beta,0}_{a,\alpha} + \sum_{r=0}^\infty \mathcal{E}^{\beta,r}_{a,\alpha},
\end{align}
with $\mathcal{E}^{\beta,r}_{a,\alpha}:=\mathcal{L}^{\beta,r+1}_{a,\alpha}-\mathcal{L}^{\beta,r}_{a,\alpha}$ the finite difference operator supported on $B_a(r+1)$. \footnote{When $r > \max_{b \in \Lambda} \ell(a,b)$ we define the error term to be zero.}

In this representation, the difference between $\mathcal{L}^{\beta}_{a,\alpha}$ and $\mathcal{L}^{\beta,r}_{a,\alpha}$  is given by
\begin{align}\label{eq:error:truncation}
    \mathcal{L}^\beta_{a,\alpha}-\mathcal{L}^{\beta,r}_{a,\alpha}=\sum_{r'=r}^{\infty} \mathcal{E}^{\beta,r'}_{a,\alpha} 
\end{align}
It remains to be shown that the terms $ \mathcal{E}^{\beta,r}_{a,\alpha}$ decay sufficiently fast with the radius $r$. 

We aim to show that there is a sequence of Lindbladians with jump operators with support on a region of radius $r$ such that, for a fast-decaying function $\zeta(r)$, 
\begin{align}\label{eq:defzeta}
&\normp{ \mathcal{E}^{\beta,r}_{a,\alpha}}{\infty\rightarrow \infty} \le \zeta(r)\,.
\end{align} 
Recall the jump operators and coherent term of the Lindbladian $\mathcal{L}^{\beta,r}_{a,\alpha}$:
\begin{align}\label{eq:jumplocaltrunc}
L^{\beta,r}_{a,\alpha}:=\int_{-\infty}^{\infty} e^{iH_rt}A^{a,\alpha} e^{-iH_r t}\,f(t)\,\mathrm{d}t\,,
\end{align}
and coherent part
\begin{align}\label{eq:coherentlocaljump}
G^{\beta,r}_{a,\alpha} &\equiv \int_{-\infty}^\infty g_1(t) e^{-i H_r t} \int_{-\infty}^\infty g_2(t') e^{i H_r t'}A^{a,\alpha\dag} e^{-2i H_r t'}A^{a,\alpha} e^{i H_r t'}\rd t' e^{i H_r t}\rd t \\ & = \int_{-\infty}^\infty  \int_{-\infty}^\infty g_1(t) g_2(t') A_r^{a,\alpha \dag} (t-t') A_r^{a,\alpha}(t+t') \rd t' \rd t\,.
\end{align}
The explicit expressions for $f(t)$, $g_1(t)$ and $g_2(t)$ with the choice of $q(\nu)$ in Eq.~\eqref{eq:Gauss} are given by~[cf. \cref{subsec:TimreprLindbladians}]:
\begin{align}\label{eq:timejump}
    f(t)&=\sqrt{\frac{2}{\pi \beta^2}}e^{\frac{(\beta-4 i t)^2}{8\beta^2}},\\
     g_1(t)&=\left[\frac{-1}{\pi \beta \cosh(\frac{2\pi t}{\beta})}\right]*_t\left[\frac{\sqrt{2}}{\beta}e^{\frac{1}{4}-\frac{4t^2}{\beta^2}}\sin\left(\frac{2 t}{\beta}\right)\right],\\
 g_2(t) &=\frac{2\sqrt{2}}{\beta} e^{\frac{(\beta-4 it)^2}{4 \beta^2}}.
\end{align}

In the following, we consider the dissipative part of the Lindbladian $\mathcal{D}_{a,\alpha}^{\beta,r}$  defined as 
\begin{align}\label{Eq:triangle}
    \forall X: \qquad \mathcal{D}_{a,\alpha}^{\beta,r}(X):=L_{a,\alpha}^{\beta,r} X L_{a,\alpha}^{\beta,r\dagger}-\frac{1}{2}\{L_{a,\alpha}^{\beta,r\dagger} L_{a,\alpha}^{\beta,r},X\}.
\end{align}
To proceed, we note that using the triangle inequality, we have
\begin{align}\label{eq:part1}
\begin{split}
& \norm{L_{a,\alpha}^{\beta,(r+1)}X L_{a,\alpha}^{\beta,(r+1) \dagger}-L_{a,\alpha}^{\beta,r}X L_{a,\alpha}^{\beta,r\dagger}}_{\infty}=
\norm{(L_{a,\alpha}^{\beta,(r+1)}-L_{a,\alpha}^{\beta,r})X L_{a,\alpha}^{\beta,(r+1)\dagger}-L_{a,\alpha}^{\beta,r}X (L_{a,\alpha}^{\beta, r\dagger}-L_{a,\alpha}^{\beta,(r+1)\dagger})}_{\infty}\\
&
\leq \norm{(L_{a,\alpha}^{\beta,(r+1)}-L_{a,\alpha}^{\beta,r})X L_{a,\alpha}^{\beta,(r+1)\dagger}}_\infty+\norm{L_{a,\alpha}^{\beta,r}X( L_{a,\alpha}^{\beta,r\dagger}-L_{a,\alpha}^{\beta,(r+1)\dagger})}_\infty
\end{split}
\end{align}
We can further bound the terms using Hoelder's inequality:
\begin{align}\label{eq:part2}
\begin{split}
&\norm{(L_{a,\alpha}^{\beta,(r+1)}-L_{a,\alpha}^{\beta,r})X L_{a,\alpha}^{\beta,(r+1)\dagger}}_\infty\leq \norm{L_{a,\alpha}^{\beta,(r+1)}-L_{a,\alpha}^{\beta,r}}_\infty\norm{X}_\infty\norm{L_{a,\alpha}^{\beta,(r+1)\dagger}}_\infty
\end{split}
\end{align}
Using the same steps, the same result is obtained for  $\frac{1}{2}\{L_{a,\alpha}^{\beta,r\dagger} L_{a,\alpha}^{\beta,r},X\} $. Combining Eq.~\eqref{eq:part1} and Eq.~\eqref{eq:part2}, this gives
\begin{align}\label{eq:part3D}
\begin{split}
    &\norm{\mathcal{D}_{a,\alpha}^{\beta,(r+1)}-\mathcal{D}_{a,\alpha}^{\beta,r}}_{\infty\rightarrow \infty}=\sup_X \frac{\norm{\mathcal{D}_{a,\alpha}^{\beta,(r+1)}(X)-\mathcal{D}_{a,\alpha}^{\beta,r}(X)}_\infty}{\norm {X}_\infty}
    \leq 2 \left(\norm{L_{a,\alpha}^{\beta,r}}_\infty +\norm{L_{a,\alpha}^{\beta,(r+1)}}_\infty\right)\norm{L_{a,\alpha}^{\beta,(r+1)}-L_{a,\alpha}^{\beta,r}}_\infty
\end{split}
\end{align}
Consider the last factor:
\begin{align}
\norm{L_{a,\alpha}^{\beta,(r+1)}-L_{a,\alpha}^{\beta,r}}_\infty= \norm{\int_{-\infty}^{\infty} \mathrm{d}t f(t) e^{-iH_{r+1}t}A^{a,\alpha} e^{iH_{r+1}t}-e^{-iH_{r}t}A^{a,\alpha} e^{iH_rt}}_\infty.
\end{align}

In order to bound the integral, we split the integration range in a short time interval $t\in [-t_0,t_0]$ and long-time tails $|t|>t_0$.
The choice of $t_0$ will be fixed below such that the resulting bound shows explicit decay with $r$ and it is explicit that it vanishes in the limit $\beta\rightarrow 0$.

For the short time interval $t\in [-t_0,t_0]$, we use the
Lieb-Robinson bound Eq.~\eqref{eq:LiebRobinson}. For the long-time tails, we use the trivial upper bound 
\begin{align*}
\norm{e^{-iH_{r+1}t}A^{a,\alpha} e^{iH_{r+1}t}-e^{-iH_{r}t}A^{a,\alpha} e^{iH_rt}}_\infty \leq \norm{e^{-iH_{r+1}t}A^{a,\alpha} e^{iH_{r+1}t}}_\infty+\norm{e^{-iH_{r}t}A^{a,\alpha} e^{iH_rt}}_\infty\leq 2 \norm{A^{a,\alpha}}_\infty.
\end{align*}
The last factor in Eq.~\eqref{eq:part3D} can then be bounded by
\begin{align}
\norm{L_{a,\alpha}^{\beta,(r+1)} - L_{a,\alpha}^{\beta,r}}_\infty &\le  \norm{A^{a,\alpha}}_\infty \int_{-t_0}^{t_0} |f(t)| \frac{(2 J \vert t \vert )^r}{r!} \rd t +  4\norm{A^{a,\alpha}}_\infty \int_{t_0}^\infty  f(t) \rd t\nonumber
\\
&\le \norm{A^{a,\alpha}}_\infty \Biggl(\frac{(2 J \e \vert t_0 \vert )^r}{ r^r} \int_{-\infty}^{\infty} |f(t)|\rd t+ 4 \int_{t_0}^\infty  |f(t)| \rd t\nonumber  \Biggl)\\
& \le \norm{A^{a,\alpha}}_\infty e^{\frac{1}{8}}\Biggl(\frac{(2 J \e \vert t_0 \vert )^r}{ r^r} + 2 e^{\frac{-2 t_0^2}{\beta^2}} \frac{\beta}{\sqrt{2\pi}t_0}\Biggl).
\end{align}
In the second line, we used the estimate $r! \ge \frac{r^r}{e^{r-1}}$ and in the last line the upper bound on the complementary error function $\erfc(x) \le \frac{e^{-x^2}}{\sqrt{\pi}x}$ to evaluate the integrals.

With the choice $t_0 = \frac{r \gamma(\beta J)} {2J e}$ with $\gamma(x)=\frac{\sqrt{x}}{1+\sqrt{x}}$, this gives 
\begin{align}\label{eq:abovegamma}
\norm{L_{a,\alpha}^{\beta,(r+1)} - L_{a,\alpha}^{\beta,r}}_\infty &\le e^{\frac{1}{8}} \norm{A^{a,\alpha}}_\infty \left ( \gamma(\beta J)^r +  \frac{  4 \beta J e }{\sqrt{2\pi} r \gamma(\beta J)} \e^{-\frac{r^2 \gamma(\beta J)^2}{2(\beta J \e )^2}} \right)
\end{align}
Furthermore, we have 
\begin{align}
\norm{L_{a,\alpha}^{\beta,(r+1)}}_\infty\leq e^{1/8}\norm{A^{a,\alpha}}_\infty.
\end{align}
Thus we have 
\begin{align}\label{eq:defxi1}
   \norm{  \mathcal{D}_{a,\alpha}^{\beta,(r+1)}-\mathcal{D}_{a,\alpha}^{\beta,r} }_{\infty\to \infty} \le  4 \norm{A^{a,\alpha}}^2_\infty e^{\frac{1}{4}} \left ( \gamma(\beta J)^r +  \frac{  4 \beta J e }{\sqrt{2\pi} r \gamma(\beta J)} \e^{-\frac{r^2 \gamma(\beta J)^2}{2(\beta J \e )^2}} \right)\equiv\norm{A^{a,\alpha}}_\infty^2  \xi_1(r,\beta J)\,.
\end{align}

For the coherent part, similar to our derivation of \cref{eq:part1,eq:part2}, it can be shown that
\begin{align}
\norm{ -i[G_{a,\alpha}^{\beta,r}-G_{a,\alpha}^{\beta,(r+1)},\cdot] }_{\infty\rightarrow\infty}\leq 2 \norm{G_{a,\alpha}^{\beta,r}-G_{a,\alpha}^{\beta,(r+1)} }_{\infty}
\end{align}
and
\begin{align}
  \norm{  G_{a,\alpha}^{\beta,r}-G_{a,\alpha}^{\beta,(r+1)} }_\infty \le \int_{-\infty}^\infty  \int_{-\infty}^\infty \vert g_1(t) g_2(t') \vert \norm{ A_r^{a,\alpha} (t-t')^{\dag} A_r^{a,\alpha}(t+t') -  A_{r+1}^{a,\alpha\dag} (t-t') A_{r+1}^{a,\alpha}(t+t') }_\infty \rd t' \rd t . \label{eq:gnormdifference}
\end{align}
We can split the interval into five regions as follows, with a different cutoff $t_0$:
\begin{equation}
 \int_{-\infty}^\infty  \int_{-\infty}^\infty \dots = \int_{-t_0}^{t_0}  \int_{-t_0}^{t_0} + \int_{t_0}^\infty  \int_{-t_0}^{t_0} + \int_{-\infty}^\infty\int_{t_0}^\infty + \int_{-\infty}^{-t_0} \int_{-t_0}^{t_0} +\int_{-\infty}^\infty \int_{-\infty}^{-t_0}.
\end{equation}
The integral can be bounded as:
\be \int_{-\infty}^\infty  \int_{-\infty}^\infty |\dots|\leq \int_{-t_0}^{t_0}  \int_{-t_0}^{t_0} + \int_{t_0}^\infty  \int_{-\infty}^\infty + \int_{-\infty}^\infty\int_{t_0}^\infty + \int_{-\infty}^{-t_0} \int_{-\infty}^\infty +\int_{-\infty}^\infty \int_{-\infty}^{-t_0}.
\ee
Consider the first integral.
With a similar argument as \cref{eq:part1,eq:part2,eq:part3D}, we obtain
\begin{align}
 \int_{-t_0}^{t_0}  \int_{-t_0}^{t_0}   \vert g_1(t) g_2(t') \vert \norm{ A_r^{a,\alpha\dag} (t-t') A_r^{a,\alpha}(t+t') -  A_{r+1}^{a,\alpha\dag} (t-t') A_{r+1}^{a,\alpha}(t+t') }_\infty \rd t' \rd t \le e^{1/2} \frac{(4 J   e \vert t_0 \vert )^r}{2r^{r}} \times \nonumber  \\ \norm{A^{a,\alpha \dag} A^{a,\alpha}}_\infty,
\end{align}
where we used the Lieb-Robinson bound and that $\int_{-\infty}^\infty  \int_{-\infty}^\infty \vert g_1(t) g_2(t') \vert \rd t' \rd t \le e^{1/2}/2$ [cf. \cref{subsec:TimreprLindbladians}] 

For the other long-time contributions, we simply bound
\be
\norm{ A_r^{a,\alpha \dag} (t-t') A_r^{a,\alpha}(t+t') -  A_{r+1}^{a,\alpha \dag} (t-t') A_{r+1}^{a,\alpha}(t+t') }_\infty \le 2 \norm{A^{a,\alpha\dag} A^{a,\alpha}}_\infty = 2 \norm{ A^{a,\alpha}}_\infty^2.
\ee
Given the tail bounds [cf Eq.~\eqref{eq:tailboundg1} and Eq.~\eqref{eq:tailboundg2}]
\begin{align}
    &\int_{t_0}^\infty \vert g_1(t) \vert \text{d}t \le  \frac{\sqrt{2}}{2 \pi^{3/2}}\e^{\frac{1}{4}}\left(\e^{\frac{\pi^2}{4}} \e^{-\frac{2\pi t_0}{\beta}}+\frac{1}{2}\erfc\left(\frac{2 t_0}{\beta}\right)\right) \\
    & \int_{t_0}^\infty \vert g_2(t) \vert \text{d}t \le \sqrt{\frac{\pi}{2}} \e^{\frac{1}{4}}\erfc\left(\frac{2t_0}{\beta}\right), 
\end{align}
the contribution of all the other terms in the integral in \cref{eq:gnormdifference} can be bounded as
\begin{align}
2\left(  \int_{-\infty}^\infty\int_{t_0}^\infty +... \right)\vert g_1(t) g_2(t') \vert \rd t' \rd t \norm{A^{a,\alpha}}_\infty^2 \le 2\left(\frac{2 e^{\frac{1}{2}}}{\pi}\e^{\frac{\pi^2}{4}} \e^{-\frac{2\pi t_0}{\beta}}+ e^{\frac{1}{2}}\frac{\beta \e^{-\frac{2 t_0^2}{\beta^2}}}{2 t_0\sqrt{\pi}}\right) \norm{A^{a,\alpha}}_\infty^2.
\end{align}
With the choice $t_0 = \gamma(\beta J) r/ (4 J \e)   $ we obtain

\begin{align}\label{eq:defxi2}
  \norm{ -i[G_{a,\alpha}^{\beta,r}-G_{a,\alpha}^{\beta,(r+1)},\cdot] }_{\infty\rightarrow\infty} & \le \e^{1/2} \norm{A^{a,\alpha}}_\infty^2 \left ( \gamma(\beta J)^r +\frac{8}{\pi} \e^{\frac{\pi^2}{4}- \frac{\pi \gamma(\beta J)}{2 \beta J \e } r  }+\frac{8 \beta J \e  }{\sqrt{\pi} r \gamma(\beta J) } \e^{- \frac{(r \gamma(\beta J))^2}{8 ( \beta J  \e )^2}}  \right) \\& \equiv \norm{A^{a,\alpha}}_\infty^2 \xi_2(r,\beta J).
\end{align}

With the assumption $\norm{A^{a,\alpha}}_\infty \le 1$, we can take  $ \zeta(r)$ in Eq.~\eqref{eq:defzeta} to be
\begin{align}
    \zeta(r) = \xi_1(r,\beta J) +\xi_2(r,\beta J)\,.
\end{align}
$\zeta(r)$ is exponentially decaying in $r$ and vanishes as $\beta \rightarrow 0$.
With the definition $\upsilon:=\beta J/\gamma(\beta J)=\sqrt{\beta J}(1+\sqrt{\beta J})$ and taking $r\ge 1$, a crude bound for $\zeta(r)$ is given by
\begin{align} \label{eq:3terms}
\zeta(r)&\le 7 \gamma(\beta J)^r+44\upsilon e^{-\frac{r^2 }{2 e^2 \upsilon^2}}+50 e^{-\frac{\pi r}{2 e \upsilon }}.
\end{align}
\begin{figure}[h!]
	\centering
	\includegraphics[width=0.5\textwidth]{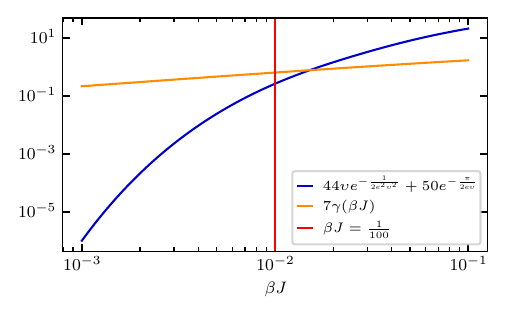}
    \caption{The contributions in Eq.~\eqref{eq:3terms} for $r=1$. As long $\beta J\le\frac{1}{100}$, the term $7\gamma(\beta J)^r$ is dominant.} 
	\label{fig:Bounds}
\end{figure}
For $\beta J\le 1/100$, the term $7 \gamma(\beta J)^r$ in Eq.~\eqref{eq:3terms} is dominant\footnote{The first term decreases slower with increasing $r$ than the other two terms, therefore it suffices to check $r=1$.}, therefore the other terms contribute at most $7\gamma(\beta J)^r$. Thus for $\beta J\le\frac{1}{100}$, we obtain
\begin{align}
\norm{\mathcal{E}_{a,\alpha}^{\beta,r}}_{\infty\rightarrow \infty}\leq \zeta(r)&\le 14 \gamma(\beta J)^r,
\end{align}
and therefore
\begin{align}
\Delta(r_0)=\sum_{r=r_0}^\infty \norm{\mathcal{E}_{a,\alpha}^{\beta,r}}_{\infty\rightarrow \infty}=\sum_{r\ge r_0}\zeta(r)&\le \frac{14 \gamma(\beta J)^{r_0}}{1-\gamma(\beta J)}.
\end{align}
The proper $r_0$ will be chosen below  in \cref{subsec:boundingu(r_0)}.
\subsubsection{Bounding $\eta(\beta)$}

As a next step, we have to bound  $\eta(\beta) = \|\mathcal{L}^{\beta\dagger}-\mathcal{L}^{0 \dagger }\|_{\infty\to\infty}$.

Consider $\mathcal{L}^{0}$. It can be shown [cf. \cref{subsec:beta0}] that
$\mathcal{L}^{0\dagger}(X):= \sum_{a\in \Lambda}\left( \frac{1}{2}I_a\otimes\operatorname{tr}_a(X)-X\right)$ for any $X$. This is the generator of the fully depolarizing channel. 

As before, we consider coherent and dissipative parts separately. For the jump operators, we have
\begin{align}
\norm{L_{a,\alpha}^{\beta}-L_{a,\alpha}^{0}}_\infty &\le \int_{-\infty}^{\infty} \norm{A^{a,\alpha}(t)-A^{a,\alpha} }_\infty \,f(t)\,\rd t\qquad 
\\ & \le \norm{[H,A]}_\infty \int_{-\infty}^{\infty}  \vert t   f(t)\vert \text{d}t
\\ & \le \frac{2}{\sqrt{2\pi}}e^{\frac{1}{8}} \beta J \norm{A^{a,\alpha}}_\infty.
\end{align}
Here we have used $\norm{A(t)-A}_\infty\le \vert t \vert \norm{[H,A]}_\infty\le 2J\vert t \vert \norm{A}_\infty $~\cite{rouzé2024optimalquantumalgorithmgibbs}. With $\norm{A^{a,\alpha}}_\infty \le 1$, we obtain, with a similar derivation as \cref{eq:part3D}, 
\begin{align}\label{eq:etaboundA}
\|\mathcal{D}_{a,\alpha}^{\beta\dagger}-\mathcal{D}_{a,\alpha}^{0 \dagger }\|_{\infty\to\infty}\le  \frac{8 \beta J}{\sqrt{2 \pi}}e^{\frac{1}{8}}.
\end{align}

For the coherent part, we introduce $c_1 =\int_{-\infty}^\infty  \int_{-\infty}^\infty  g_1(t) g_2(t')  \rd t' \rd t$.
With our choice of the jump operators $A^{a,\alpha}$, we have $(A^{a,\alpha})^2=\mathbb{I}$, and

\begin{align}
\begin{split}
 \norm{  G^\beta_{a,\alpha}- c_1 I }_\infty &\le \int_{-\infty}^\infty  \int_{-\infty}^\infty \vert g_1(t) g_2(t') \vert {\norm{A^{a,\alpha} (-t') A^{a,\alpha}(t')- I }_\infty}\,\rd t' \rd t
\\ &
=\int_{-\infty}^\infty  \int_{-\infty}^\infty \vert g_1(t) g_2(t') \vert {\norm{\left(A^{a,\alpha} (-t')-A^{a,\alpha}\right) A^{a,\alpha} (t')+ A^{a,\alpha}\left(A^{a,\alpha}(t')- A^{a,\alpha}\right) }_\infty}\,\rd t' \rd t\\&
\le 4  J \norm{A^{a,\alpha}}_\infty^2 \int_{-\infty}^\infty  \int_{-\infty}^\infty \vert g_1(t) g_2(t') \vert \vert t'\vert \rd t' \rd t = \frac{e^{\frac{1}{2}}}{\sqrt{\pi}} \beta J \norm{A^{a,\alpha}}_\infty^2.
\end{split}
\end{align}
This gives
\begin{align}\label{eq:etaboundG}
\norm{ -i[ G^\beta_{a,\alpha}- c_1 I,\cdot] }_{\infty\rightarrow\infty}\leq \frac{2 e^{\frac{1}{2}}}{\sqrt{\pi}} \beta J \norm{A^{a,\alpha}}_\infty^2.
\end{align}
Putting Eq.~\eqref{eq:etaboundA} and Eq.~\eqref{eq:etaboundG} together, this means we can bound
\begin{align}
\eta(\beta) =  \|\mathcal{L}^{\beta\dagger}-\mathcal{L}^{0 \dagger }\|_{\infty\to\infty} \leq \|\mathcal{D}_{a,\alpha}^{\beta\dagger}-\mathcal{D}_{a,\alpha}^{0 \dagger }\|_{\infty\to\infty} + \norm{ -i[ G^\beta_{a,\alpha}- c_1 I,\cdot] }_{\infty\rightarrow\infty} \leq \left( \frac{2e^{1/2}}{\sqrt{\pi}} + \frac{8e^{1/8}}{\sqrt{2\pi}}\right) \beta J <  6 \beta J.
\end{align}
\subsubsection{Bounding $u(r_0)$}\label{subsec:boundingu(r_0)}
It remains to get a bound for $u(r_0)$ in Eq.~\eqref{eq:deff0}:
\begin{align}\label{eq:deff0again}
\begin{split}
    u(r_0)=5  (2r_0+1)^{2D} \Delta(r_0) +\left( 5+2r_0+2(2r_0+1)^{D} \right)\sum_{l \ge r_0} (2l+1)^{2D-1}\Delta(l) + 2 \sum_{l'\ge r_0} \sum_{l \ge l'} (2l+1)^{2D-2} \Delta(l).
\end{split}
\end{align}
We define $s=(\beta J)^{1/D}$. By definition $\gamma(\beta J)<(\beta J)^{1/2} = s^{D/2}$. For $r_0\geq 1$, and $\beta J\leq \frac{1}{100}$, 
$\gamma(\beta J)<\frac{1}{2}$ and thus $\Delta(r)<28 \gamma(\beta J)$.
We can thus bound
\begin{align}
    &\sum_{l \ge r_0} (2l+1)^{2D-1}\Delta(l)\leq \sum_{l \ge r_0} (2l+1)^{2D}\Delta(l) \leq \frac{14}{1-\gamma(\beta J)} \sum_{l \ge r_0} (2l+1)^{2D} \gamma(\beta J)^l\leq 28 \sum_{l\geq r_0} \left((2l+1)^4 s^l\right)^{D/2}\nonumber\\&\leq 28 \left(\sum_{l\geq r_0} (2l+1)^2 s^{l/2}\right)^{D}.
\end{align}
Similarly, we can bound
\begin{align}
\begin{split}
    &2 \sum_{l'\ge r_0} \sum_{l \ge l'} (2l+1)^{2D-2}\Delta(l)\leq \sum_{l\ge r_0} (2l+1)^{2D}\Delta(l)\leq 28 \left(\sum_{l\geq r_0} (2l+1)^2 s^{l/2}\right)^{D}.
\end{split}
\end{align}

We have thus
\begin{align}
    u(r_0)&\leq 140((2r_0+1)s^{r_0/4})^{2D}+28 \left( 6+2r_0+2(2r_0+1)^{D} \right)\left(\sum_{l\geq r_0} (2l+1)^2 s^{l/2}\right)^{D}\nonumber\\&\leq 140((2r_0+1)^2 s^{r_0/2})^{D}+140(2r_0+1)^{D} \left(\sum_{l\geq r_0} (2l+1)^2 s^{l/2}\right)^{D}.
\end{align}
For $s<\frac{1}{2020}$ and $r_0=4$, it can be easily shown that $k(r_0,\beta)<1$ in Eq.~\eqref{eq:kappabound}, and this concludes the proof of \cref{gradientest}, with $\beta J<\frac{1}{2020^D}$.

Finally, we prove the statement for truncated Lindbladians (\cref{gradientesttruncated}).
For truncated Lindbladians $\mathcal{L}^{\beta,r}$ we have that $\Delta(l) = 0$ for $l > r$.
Similarly to Eq.~\eqref{eq:kappabound}, the exponential decay of the oscillator norm is characterized by 
\begin{align}\label{eq:kappaboundprime}
k'(r_0,\beta)= 4(2r_0+1)^{2D}\,\eta'(\beta)+u'(r_0).
\end{align}
Here $u'(r)$ is a modified version of $u(r)$, with
\begin{align}
 u'(r_0) &=5  (2r+1)^{2D} \Delta(r_0) +\left( 5+2r+2(2r_0+1)^{D} \right)\sum_{l \ge r_0}^r (2l+1)^{2D-1}\Delta(l) + 2 \sum_{l'\ge r} \sum_{l \ge l'}^r (2l+1)^{2D-2} \Delta(l) \\
&\leq 140((2r_0+1)^2 s^{r_0/2})^{D}+140(2r_0+1)^{D} \left(\sum_{l\geq r_0} (2l+1)^2 s^{l/2}\right)^{D},
\end{align}
and $\eta'(\beta)=\eta(\beta)$.
For $r_0>r$, $u'(r_0)=0$ and $k(r_0,\beta)$ in Eq.~\eqref{eq:kappabound} is bound by 
 $\eta'(\beta) \leq 6\beta J = 6s^D$.
 We see thus that for the truncated case as well, for $r\geq 1$, $\kappa'=k'(r,\beta)$ in \cref{eq:kappabound} is less than 1 for $s<1/2020$.
This proves \cref{gradientesttruncated} as well with $\beta J < \frac{1}{2020^D}$, concluding the proof.
\end{proof}

\medskip
\subsection{Proof of Theorem~\ref{apx_thrm:trunc}}\label{appss:proofonenorm}
In order to apply Lemma~\ref{fact:mixing_time_to_fixedpoint} for proving Theorem~\ref{apx_thrm:trunc}, it remains to bound $\normp{ \CL^\beta - \CL^{\beta.r} }{1\to 1}$.
But this follows immediately from the proof for Lemma~\ref{lemthmrmhighT1}:
note that the bounds of $\mathcal{L}^{\beta}_{a,\alpha}$ and $\mathcal{L}^{\beta,r}_{a,\alpha}$ derived for the $\norm{.}_{\infty\rightarrow \infty}$ norm also hold for the $\norm{.}_{1\rightarrow 1}$-norm.
To see this, compare for instance with the bound in Eq.~\eqref{eq:part2}. We have for the transition part $T_{a,\alpha}^r(\rho)=L_{a,\alpha}^{\beta,r}\rho L_{a,\alpha}^{\beta,r\dagger}$
\begin{align}\label{eq:part3}
\begin{split}
\norm{T_{a,\alpha}^{r+1}-T_{a,\alpha}^{r}}_{1\rightarrow 1}&=\sup_X \frac{\norm{T_{a,\alpha}^{r+1}(X)-T_{a,\alpha}^{r+1}(X)}_1}{\norm{X}_1}
\\&=\sup_X \frac{\norm{(L_{a,\alpha}^{\beta,(r+1)}-L_{a,\alpha}^{\beta,r})X L_{a,\alpha}^{\beta,(r+1)\dagger}+L_{a,\alpha}^{\beta,r})X (L_{a,\alpha}^{\beta,(r+1)\dagger}-L_{a,\alpha}^{\beta,(r)\dagger})}_1}{\norm{X}_1}
\end{split}
\end{align}
However,
\begin{align}
\begin{split}
&\norm{(L_{a,\alpha}^{\beta,(r+1)}-L_{a,\alpha}^{\beta,r})X L_{a,\alpha}^{\beta,(r+1)\dagger}}_1\leq \norm{L_{a,\alpha}^{\beta,(r+1)}-L_{a,\alpha}^{\beta,r}}_\infty\norm{X}_1\norm{L_{a,\alpha}^{\beta,(r+1)\dagger}}_\infty
\end{split}
\end{align}
and thus 
\begin{align}
\norm{T_{a,\alpha}^{r+1}-T_{a,\alpha}^{r}}_{1\rightarrow 1}\leq 2 \norm{L_{a,\alpha}^{\beta,(r+1)}-L_{a,\alpha}^{\beta,r}}_\infty \norm{L_{a,\alpha}^{\beta,(r+1)\dagger}}_\infty.
\end{align}
The same argument holds for the other parts of the Lindbladian $\mathcal{L}^{\beta,r}_{a,\alpha}$.
From this observation, it immediately follows, using the same steps as in \cref{subsec:Tracedistancelemma}:
\begin{align}
    \sum_\alpha \normp{\mathcal{L}^\beta_{a,\alpha}-\mathcal{L}^{\beta,r}_{a,\alpha}}{1\rightarrow 1}\leq \sum_\alpha \sum_{r'=r}^{\infty}\normp{\mathcal{E}^{\beta,r'}_{a,\alpha}}{1\rightarrow 1}\leq \sum_\alpha 7 \frac{\gamma(\beta J)^{r+1}}{1-\gamma(\beta J)}=\mathcal{O}((\beta J)^{r/2})
\end{align}
Adding contributions from $n$ sites, we get
\begin{align}
 \norm{\mathcal{L}^{\beta} - \mathcal{L}^{\beta,r}}_{1\to 1} = \mathcal{O}(n(\beta J)^{r/2}).
\end{align}
Also, since we have proved the mixing time scales as $\log{n}$, it follows from Lemma~\ref{fact:mixing_time_to_fixedpoint} that the trace distance between the steady states of the original and truncated evolution is given by
\be
\|\rho_{\beta,r} - \rho_\beta\|_1 = O((\beta J)^{r/2}n\log n).
\ee
 This expression concludes our proof of Theorem~\ref{apx_thrm:trunc}.

\section{Error Bound on Local Observables}\label{section:Gap stability}

In this Appendix, we provide the proof for the following theorem.

\begin{thrm}[Truncation Error of local observables (\cref{thrm:trunc_local} restated).]  \label{apx_thrm:trunc_local}
Under the assumptions of Theorem 1, consider $\beta<\beta^*$ and let $O_X\in\mathsf S_X$ be supported on $X\subseteq\Lambda$. Then there exist constants $\gamma,\eta>0$, depending only on the model parameters and $\beta$, such that $\rho_{\beta,r}$ satisfies
\begin{equation}
	\left| \tr(\rho_{\beta, r} O_X) - \tr(\rho_\beta O_X) \right| \leq  \norm{O_X}c(|X|)\Bigl[e^{-\gamma r} +ne^{-\eta d_X}\Bigr]
\end{equation}
where $c(|X|)=O\bigl(\mathrm{poly}(|X|)\bigr)$, $n \equiv |\Lambda|$ is the number of qubits, and
\[
d_{X} \;=\; \min_{\,a\in X\,}\ell\bigl(a,\partial\Lambda\bigr)
\]
is the minimal graph‐distance from any $a\in X$ to the boundary $\partial\Lambda$.
\end{thrm}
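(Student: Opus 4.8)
The plan is to prove the stronger, time-resolved estimate that, for every $O_X\in\mathsf S_X$ and every $t\ge 0$,
\[
\bigl\|\,e^{t\mathcal{L}^{\beta\dagger}}(O_X)-e^{t\mathcal{L}^{\beta,r\dagger}}(O_X)\,\bigr\|_\infty\;\le\;\|O_X\|\,c(|X|)\bigl[e^{-\gamma r}+n\,e^{-\eta d_X}\bigr],
\]
and then to let $t\to\infty$. Since $\mathcal{L}^\beta$ and $\mathcal{L}^{\beta,r}$ are irreducible with full-rank fixed points---their single-qubit jump operators form a complete generating set, which is the irreducibility property already used to get a unique fixed point in the proof of \cref{apx_thm:highT1}---they are primitive, so their Heisenberg semigroups converge, $e^{t\mathcal{L}^{\beta\dagger}}(O_X)\to\operatorname{tr}(\rho_\beta O_X)\,I$ and $e^{t\mathcal{L}^{\beta,r\dagger}}(O_X)\to\operatorname{tr}(\rho_{\beta,r}O_X)\,I$; the operator norm of the difference of the limits is then exactly $|\operatorname{tr}(\rho_{\beta,r}O_X)-\operatorname{tr}(\rho_\beta O_X)|$ and the theorem follows.

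For the time-resolved estimate I would start from Duhamel's formula,
\[
e^{t\mathcal{L}^{\beta,r\dagger}}(O_X)-e^{t\mathcal{L}^{\beta\dagger}}(O_X)=\int_0^t\!\mathrm{d}s\;e^{(t-s)\mathcal{L}^{\beta\dagger}}\,\mathcal{E}^\dagger\,e^{s\mathcal{L}^{\beta,r\dagger}}(O_X),\qquad\mathcal{E}:=\mathcal{L}^{\beta,r}-\mathcal{L}^\beta=\sum_{a\in\Lambda}\mathcal{E}_a,
\]
and bound the integrand site by site using three facts already in hand. First, the perturbation is uniformly small: the telescoping decomposition and the $\Delta(r)$ bounds from the proofs of \cref{lemthmrmhighT1} and \cref{apx_thrm:trunc} give $\|\mathcal{E}_a\|_{1\to1}=O((\beta J)^{r/2})$ for every $a$, with exponentially decaying tails beyond $B_a(r)$; writing $(\beta J)^{r/2}=e^{-\gamma r}$ with $\gamma=\tfrac12\log(1/\beta J)>0$ (positive since $\beta<\beta^*$ forces $\beta J<1$) identifies the first error term. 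Second, all generator terms are quasi-local, so the dissipative flows obey a Lieb--Robinson bound inherited from Eq.~\eqref{eq:LiebRobinson}; consequently $\mathcal{E}_a^\dagger\,e^{s\mathcal{L}^{\beta,r\dagger}}(O_X)$ is exponentially small unless $a$ lies inside the light cone $\ell(a,X)\lesssim v s$, which both caps the number of contributing sites and, once $vs\gtrsim d_X$, is the sole origin of the boundary term, each of the up to $n$ near-boundary terms carrying a Lieb--Robinson factor $\le e^{-\eta d_X}$. Third, rapid mixing enters through the oscillator-norm contraction $\vertiii{e^{u\mathcal{L}^{\beta\dagger}}(Y)}\le e^{-(1-\kappa)u}\vertiii{Y}$ of \cref{lemthmrmhighT1}: combined with the fact that each $\mathcal{E}_a$ is trace-annihilating---so the identity component of $\mathcal{E}_a^\dagger(\cdot)$ is itself suppressed away from the support of $\mathcal{E}_a$---the decay in $t-s$ kills the contribution of times $s\ll t$, making the $s$-integral converge uniformly in $t$ and leaving only a polynomial-in-$|X|$ oscillator-norm overhead, which is $c(|X|)$. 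These are exactly the hypotheses and mechanism of the stability theorem for rapidly mixing local dissipative systems of Ref.~\cite{cubitt2015stability}, so the cleanest route is to verify its assumptions---a uniform family of quasi-local Lindbladians (from the quasi-locality estimates in the proof of \cref{lemthmrmhighT1}) with logarithmic mixing time (\cref{apx_thm:highT1})---and invoke it with $\mathcal{E}=\sum_a\mathcal{E}_a$ as the local perturbation.

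The main obstacle is the locality bookkeeping in the Duhamel integral: the jump operators are only quasi-local, with tails controlled through the factorial Lieb--Robinson series rather than a sharp cone, so one must track that decay simultaneously with the $e^{-\gamma r}$ decay of $\mathcal{E}_a$ and the competing mixing decay $e^{-(1-\kappa)(t-s)}$, and show the result factorizes cleanly into the $r$-dependent piece $e^{-\gamma r}$ and the finite-size piece $n\,e^{-\eta d_X}$ with an $r$-independent prefactor that is only polynomial in $|X|$. Relatedly, Ref.~\cite{cubitt2015stability} is phrased for strictly finite-range generators and without a coherent term, so a re-derivation of its argument adapted to the present $\mathcal{L}^\beta$---leaning on the $\eta(\beta)$ and $\Delta(r)$ estimates derived earlier---is the technical heart of the proof, while the $t\to\infty$ reduction and the Duhamel step itself are routine.
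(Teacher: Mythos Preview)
Your approach is essentially the paper's: establish the time-resolved bound via the stability theorem of Ref.~\cite{cubitt2015stability} and then let $t\to\infty$. The one substantive difference is the choice of base Lindbladian. You take $\mathcal{L}^\beta$ as unperturbed and $\mathcal{E}=\mathcal{L}^{\beta,r}-\mathcal{L}^\beta$ as the perturbation, which is why you end up worrying---in your final paragraph---about re-deriving Cubitt et al.\ for quasi-local generators. The paper instead takes the \emph{truncated} $\mathcal{L}^{\beta,r}$ as the base and writes $\mathcal{L}^{\beta}=\mathcal{L}^{\beta,r}+\mathcal{E}^{\beta,r}$ with $\mathcal{E}^{\beta,r}=\sum_{a}\sum_{r'\ge r}\mathcal{E}^{\beta,r'}_a$. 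Since $\mathcal{L}^{\beta,r}$ is strictly finite-range and has already been shown to be rapidly mixing with a unique fixed point (\cref{apx_thm:highT1}, \cref{apx_thrm:trunc}), the finite-range version of the stability theorem applies off the shelf, with $\epsilon=21\,\gamma(\beta J)^r$ and decay profile $e(r')=\gamma(\beta J)^{r'-r}$. This swap of roles completely removes what you flag as ``the technical heart of the proof''; your separate concern that Ref.~\cite{cubitt2015stability} excludes coherent terms is in any case unfounded, as their framework covers general local Lindbladians.
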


\medskip
For a fixed region $X$, the distance $d_X$, which is distance between $X$ and to the complement $\Lambda^c$ of the lattice $\Lambda$ and is thus the distance to the boundaries, increases with system size, which implies that the term $|\Lambda| \nu_\eta^{-1}(d_X)$ vanishes as the system becomes large. This term can therefore be interpreted as a finite-size correction~\cite{cubitt2015stability}.
The proof exploits two key features: the rapid-mixing property of the Lindbladian $\cL^\beta$ and the quasilocal nature of the perturbation introduced by the truncation. Together, they allow us to invoke the stability theory for local, rapidly mixing dissipative systems developed in Ref.~\cite{cubitt2015stability}. We summarize these foundational results in \cref{ss:lindbladian stability} and present the detailed proof of Theorem~\ref{apx_thrm:trunc_local} in\cref{ss:proof_local_observables}.

\subsection{Stability of local and rapidly mixing Lindbladians}\label{ss:lindbladian stability}

We will make use of the following stability result, adapted from Ref.~\cite{cubitt2015stability}:

\begin{thrm}[Adapted from Theorem 7 in~\cite{cubitt2015stability}]\label{thrm:Lindblad stability}
Let $\cL$ be a uniform family of local Lindbladians with finite range interactions, a unique fixed point, and satisfying rapid mixing. Consider a perturbation of the form
\begin{align*}
\mathcal{E}^{\Lambda}=\sum_{a\in \Lambda}\sum_{r'\geq 0} \mathcal{E}_{a,r'},
\end{align*}
where:
\begin{enumerate}
\item Each $\mathcal{E}_{a,r'}$ acts on $B_a(r')$ and $\norm{\mathcal{E}_{a,r'}}_{1\rightarrow 1}\leq \epsilon e(r')$, where $\epsilon > 0$ is a small parameter characterizing the perturbation strength and $e(r')$ is a rapidly decaying function with $e(r')<1$.
\item $\mathcal{L}_{a,r'} + \mathcal{E}_{a,r'}$ is a Lindbladian itself for all $a$ and $r'\geq 0$.
\end{enumerate}
For any finite lattice $\Lambda$, define $\mathcal{L}^\Lambda=\sum_{B_a(r)\subset \Lambda}\cL_{a,r}$ the truncated version of $\cL$ on $\Lambda$
\begin{equation}
T_t = \exp(t\mathcal{L}^\Lambda)
\end{equation}
and
\begin{align}
S_t=\exp(t(\cL^{\Lambda}+\mathcal E^{\Lambda}))
\end{align}
Then, for any observable $O_X$ supported on $X \subset \Lambda$, the following bound holds:
\begin{align}
\norm{T_t^\dagger(O_X)-S_t^\dagger(O_X)}\leq c(|X|)\norm{O_X}(\epsilon+|\Lambda| \nu_\eta^{-1}(d_X)),
\end{align}
where $d_X = \inf_{a\in X, b\in \Lambda^c}\ell(a, b)$  is the distance between the set $A$ and the complement of $\Lambda$ with respect to $\mathbb{Z}^d$, $\nu^{-1}_\eta(x) = e^{-\eta x}$ 
and $\eta$ is a positive constant independent of both $\Lambda$ and $t$.  Moreover, $\nu_\eta^{-1}(d_X)\leq (1+d_X)^{-D-1}$, and $c(|X|)$ is polynomially bounded in $|X|$ and independent of $\Lambda$ and $t$.
\end{thrm}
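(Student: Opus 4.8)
The plan is to deduce \cref{thrm:Lindblad stability} as a finite-volume specialization of the stability theory of Ref.~\cite{cubitt2015stability}, so the bulk of the work is verifying that our setting meets the hypotheses of Theorem~7 there, rather than constructing a new argument. First I would check the structural assumptions on $\cL$: that it is a uniform family of finite-range local Lindbladians with a unique full-rank fixed point (assumed), and, crucially, that it satisfies the quantitative rapid-mixing condition of Ref.~\cite{cubitt2015stability} --- namely that the worst-case distance of $e^{t\cL^\Lambda}(\rho)$ to the fixed point decays inverse-superpolynomially in $n$ once $t$ exceeds a polylogarithmic threshold. The logarithmic mixing time established in \cref{apx_thm:highT1} for $\beta<\beta^*$ supplies exactly this. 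Next I would match the perturbation class: our decomposition $\cE^\Lambda=\sum_{a\in\Lambda}\sum_{r'\geq0}\cE_{a,r'}$, with $\cE_{a,r'}$ supported on $B_a(r')$ and $\|\cE_{a,r'}\|_{1\to1}\leq\epsilon\,e(r')$ for a fast-decaying $e(r')<1$, together with the requirement that $\cL_{a,r'}+\cE_{a,r'}$ is again a CPTP generator, is precisely the admissible perturbation family of Ref.~\cite{cubitt2015stability} once one identifies the overall strength $\epsilon$ and the decay profile $e(r')$.

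For completeness, and to keep the constants in the bound traceable, I would recall the mechanism behind Theorem~7. With $T_t=e^{t\cL^\Lambda}$ and $S_t=e^{t(\cL^\Lambda+\cE^\Lambda)}$, the Heisenberg-picture interpolation identity
\begin{align}
T_t^\dagger(O_X)-S_t^\dagger(O_X)=-\int_0^t \d s\; S_{t-s}^\dagger\,(\cE^\Lambda)^\dagger\, T_s^\dagger(O_X)
\end{align}
is controlled by two inputs. The Lieb-Robinson bound in \cref{eq:LiebRobinson} makes $T_s^\dagger(O_X)$ quasilocal around $X$ with exponential tails, so only perturbation terms $\cE_{a,r'}$ whose support $B_a(r')$ lies near $X$ contribute appreciably; summing the Lieb-Robinson weights against the tails of $e(r')$ produces the polynomial prefactor $c(|X|)$, while the leakage of the light cone toward $\partial\Lambda$ yields the finite-size term $|\Lambda|\,\nu_\eta^{-1}(d_X)$. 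Rapid mixing --- together with its own stability under small perturbations, also proved in Ref.~\cite{cubitt2015stability}, which keeps $S_{t-s}$ rapidly mixing for $\epsilon$ small --- makes $S_{t-s}^\dagger$ an approximate contraction onto the perturbed steady-state functional whenever $t-s$ is more than polylogarithmically large, so the $s$-integral converges and the resulting bound is uniform in $t$.

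The step I expect to be the main obstacle is the cross-formulation bookkeeping rather than any single hard estimate: confirming that the logarithmic mixing time of \cref{apx_thm:highT1} is available in the precise quantitative form of \emph{rapid mixing} required by Ref.~\cite{cubitt2015stability}; checking that the decay of $e(r')$ --- which in our application is the truncation-error profile governed by $\xi_1,\xi_2$ in \cref{eq:defxi1,eq:defxi2}, hence at worst stretched-exponential in $r'$ --- is fast enough for all the Lieb-Robinson sums to close with the stated constants; and verifying that the uniformity constants of the finite-volume truncations $\cL^\Lambda$ are independent of $\Lambda$. Once these points are settled, \cref{thrm:Lindblad stability} follows, and the later step of sending $t\to\infty$ (in the proof of \cref{apx_thrm:trunc_local}) transfers the conclusion from the dynamics to the steady states.
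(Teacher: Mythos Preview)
Your proposal is appropriate and in fact goes beyond what the paper does: the paper does not prove \cref{thrm:Lindblad stability} at all but simply states it as an adaptation of Theorem~7 in Ref.~\cite{cubitt2015stability}, with a few clarifying remarks about uniform families and the interpretation of the boundary term. Your hypothesis-checking and sketch of the Duhamel-plus-Lieb--Robinson-plus-rapid-mixing mechanism are a faithful summary of how that result is obtained in the cited reference, so there is nothing to correct.
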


Before proceeding, we make a few comments concerning the result above. Firstly, note that the result above is for a \textit{uniform family} of Lindbladians.
We refer to Definition 3 in Ref.~\cite{cubitt2015stability} for a precise definition, but informally one can think of a uniform family of Lindbladians as a function that generates a finite-size version $\mathcal{L}^\Lambda$ defined on $\Lambda$, starting from a Lindbladian $\mathcal{L}$ defined on $\mathbb{Z}^d$.
Such uniform families provide a rigorous framework for dealing with lattice size scaling. We will apply Theorem~\ref{thrm:Lindblad stability} to the uniform family $\mathcal{L}^{\beta,r}$, for which the finite-size version for any specific finite lattice $\Lambda$ is defined as $\mathcal{L}^{\beta,r,\Lambda}$.
Additionally, we note that Theorem 7 in Ref.~\cite{cubitt2015stability} is more general then the adaption stated above, in that it allows for non-Lindbladian perturbations, periodic boundary conditions, perturbations with boundary terms, and local Lindbladians with either exponentially decaying or quasilocal interactions (in which case the function $\nu_\eta$ will differ).
The proof in this case is completely analogous, but we focus here on the case of open boundary conditions to streamline the discussion.

However, none of these generalizations are necessary for our purposes, and so we have omitted details relevant to those cases from our presentation.

Additionally, as pointed out in Ref.~\cite{cubitt2015stability}:
\begin{enumerate}
\item For a fixed region $A$, the distance $d_X$, which is distance between $X$ and to the complement $\Lambda^c$ of the lattice $\Lambda$ and is thus the distance to the boundaries, increases with system size, which implies that the term $|\Lambda| \nu_\eta^{-1}(d_X)$ vanishes as the system becomes large. This term can therefore be interpreted as a finite-size correction.
\item The theorem allows $O_X$ to act on disconnected subsets of the lattice $\Lambda$, meaning that it also applies to correlation functions.
\end{enumerate}

\subsection{Proof of Theorem~\ref{thrm:trunc_local}}\label{ss:proof_local_observables}

We prove Theorem~\ref{thrm:trunc_local} by applying Theorem~\ref{thrm:Lindblad stability}. Specifically, as mentioned above, we consider as a starting point the uniform family of Lindbladians $\mathcal{L}^{\beta,r}$ for which $\mathcal{L}^{\beta,r,\Lambda}$ is defined via the RHS of Eq.~\eqref{eq:trunc_lindbl}. We also define $\mathcal{L}^{\beta,\Lambda}$ as the RHS of Eq.~\eqref{eq:LindbladLin}.  Now, let $\mathcal{E}_{a,\alpha}^{\beta,r}$ be as defined in \cref{appssbounding} below Eq.~\eqref{L_telescopic}. We then note that if we define
\begin{equation}
\mathcal{E}^{\beta,r}_{a} = \sum_{\alpha=1}^3\mathcal{E}_{a,\alpha}^{\beta,r}
\end{equation}
and 
\begin{equation}
\mathcal{E}^{\beta,r,\Lambda} = \sum_{a\in\Lambda}\sum_{r'=r}^\infty \mathcal{E}^{\beta,r'}_{a}
\end{equation}
then via Eqs.~\eqref{eq:LindbladLin},~\eqref{eq:trunc_lindbl} and~\eqref{eq:error:truncation} we have
\begin{equation}
\mathcal{L}^{\beta,\Lambda} = \mathcal{L}^{\beta,r,\Lambda} + \mathcal{E}^{\beta,r,\Lambda}.
\end{equation}
As $\mathcal{L}^{\beta,\Lambda}$ is itself a valid Lindbladian, the second condition in Theorem~\ref{thrm:Lindblad stability} is fulfilled.
To apply Theorem~\ref{thrm:Lindblad stability} for our purposes, we still need:
\begin{enumerate}
\item To prove that $\mathcal{L}^{\beta,r}$ is rapid mixing, and admits a unique fixed point.
\item To prove $\|\mathcal{E}^{\beta,r'}_a\|_{1\rightarrow 1} \leq \epsilon e(r')$ for rapidly decaying $e(r')$, for all $r'\geq r$.
\end{enumerate}
We note that rapid mixing of $\mathcal{L}^{\beta,r}$ is implied by the logarithmic mixing time of $\mathcal{L}^{\beta,r}$, as  proven in Theorem~\ref{thrm:mixingtime}~\cite{cubitt2015stability}, and that the existence of a unique fixed point was proven in Theorem~\ref{thrm:trunc}. For the final remaining condition, note that via the arguments of \cref{appss:proofonenorm} we have already proven that 
\begin{equation}
\|\mathcal{E}^{\beta,r'}_{a,\alpha}\|_{1\rightarrow 1} \leq 7\gamma(\beta J)^{r'}
\end{equation}
as long as $\beta J \leq 1/100$. Using this, we straightforwardly have for all $r'\geq r$ that 
\begin{align}
\|\mathcal{E}^{\beta,r'}_a\|_{1\rightarrow 1} &\leq \sum_{\alpha = 1}^3 \|\mathcal{E}^{\beta,r'}_{a,\alpha}\|_{1\rightarrow 1} \\
&\leq 21 \gamma(\beta J)^{r'} \\
&= \left[21 \gamma(\beta J)^{r}\right] \gamma(\beta J)^{(r'-r)}.
\end{align}
As such, we can apply Theorem~\ref{thrm:Lindblad stability} with $\epsilon = 21 \gamma(\beta J)^{r}$ and $e(r') = \gamma(\beta J)^{r'-r}$, which is rapidly decaying for all $r'\geq r$. More specifically, Defining $T_t = \exp{(t\mathcal{L}^{\beta,r,\Lambda})}$ and $S_t = \exp{(t\mathcal{L}^{\beta,\Lambda})}$ Theorem~\ref{thrm:Lindblad stability} yields
\begin{equation}
\norm{T_t^\dagger(O_A)-S_t^\dagger(O_A)}\leq c(A)\norm{O_A} \left(21 \gamma(\beta J)^{r}  +|\Lambda| \nu_\eta^{-1}(d_X)\right)
\end{equation}
which by taking $t\rightarrow \infty$  gives
\begin{equation}
\left| \tr(\rho_{\beta, r} O_A) - \tr(\rho_\beta O_A) \right| \leq c(A)\norm{O_A}\left(21 \gamma(\beta J)^{r}  +|\Lambda| \nu_\eta^{-1}(d_X)\right),
\end{equation}
which implies the claim of Theorem~\ref{thrm:trunc_local}.

\section{Proof of Lemma~\ref{lem:Quantumgate}}\label{app:ProofQuantumgate}

In this Appendix, we present a proof of Lemma~\ref{lem:Quantumgate}. We begin with a brief restatement of the relevant definitions and notations from the main text. 

We are interested in providing a compilation procedure for each elementary channel \(\exp(\tau \mathcal{L}_{a,\alpha}^{\beta,r})\), which represents the randomized evolution described in Eq.~\eqref{eq:channelrandomvariable}.
To do this, we define the Hermitian operator
\begin{align}\label{eq:V_operatorapp}
    O^{a,\alpha}_r := \begin{pmatrix}
    \sqrt{\tau}\,G_r^{a,\alpha} & L_r^{a,\alpha\dagger} \\
    L_r^{a,\alpha} & \sqrt{\tau}\,G_r^{a,\alpha}
    \end{pmatrix},
\end{align}
where \(L_r^{a,\alpha}\) and \(G_r^{a,\alpha}\) are the jump operators and associated coherent terms, respectively, as was previously defined in Eq.~\eqref{eq:trunc_LG}. 

We consider the quantum channel
\begin{equation}\label{eq:c_channelapp}
\mathcal C^{a,\alpha}_{r,\tau}(\rho) := \text{Tr}_{\text{anc}}\left(U^{a,\alpha}_r(\rho \otimes |0\rangle\langle 0|_{\text{anc}})U^{a,\alpha\dagger}_r\right),
\end{equation}
with \(\text{Tr}_{\text{anc}}(\cdot)\) denoting the partial trace over the ancilla qubit and the unitary \(U^{a,\alpha}_r\) describing the evolution over a time interval \(\sqrt{\tau}\) as
\begin{equation}\label{eq:U_time_evoapp}
U^{a,\alpha}_r = \exp(-iO^{a,\alpha}_r\sqrt{\tau}).
\end{equation}

The circuit for the implementation of the channel described by Eq.~\eqref{eq:c_channelapp} is given by
$$
\begin{array}{c}
\Qcircuit @C=1em @R=1.5em {
    & \lstick{|0\rangle} & \multigate{1}{\quad U^{a,\alpha}_r\quad} & \qw & \rstick{\text{Discard.}} \qw \\
    & \lstick{\rho}      & \ghost{\quad U^{a,\alpha}_r\quad}        & \qw & \qw
}
\end{array}
$$
The error of this implementation is given by Lemma~\ref{lem:Quantumgate}, which we restate here
\begin{lem}[Lemma~\ref{lem:Quantumgate} restated]\label{lem:Quantumgateappendix} The channel in Eq.~\eqref{eq:c_channelapp} satisfies
\be\label{eq:gadget_error2}
\left\|\mathcal C^{a,\alpha}_{r,\tau}-\exp(\tau \mathcal{L}_{a,\alpha}^{\beta,r})\right\|_\diamond = O(\tau^2).
\ee
\end{lem}

To prove this lemma, we first state the following auxiliary lemma, a standard result in functional analysis (e.g., see Chapter XIII, \S 6 of \cite{lang1993real}):

\begin{lem}\label{Taylor}
Let $X$ be a Banach space and let $\mathcal{B}(X)$ denote the space of bounded linear operators on $X$, equipped with an operator norm $ \| \cdot \|_{\mathcal{B}(X)}$. Suppose $ F: \mathbb{R} \to \mathcal{B}(X) $ is $ (K+1)$ -times continuously differentiable. Then, for any $ x \in \mathbb{R} $, the Taylor expansion of $ F(x)$ about  $x = 0$ satisfies:
\begin{align*}
F(x) = \sum_{m=0}^{K} \frac{x^m}{m!} F^{(m)}(0) + R_{K,F}(x),
\end{align*}
where $F^{(m)}$ denotes $m$th derivative and the remainder \( R_{K,F}(x) \) satisfies the operator norm bound:
\begin{align*}
\| R_{K,F}(x) \|_{\mathcal{B}(X)} \leq \frac{|x|^{K+1}}{(K+1)!} \sup_{s \in [0, |x|]} \| F^{(K+1)}(s) \|_{\mathcal{B}(X)}.
\end{align*}
\end{lem}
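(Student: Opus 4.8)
The plan is to prove the familiar Taylor expansion with \emph{integral} remainder — from which the stated bound follows by a one-line estimate — working directly in the Banach space $Y:=\mathcal B(X)$, so that it suffices to treat an arbitrary $(K+1)$-times continuously differentiable map $F\colon\mathbb R\to Y$. I will use only elementary vector-valued calculus: for a continuous $Y$-valued function $g$ the Riemann integral $\int_a^b g(t)\,\mathrm{d}t\in Y$ exists because $Y$ is complete and satisfies $\big\|\int_a^b g(t)\,\mathrm{d}t\big\|_Y\le\big|\int_a^b\|g(t)\|_Y\,\mathrm{d}t\big|$; moreover the fundamental theorem of calculus $\int_0^x H'(t)\,\mathrm{d}t=H(x)-H(0)$ and integration by parts hold for $C^1$ maps into $Y$.

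\emph{Step 1: integral remainder.} I would show by induction on $K$ that
\begin{align*}
F(x)=\sum_{m=0}^{K}\frac{x^m}{m!}\,F^{(m)}(0)+\frac{1}{K!}\int_0^x (x-t)^{K}\,F^{(K+1)}(t)\,\mathrm{d}t .
\end{align*}
The base case $K=0$ is the fundamental theorem of calculus. For the inductive step, integrate the order-$(K-1)$ remainder by parts, using $v(t)=-\tfrac{1}{K}(x-t)^{K}$ as an antiderivative of $(x-t)^{K-1}$:
\begin{align*}
\frac{1}{(K-1)!}\int_0^x (x-t)^{K-1}F^{(K)}(t)\,\mathrm{d}t
=\frac{x^{K}}{K!}\,F^{(K)}(0)+\frac{1}{K!}\int_0^x (x-t)^{K}F^{(K+1)}(t)\,\mathrm{d}t ,
\end{align*}
and substitute into the inductive hypothesis $F(x)=\sum_{m=0}^{K-1}\tfrac{x^m}{m!}F^{(m)}(0)+\tfrac{1}{(K-1)!}\int_0^x(x-t)^{K-1}F^{(K)}(t)\,\mathrm{d}t$.

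\emph{Step 2: the bound.} Writing $R_{K,F}(x)$ for the integral term, the vector-valued norm estimate gives
\begin{align*}
\|R_{K,F}(x)\|_Y\le\frac{1}{K!}\Big(\sup_{s\in[0,|x|]}\|F^{(K+1)}(s)\|_Y\Big)\,\Big|\int_0^x|x-t|^{K}\,\mathrm{d}t\Big|
=\frac{|x|^{K+1}}{(K+1)!}\,\sup_{s\in[0,|x|]}\|F^{(K+1)}(s)\|_Y ,
\end{align*}
since the scalar integral of $|x-t|^{K}$ over the segment between $0$ and $x$ has absolute value $|x|^{K+1}/(K+1)$; this is exactly the asserted operator-norm bound once $Y$ is specialized to $\mathcal B(X)$.

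\emph{On the difficulty.} There is no genuinely hard step: this is the textbook Taylor theorem with integral remainder, and the only point requiring mild attention is that each ingredient of single-variable calculus invoked above — existence of $\int g$, the fundamental theorem of calculus, integration by parts, and $\|\int g\|_Y\le\int\|g\|_Y$ — continues to hold for functions valued in a Banach space, which is true precisely because $Y$ is complete. If one wished to avoid vector-valued integration altogether, an equivalent route is to post-compose $F$ with an arbitrary $\phi\in Y^{*}$, apply the scalar Taylor theorem with integral (or Lagrange) remainder to $\phi\circ F$, bound $|\phi(R_{K,F}(x))|$ by $\tfrac{|x|^{K+1}}{(K+1)!}\|\phi\|\sup_{s\in[0,|x|]}\|F^{(K+1)}(s)\|_Y$, and take the supremum over $\|\phi\|\le1$ via Hahn–Banach; I would relegate this to a remark and present the direct argument above.
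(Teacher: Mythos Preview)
Your proof is correct and follows the standard textbook route (Taylor's formula with integral remainder in a Banach space, followed by the obvious norm estimate). The paper does not actually prove this lemma: it simply states it as ``a standard result in functional analysis'' and cites Lang, \emph{Real and Functional Analysis}, Chapter~XIII, \S6. So there is nothing to compare against beyond noting that your argument is exactly the kind of proof one would find in that reference.
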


With this lemma, we prove Lemma~\ref{lem:Quantumgateappendix}.
\begin{proof}[Proof of Lemma~\ref{lem:Quantumgateappendix}]
A proof excluding a coherent term was given in Ref.~\cite{chen2024randomizedmethodsimulatinglindblad}. The present version extends that result to incorporate the modification introduced by an additional coherent term.

Let use the following notation,
\be
T_{m,F} := \frac{1}{m!} F^{(m)}(0).
\ee
Then, consider the Taylor expansion of both operators in the lemma statement as
\begin{align}
\begin{split}
    &\mathcal C^{a,\alpha}_{r,\tau}= \mathcal I + \tau T_{2, C^{a,\alpha}_{r,\tau}}+R_{3,C^{a,\alpha}_{r,\tau}}(\sqrt{\tau}),\\
    &\exp(\tau \mathcal{L}_{a,\alpha}^{\beta,r})= \mathcal I + \tau T_{1, \exp(\tau \mathcal{L}_{a,\alpha}^{\beta,r})}+R_{1,\exp(\tau \mathcal{L}_{a,\alpha}^{\beta,r})}(\tau).
\end{split}
\end{align}
where $\mathcal I$ is the identity superoperator.
Note that terms odd in \(\sqrt{\tau}\) in the first expansion are canceled due to tracing out the ancilla qubit.  
Furthermore, the lowest-order terms in this expansion coincide,
\begin{align}
    T_{1, \exp(\tau \mathcal{L}_{a,\alpha}^{\beta,r})}=T_{2, C^{a,\alpha}_{r,\tau}} = \mathcal{L}_{a,\alpha}^{\beta,r}.
\end{align}
Thus, using the triangle inequality, we obtain
\begin{align}
   \| \exp(\tau \mathcal{L}_{a,\alpha}^{\beta,r}) - C^{a,\alpha}_{r,\tau}\|_{\diamond} &= \|R_{1,\exp(\tau \mathcal{L}_{a,\alpha}^{\beta,r})}(\tau)-R_{3,C^{a,\alpha}_{r,\tau}}(\sqrt{\tau})\|_{\diamond}\nonumber \\
   &\leq \|R_{1,\exp(\tau \mathcal{L}_{a,\alpha}^{\beta,r})}(\tau)\|_{\diamond}+\|R_{3,C^{a,\alpha}_{r,\tau}}(\sqrt{\tau})\|_{\diamond}\leq C \tau^2,
\end{align}
where we used Lemma~\ref{Taylor} on both terms separately and $C$ is a constant. This completes the proof of Lemma~\ref{lem:Quantumgateappendix}.
\end{proof}

\section{Additional Models and Numerics Details}

\subsection{Additional Models}\label{sec:Further models}

We consider two additional spin-chain models. The objective is to show that our approach remains effective for integrable models and models with symmetries, indicating that it does not rely on full thermalization.

\subsubsection{1d transverse-field Ising model}
\begin{figure*}[t!]
\centering
\includegraphics[width=1.0\textwidth]{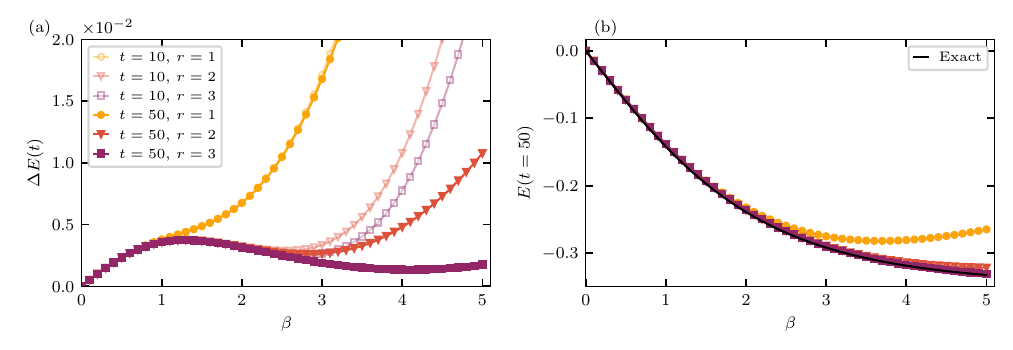}
\caption{\textbf{Energy convergence across temperatures in the transverse-field Ising model.} Energy density convergence for the transverse-field Ising Hamiltonian in Eq.~\eqref{eq:TFI} on $n=12$ sites as a function of inverse temperature $\beta$, using a Trotter step size $\tau = 0.1$. (a) Relative energy error $\Delta E(t)$ after $t = 10$ and $t = 50$: errors decrease monotonically with increasing $\beta$, while the growing gap at large $\beta$ signals mixing times $t_{\mathrm{mix}} > 50$; variations across truncation radii $r$ reflect residual truncation inaccuracies. (b) Absolute energy compared to exact diagonalization: a truncation radius of $r = 3$ achieves near–ground-state fidelity up to high $\beta$, with slight deviations at larger $\beta$ due to finite‐radius effects.
}
\label{fig:PlotTFI}
\end{figure*}

In the following, we present results for a simplified version of the model in Eq.~\eqref{eq:MFI}, namely the transverse-field Ising model with periodic boundary conditions,
\begin{align}\label{eq:TFI}
    H = \sum_{\langle i,j\rangle} S^z_i S^z_{j} + g \sum_{i=1}^n S^x_i.
\end{align}
Throughout, we set $g = 0.6$, placing the system within the ordered phase of the zero-temperature quantum phase transition at $g = 1.0$. Unlike the mixed-field Ising model, the transverse-field Ising model is integrable and can be mapped to a one-dimensional system of spinless free fermions via a Jordan–Wigner transformation~\cite{suzuki2012quantum}. However, the thermalizing Lindbladian dynamics we consider enables dynamics among the different sectors and thus breaks the integrability.

\Cref{fig:PlotTFI} shows the convergence of the energy toward the thermal value. As with the mixed-field Ising model, we observe rapid convergence to the Gibbs-state energy density using a small truncation radius, $r \leq 3$. Similarly, Fig.~\ref{fig:TFIEbeta} demonstrates that correlation functions are accurately reproduced with the same truncation. These results confirm that the algorithm remains effective even in integrable settings.

\begin{figure}[t!]
\includegraphics[width=0.5\textwidth]{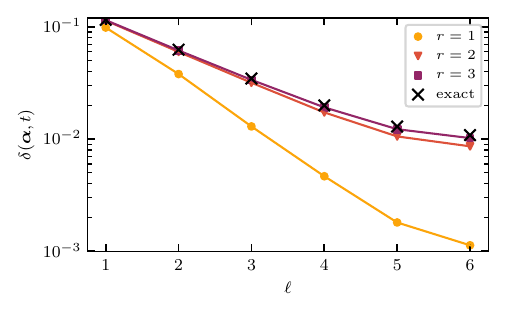}
\caption{ \textbf{Spatial profile of the two-spin correlators.} The correlator $\delta(\bm{a},t)$, for the transverse-field Ising model defined in Eq.~\eqref{eq:TFI}. The Trotter step size is $\Delta t=0.1$, $\beta=3.0$. Results of the correlator are shown after $t=10$ and truncation radius $r=1$~(yellow), $r=2$~(orange), $r=3$~(red) and the exact results~(black crosses).  For a truncation radius $r\geq2$, the proposed protocol can reproduce the correlator at large distances $\ell$.}
\label{fig:TFIEbeta}
\end{figure}
\begin{figure*}[t!]
\centering
\includegraphics[width=1.0\textwidth]{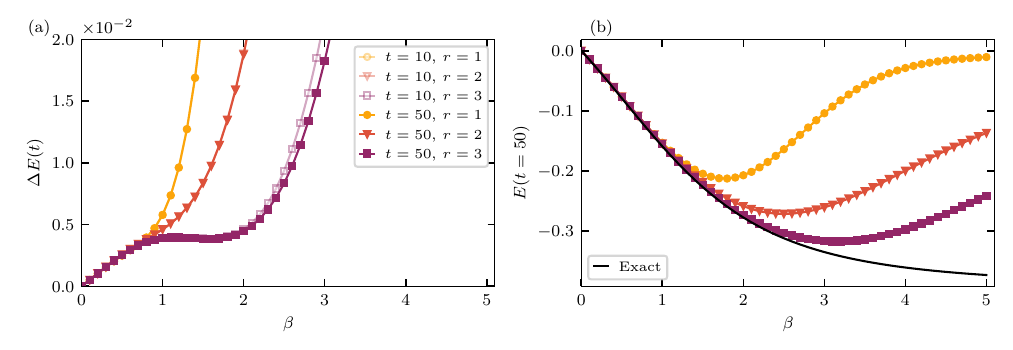}\caption{\textbf{Energy convergence across temperatures in the XXZ model.} Energy density convergence for the XXZ Hamiltonian in Eq.~\eqref{eq:XXZ} on $n=12$ sites as a function of inverse temperature $\beta$, using a Trotter step size $\tau= 0.1$. (a) Relative energy error $\Delta E(t)$ after $t = 10$ and $t = 50$ steps: the error decreases monotonically with increasing $\beta$, while the widening gap between the two curves highlights increasingly slow mixing at larger $\beta$. Even a truncation radius $r = 3$ fails to achieve rapid convergence at long times. (b) Absolute energy compared with exact diagonalization: deviations at high $\beta$ (low temperatures) reflect residual truncation errors and extended mixing times.
}
\label{fig:PlotXXZ}
\end{figure*}

\subsubsection{XXZ model}

Finally, we present results for the antiferromagnetic XXZ  model
\begin{align}\label{eq:XXZ}
    H= \sum_{\langle i,j\rangle} (S^x_iS^x_{i+1}+S^y_iS^y_{j}) + \Delta \sum_{\langle i,j\rangle} S^z_iS^z_{j}
\end{align}
with $\Delta=0.6$, which puts the model far from the isotropic point $\Delta = 1$ and in the middle of the gapless phase that exhibits quasi-long-range correlations at zero temperature. The results are shown in Fig.~\ref{fig:PlotXXZ}.
While the results indicate that our algorithm still works even in this case for high temperatures,
the results for large inverse temperatures $\beta>1$ show more significant deviations from the exact thermal state than in comparison to the previous models.

\subsection{Renormalization of the Lindbladian}\label{subsec:Normalization}
In \cref{subsec:Envelope} and the corresponding Fig.~\ref{fig:Filterfunction}, we compare the effect of different envelope functions $q(\nu)$ in Eq.~\eqref{eq:envelope}. For better comparison, we fix the jump operators for different envelope functions to have the same norm as the Gaussian case in Eq.~\eqref{eq:Gauss}.

\begin{figure}[t!]
\includegraphics[width=0.5\textwidth]{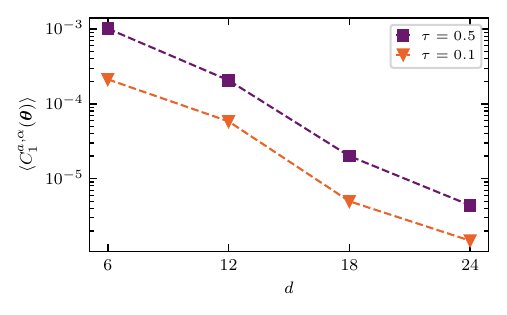}
\caption{\textbf{Variational compilation accuracy as a function of template depth.} Final loss values obtained by optimizing circuit templates of depth $d$ for different Trotter step sizes $\tau$, for fixed truncation radius $r=1$. Each curve corresponds to a fixed $\tau$, showing a systematic, monotonic decrease in compilation error as the template depth increases.}
\label{fig:Compilationgradient}
\end{figure}

Let $L^{\beta,r}_{a,\alpha}|_{q(\nu)}$ and $G^{\beta,r}_{a,\alpha}|_{q(\nu)}$ denote the jump operators and coherent terms, respectively, introduced in Eq.~\eqref{eq:trunc_LG}  evaluated using the envelope function $q(\nu)$. Then, we compute the mean of the norms for all operators:
\begin{align}
\varphi_{\beta,r}[q(\nu)]:=\frac{1}{3}\sum_{\alpha=1}^3 \left\|L^{\beta,r}_{a,\alpha}\bigl|_{q(\nu)}\right\|_2.
\end{align}
Note that the norm does not depend on the qubit index $a$ due to translational invariance. The normalized jump operators are then obtained
by
\begin{align}
L^{\beta,r}_{a,\alpha}\Bigl|_{q(\nu),\text{norm.}}\frac{\varphi_{\beta,r}\left[\exp\left(-\frac{(\beta\nu)^2}{8}\right)\right]}{\varphi_{\beta,r}[q(\nu)]}L^{\beta,r}_{a,\alpha}\Bigl|_{q(\nu)}.
\end{align}
In order to maintain detailed balance, the coherent terms should be normalized as
\begin{align}
G^{\beta,r}_{a,\alpha}|_{q(\nu),\text{norm.}}=\left(\frac{\varphi_{\beta,r}[\exp\left(-\frac{(\beta\nu)^2}{8}\right)]}{\varphi_{\beta,r}[q(\nu)]}\right)^2G^{\beta,r}_{a,\alpha}|_{q(\nu)}.
\end{align}
The performance of the renormalized jump operators is shown in Fig.~\ref{fig:Filterfunction}.

\subsection{Optimization details}
\label{optimiz_scetion}
In this subsection, we provide details of the optimization procedure of the compilation using the Adam optimizer~\cite{adam_kingma_2014}.

As the hyperparameters, we choose a learning rate $l_r=10^{-3}$, first-moment decay rate $\beta_1=0.99$, second-moment decay rate $\beta_2=0.99$, and regularization $\epsilon=10^{-3}$. For a given template, we initialize 50 different random configurations, iterate 8000 steps and choose the best configuration after the optimization process.
\Cref{fig:Compilationgradient} shows the optimal outcomes, averaged over the unitaries $U^{a,1}_r$, $U^{a,2}_r$ and $U^{a,3}_r$.

\subsection{Choice of loss function in variational compilation} \label{app:loss_function_details}
\begin{figure}[t!]
	\includegraphics[width=0.95\textwidth]{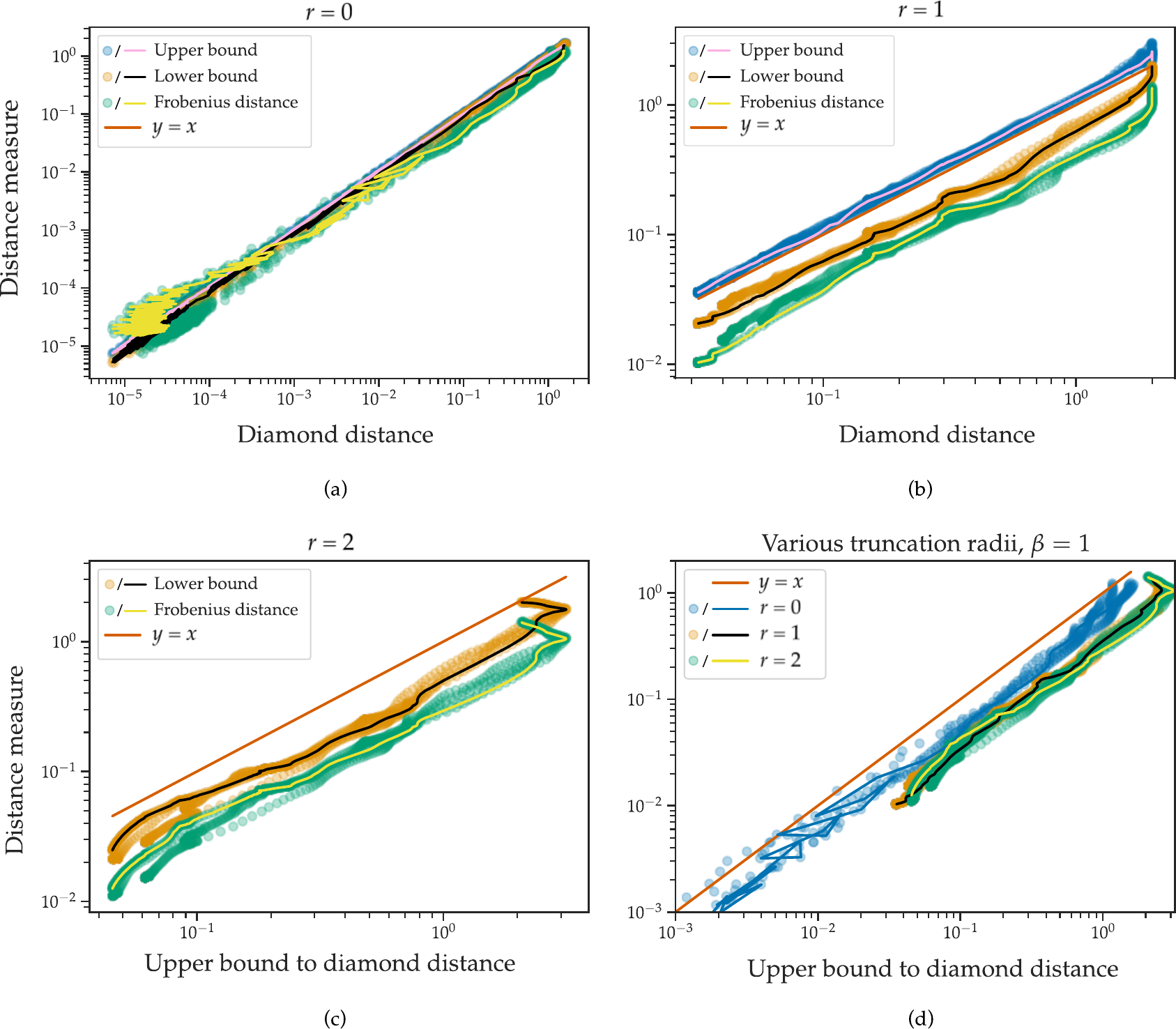}
    \caption{\textcolor{black}{\textbf{Correlation between diamond-distance proxies.} Distance measures between the target quantum channel and the trained circuit during variational compilation, shown for five independent trajectories illustrating the evolution of the distance under optimization from different random circuit initializations. Dots represent raw data points, while solid lines depict a representative example trajectory. Results are shown for different gadget constructions: (a) a one-qubit gadget ($r=0$), (b) a \textit{T}-gadget ($r=1$), and (c) a long-\textit{T}-gadget ($r=2$). (d) Same data as (a)--(c), focusing on the modified Frobenius distance and the upper bound to the diamond distance for various radii $r$ of the channel support.}}
	\label{fig:distance_correlations}
\end{figure}
In this work we consider the diamond distance between channels and two proxies for the diamond distance, namely its upper and lower bounds. This is because while computing the diamond distance of two channels from their specified Choi-Jamio\l{}kowski matrix is formally efficient via semidefinite programs \cite{watrous2009semidefiniteprogramscompletelybounded}, its upper and lower bounds \cite{nechita2018almostallquantumchannels} are even more efficiently computable and particularly amenable for use in a framework employing automatic differentiation, such as JAX.
Specifically, suppose a channel $\mathcal{C} : D_{X_1} \to D_{X_2}$ mapping input states with dimension $d_1$ to states with dimension $d_2$ has a Choi-Jamio\l{}kowski matrix given by
\begin{align}
	J(\mathcal C) = \frac{1}{d_1} \sum_{i,j=1}^{d_1} \ketbra{i}{j} \otimes \mathcal{C}\left( \ketbra{i}{j} \right). \label{eq:choijmatrix}
\end{align}
The upper and lower bounds on the diamond distance $\norm{\mathcal{C}_1-\mathcal{C}_2}_\diamond$ between two channels $\mathcal{C}_1$ and $\mathcal{C}_2$ can be written in terms of the difference in their Choi-Jamio\l{}kowski matrix\footnoteI $J(\mathcal{C}_1) - J(\mathcal{C}_2) =: J(\Delta \mathcal{C})$ \cite{nechita2018almostallquantumchannels}.
Specifically, the lower bound is 
\begin{align}
	\norm{\mathcal{C}_1-\mathcal{C}_2}_\diamond \geq \norm{J(\Delta \mathcal{C})}_1, 
\end{align}
while the upper bound is
\begin{align}
	\norm{\mathcal{C}_1-\mathcal{C}_2}_\diamond \leq 
	2^{d_1} \norm{\Tr_2{\abs{J(\Delta \mathcal{C})}}}_{\infty},
\end{align}
where $\Tr_2$ denotes tracing out the second register in \cref{eq:choijmatrix}, namely the register to which the channel is applied, and $\abs{A}$ denotes $\sqrt{A^\dag A}$.

In \cref{fig:distance_correlations}, we compare these distance measures with each other for gadgets involving $2r+1$ system qubits, with the central qubit coupled to one additional ancilla qubit.
These plots show the progress of variational compilation with random initialization for 5 different initializations each.
The ``Frobenius distance'' refers to the square root of the quantity in \cref{eq:lossfunction}.
We observe that the various distance proxies are well-correlated with each other and with the diamond distance across a large range of values.
In \cref{fig:distance_correlations}(c), we do not compute the diamond distance and instead use its upper bound as a proxy, having built evidence in \cref{fig:distance_correlations}(a)--(b) that it correlates very well with the diamond distance.

In Fig.~\ref{fig:distance_correlations}(d), we plot the (modified) Frobenius distance and the upper bound to the diamond distance from these same runs. We illustrate data from different truncation radii on the same plot to emphasize that there is no dimension-dependent loss in going between the Frobenius distance and the upper bound to the diamond distance.
\end{widetext}
\bibliography{references}
\end{document}